\documentclass[12pt]{article}

\def\anon{0}
\newif\ifanon \ifnum\anon=1 \anontrue\else\anonfalse\fi

\usepackage[margin=1in]{geometry}
\usepackage{amsmath,amssymb,amsthm}
\usepackage{natbib}
\usepackage{booktabs}
\usepackage{graphicx}
\usepackage{float}
\usepackage{xcolor}
\usepackage{setspace}
\usepackage{enumitem}
\usepackage[T1]{fontenc}
\usepackage[utf8]{inputenc}
\usepackage{lmodern}
\usepackage{microtype}
\usepackage{multirow}
\usepackage{placeins}
\usepackage{caption}
\usepackage{pdflscape}
\usepackage{hyperref}
\usepackage{titletoc}
\usepackage{pdflscape}
\usepackage{rotating}
\usepackage{afterpage}
\titlecontents{section}[1.5em]
  {\addvspace{0.5em}}
  {\contentslabel{1.5em}}{\hspace*{-1.5em}}
  {\titlerule*[0.5pc]{.}\contentspage}
\titlecontents{subsection}[3.8em]
  {}
  {\contentslabel{2.3em}}{\hspace*{-2.3em}}
  {\titlerule*[0.5pc]{.}\contentspage}
\titlecontents{paragraph}[6em]
  {\small}{}{}
  {\titlerule*[0.5pc]{.}\contentspage}

\newcommand{\E}{\mathbb{E}}
\newcommand{\Var}{\mathrm{Var}}
\newcommand{\Cov}{\mathrm{Cov}}
\renewcommand{\Pr}{\mathrm{Pr}}
\newcommand{\bfbeta}{\boldsymbol{\beta}}
\newcommand{\bfX}{\mathbf{X}}
\newcommand{\bfZ}{\mathbf{Z}}
\newcommand{\bfDelta}{\boldsymbol{\Delta}}
\newcommand{\bfLambda}{\boldsymbol{\Lambda}}

\newcommand{\bfW}{\mathbf{W}}
\newcommand{\bfb}{\mathbf{b}}
\newcommand{\bfh}{\mathbf{h}}
\newtheorem{proposition}{Proposition}[section]

\newcommand{\bfeta}{\boldsymbol{\eta}}
\newcommand{\bfSigma}{\boldsymbol{\Sigma}}
\newcommand{\bfS}{\mathbf{S}}
\newcommand{\bfJ}{\mathbf{J}}
\newcommand{\bfc}{\mathbf{c}}
\newcommand{\IF}{\mathrm{IF}}

\makeatletter
\g@addto@macro\normalsize{
  \setlength{\abovedisplayskip}{6pt plus 2pt minus 2pt}
  \setlength{\belowdisplayskip}{6pt plus 2pt minus 2pt}
  \setlength{\abovedisplayshortskip}{3pt plus 1pt minus 1pt}
  \setlength{\belowdisplayshortskip}{4pt plus 1pt minus 2pt}
}
\makeatother

\hypersetup{
  colorlinks=true,
  linkcolor=blue!60!black,
  citecolor=blue!60!black,
  urlcolor=blue!60!black
}

\title{\Large \textbf{Learning Preferences from Conjoint Data:\\ A Hybrid Structural Deep Learning Approach}\ifanon\else\thanks{We thank P. M. Aronow, Cameron Ballard-Rosa, Kosuke Imai, Will Marble, Mingcong Pan, Sparsha Saha, Milan Svolik, and Ana Weeks for helpful comments.}\fi}
\author{
  \ifanon
  Author names withheld for review
  \else
  Avidit Acharya\footnote{Department of Political Science, Stanford University, Stanford, CA. Emails: \url{avidit@stanford.edu}, \url{jhain@stanford.edu}, and \url{yiqingxu@stanford.edu}.}\\Stanford
  \and
  Jens Hainmueller$^\dagger$\\Stanford
  \and
  Yiqing Xu$^\dagger$\\Stanford
  \fi
}
\date{\today}

\begin{document}
\maketitle

\begin{abstract}
\begin{singlespace}
\noindent Conjoint experiments randomize multidimensional profiles, yet political science applications typically report only nonparametric averages that do not recover counterfactual choices or individual tradeoffs. We develop a hybrid structural approach for recovering individual preferences from conjoint data. The estimator combines a flexible machine-learning mean preference function, via a deep neural network in our applications, with respondent-level empirical-Bayes updating in a logistic random utility model, allowing preferences to vary with observed characteristics while learning residual heterogeneity from repeated choices. Double/debiased machine learning delivers valid inference for population-average preference parameters with any sufficiently accurate first-stage learner. Across three applications, the method reveals heterogeneity reduced-form averages obscure: opposition to undemocratic behavior is broad but uneven in intensity, progressive tax preferences are widespread across partisan subgroups, and partisan polarization offsets the average gender effect in candidate choice. The framework opens the door to core theoretical questions in political science by recovering substantively interpretable structural parameters.

\bigskip
\noindent \textbf{Keywords:} conjoint analysis, preference heterogeneity, deep neural networks, double machine learning, random utility model, structural estimation
\end{singlespace}
\end{abstract}

\thispagestyle{empty}
\newpage
\setcounter{page}{1}
\doublespacing

\section{Introduction}\label{sec:intro}

Conjoint experiments help researchers understand how individuals trade off different dimensions of preference: by presenting respondents with multidimensional profiles and asking them to choose, they directly elicit the multi-attribute tradeoffs inherent in real-world decisions.  This insight was recognized early: \citet{greenhalgh1981conjoint} used conjoint analysis to study negotiator preferences over contract terms, and \citet{shamir1995competing} explicitly framed conjoint designs as a way to measure value tradeoffs and interactions in mass opinion.  Since \citet{hainmueller2014causal} introduced conjoint experiments to political science, applications have proliferated across a number of areas from immigration policy \citep{hainmueller2015hidden, bansak2016europeans} to candidate evaluation \citep{saha2022ambitious}, democratic accountability \citep{graham2020democracy}, and tax policy preferences \citep{ballard2017structure}.

The ability of conjoints to elicit multi-attribute tradeoffs connects tightly to theoretical quantities that arise from specifying an underlying utility function.  Under such a model, conjoint data can recover not only marginal rates of substitution (MRS), willingness to pay (WTP), and the distribution of preferences across a population, but also vote shares and win probabilities in counterfactual electoral contests.  These quantities are central to theories of voting, representation, and political economy, and conjoint experiments are ideally suited to estimate them.  But because estimating them requires imposing parametric structural assumptions on preferences, and because the recent turn toward design-based causal inference in political methodology has been skeptical of such structural assumptions, the development of conjoint analysis in political science focused primarily on nonparametric causal estimands.  The average marginal component effect (AMCE) introduced by \citet{hainmueller2014causal} has dominated applied work in the field.  A small but growing set of recent papers broadens this tradition: Bayesian Additive Regression Trees model effect heterogeneity \citep{robinson2024detect} and design-based tests probe for interactions \citep{ham2024machine}, while \citet{goplerud2025estimating} recover respondent-level heterogeneity with treatment interactions and covariate-modeled group membership and \citet{zhirkov2022individual} estimates individual marginal component effects.  These approaches enrich estimation and testing, yet they do not target the structural tradeoff, distributional, and electoral quantities that motivate our approach.

Skepticism of structural assumptions is well-motivated---such models can be wrong and misspecification costly.  We address this concern by introducing a method that combines the conjoint design with a flexible machine-learning first stage---a deep neural network (DNN) \citep{farrell2021deep, farrell2025deep} in our main analyses, with elastic net and generalized random forests (GRF) as comparable alternatives---together with respondent-level empirical-Bayes updating and double/debiased machine learning (DML) \citep{chernozhukov2018double}.  We adopt the standard random utility framework: respondent $i$'s utility for profile $j$ in task $t$ is $U_{ijt} = \bfX_{ijt}^{\top}\bfbeta_i + \varepsilon_{ijt}$, where $\bfX_{ijt} \in \mathbb{R}^p$ is the profile's vector of attribute levels and $\bfbeta_i \in \mathbb{R}^p$ is respondent $i$'s vector of marginal utilities---one per attribute level---governing how much that respondent rewards or penalizes each attribute when choosing between profiles.  Respondent $i$ chooses the higher-utility profile, so the binary choice probability is a logit in the attribute contrast $\bfDelta\bfX_{it} = \bfX_{i,1,t} - \bfX_{i,2,t}$ with utility difference $\bfDelta\bfX_{it}^{\top}\bfbeta_i$.

In departure from the rigid parametric models that have dominated structural conjoint analysis, we write the latent preference vector as $\bfbeta_i = f(\bfZ_i) + \boldsymbol{\eta}_i$, where a flexible first-stage learner---a DNN in our applications---estimates the systematic component $f(\bfZ_i)$ from the full sample and each respondent's own repeated conjoint choices update the residual component $\boldsymbol{\eta}_i$.  This hybrid estimator preserves the full logit structure of the random utility model: the predicted choice probability for any contest is determined jointly by the randomly assigned profile contrast and the respondent's estimated preference vector, and every tradeoff, distributional, and electoral quantity described above becomes computable from $\hat{\bfbeta}_i$.  This distinguishes our approach from standard mixed-logit and hierarchical Bayes models \citep{lenk1996hierarchical, rossi2005bayesian, train2009discrete}, which typically impose a linear mean structure, $\bfbeta_i \sim \mathcal{N}(\boldsymbol{B}\bfZ_i, \boldsymbol{\Sigma})$, or drop $\bfZ_i$ entirely when it is rich and the model becomes difficult to fit.  Relative to a pure DNN that uses only $f(\bfZ_i)$, the respondent-level updating recovers residual heterogeneity within covariate strata.  For inference on population-average preference parameters, we use cross-fitting and the influence-function correction of DML, which provides valid confidence intervals (CIs) despite the flexibility of the first-stage estimator.

The power of our approach comes from combining three sources of leverage: (a) the flexible machine-learning first stage, (b) the identification power of the conjoint's randomized profile contrasts, and (c) the within-respondent information in repeated tasks that updates residual preferences beyond what observables explain. We view this approach as \textit{lean structure}: it preserves the advantages of a structural model while relaxing its usual parametric restrictions.

The payoff is substantial, and it speaks directly to an active debate about what conjoint experiments can and cannot deliver. \citet{bansak2023amce} review what conjoint experiments can---and cannot---aggregate, showing that different quantities of interest carry different substantive meanings; a central implication is that many theoretically natural quantities---in particular, the fraction of voters who prefer an attribute level---cannot be point-identified from the AMCE and related design-based averages, which recover only marginal population means and so leave individual-level distributional features of preferences unidentified without additional structure \citep{abramson2022voter}.  Our paper develops a flexible structural framework that recovers individual-level preferences directly. It therefore delivers the quantities of interest that \citeauthor{bansak2023amce} and others identify as substantively important but beyond the reach of existing reduced-form approaches. At the same time, it recovers standard reduced-form quantities, such as the AMCE, as special cases under correct specification---a condition the misspecification diagnostic of Section~\ref{sec:quantities} is designed to probe. Among the structural quantities it delivers are vote shares and win probabilities in counterfactual electoral contests over arbitrary policy positionings---the structural-model analog of the questions that have driven a half-century of work on candidate moderation and spatial voting \citep{black1948rationale,downs1957economic}.

We apply our method to three prominent conjoint studies, showing how the structural approach reveals deep preference heterogeneity.  In the \citet{graham2020democracy} democracy conjoint, the structural model reveals that almost all voters \textit{oppose} undemocratic behavior, but many weight party and policy more heavily---the disagreement is about intensity, not direction.  Translated to electoral counterfactuals, a co-partisan candidate retains majority support among the respondents on most of the seven tested undemocratic actions; prosecuting journalists is the clear exception---and once the candidate has endorsed prosecuting journalists, they hold both liberal and conservative majorities only near the center of the social-policy spectrum.  In the \citet{ballard2017structure} tax-plan experiment, we recover an individual-level preferred rate schedule for each respondent and find that progressive revealed preferences are the clear majority pattern in every partisan, income, and ideological subgroup, while Democrats and Republicans differ primarily in \textit{which} brackets drive their choices.  The same application provides individual-level external validation. The hybrid-recovered progressivity slope correlates with each respondent's self-reported ideal tax rates, with a correlation coefficient of $r = 0.43$. This measure was collected independently of the conjoint task and was never observed by the model during training. It therefore provides a validation check that reduced-form estimators cannot perform. The \citet{saha2022ambitious} candidate-choice experiment shows what the model can describe even in a deliberately sparse design of three tasks per respondent: a near-zero average gender effect coexists with offsetting partisan camps, with a majority of Democrats estimated on the female side and a majority of Republicans on the male side, so that gender registers as a leading source of individual-level variance even though its population mean is null. Because the democracy and tax applications use richer task counts, they carry the greatest evidentiary weight. Taken together, these results show that the structural approach goes meaningfully beyond AMCE-style analysis, recovering individual-level preference distributions, electoral counterfactuals, and externally-validated estimates that nonparametric estimands leave on the table---opening a new agenda for studying candidate positioning, electoral competition, and the structure of mass ideology.

The marketing literature has long recognized that conjoint studies can identify the structural parameters of an underlying utility model. The conjoint measurement foundations laid by \citet{luce1964simultaneous} and \citet{green1971conjoint} were linked to random utility models by \citet{mcfadden1974conditional}, and the connection between conjoint experiments and structural discrete choice has been a staple of marketing research \citep{green1978conjoint, green1990conjoint, louviere2000stated, train2009discrete}.\footnote{The introduction of conjoint analysis to political science by \citet{hainmueller2014causal} situates it within the older traditions of conjoint measurement \citep{luce1964simultaneous} and marketing \citep{green1971conjoint} and relates their nonparametric estimator to model-based discrete choice such as conditional logit \citep{mcfadden1974conditional}, though they remain agnostic about the underlying behavioral model; see \citet{bansak2021conjoint} for a comprehensive handbook treatment.  Our contribution is to develop this structural approach and the quantities it enables.}  Random utility models have also been used extensively in political science---notably the probabilistic spatial-utility frameworks of \citet{poole1985spatial} and \citet{palfrey1987relationship}, the heterogeneous voter-choice models of \citet{rivers1988heterogeneity} and \citet{alvarez1998multiparty}, and the Bayesian ideal-point models of \citet{martin2002dynamic} and \citet{clinton2004statistical}---and the formal literature on spatial voting and preference aggregation \citep{enelow1984spatial, hinich1997analytical, austensmith1999positive} provides the theoretical foundations for the structural quantities we recover.  But these traditions have not been linked to conjoint experiments in a way that combines flexible systematic heterogeneity with respondent-level updating.  Our paper makes this connection and, by combining flexible DNN estimation of the mean structure with a structural updating step for residual heterogeneity, addresses the main worry about parametric identification: that the specified model may not capture the underlying data generating process.

\section{Theoretical Framework}\label{sec:framework}

This section develops the structural model that underlies our approach: the conjoint setup, the random utility model connecting attributes to choices, the structural quantities of interest, and their identification by the conjoint randomization design.

\subsection{Setup}\label{sec:setup}

Consider a conjoint experiment in which each of $N$ respondents evaluates $T$ choice tasks.  In each task $t$, respondent~$i$ views two profiles (alternatives) and selects one.\footnote{We focus on the binary forced-choice case because it is the most common conjoint design in political science, but the framework extends naturally to multinomial choice ($J \geq 3$ alternatives), rankings, and rating-scale outcomes; see Section~\ref{sec:limitations}.}  Each profile $j \in \{1, 2\}$ is characterized by a vector of randomly assigned attribute levels, encoded as dummy variables relative to a reference category: $\bfX_{ijt} \in \mathbb{R}^p$, where $p$ is the total number of non-reference attribute levels across all attributes.  To fix ideas, consider the candidate-choice experiment of \citet{saha2022ambitious}, where respondents evaluate hypothetical candidates described by five attributes (policy agenda, talent, number of children, gender, and progressive ambition), yielding $p = 13$ dummy-coded attribute levels.

We define the \textit{profile-pair difference}
\begin{equation}\label{eq:deltaX}
  \bfDelta\bfX_{it} \;=\; \bfX_{i1t} - \bfX_{i2t} \;\in\; \mathbb{R}^p,
\end{equation}
which captures the contrast between the two profiles on all attribute dimensions simultaneously.  Let $Y_{it} \in \{0,1\}$ denote whether respondent~$i$ chose profile~1 in task $t$.  The total number of choice observations is $NT$.

Respondent characteristics are collected in a vector $\bfZ_i \in \mathbb{R}^{p_Z}$, which may include demographics, attitudinal measures, and contextual variables.  In the candidate-choice experiment, $\bfZ_i$ comprises 19 variables, including party identification, ideology, demographics, employment status, region, 2016 vote choice, and gender attitudes.  Crucially, $\bfZ_i$ is constant across all tasks for respondent~$i$.

\subsection{The Random Utility Model}\label{sec:rum}

We ground our framework in the canonical random utility model of \citet{mcfadden1974conditional}. Respondent~$i$ derives utility from profile $j$ in task $t$ according to
\begin{equation}\label{eq:utility}
  U_{ijt} = \bfX_{ijt}^\top \bfbeta_i + \varepsilon_{ijt},
\end{equation}
where $\bfbeta_i \in \mathbb{R}^p$ is respondent~$i$'s latent vector of \textit{marginal utilities}, and $\varepsilon_{ijt}$ are independent and identically distributed Type~I Extreme Value (Gumbel) taste shocks. Because the Gumbel shocks normalize the choice scale, $\bfbeta_i$ is the scale-normalized preference vector. If respondent~$i$'s latent utility has an unnormalized coefficient vector $\bfbeta_i^\ast$ and error scale $\sigma_i$, the choice likelihood identifies $\bfbeta_i^\ast/\sigma_i$, and this ratio is the $\bfbeta_i$ we estimate. Our hybrid model decomposes these preferences as
\begin{equation}\label{eq:beta_decomp}
  \bfbeta_i = f(\bfZ_i) + \boldsymbol{\eta}_i.
\end{equation}
We maintain this mean-logit random utility model throughout. Under that maintained specification, $f(\bfZ_i)=\E[\bfbeta_i \mid \bfZ_i]$ is the systematic component of the scale-normalized structural preference vector and $\boldsymbol{\eta}_i$ is a respondent-specific residual with $\E[\boldsymbol{\eta}_i \mid \bfZ_i]=\mathbf{0}$. Thus, when we refer to ``preferences,'' ``mean preferences,'' or ``preference parameters,'' we mean these structural objects. If the same estimating equations are instead read only as a working conditional-logit approximation, the first-stage target is the conditional-logit projection $f_0$ characterized in Supplementary Materials~\ref{app:asymptotic_results} condition (H2).\footnote{Because a mixture of logits is not generally a logit, $f_0(\bfZ_i)$ need not equal $\E[\bfbeta_i \mid \bfZ_i]$. So in the weaker conditional-logit projection interpretation, the corresponding average quantities are projection preferences rather than exact latent preference means.} This formulation nests two familiar special cases.  Setting $\boldsymbol{\eta}_i=\mathbf{0}$ yields the pure DNN model, in which preferences depend only on observed covariates.  Restricting $f(\bfZ_i)=\bfZ_i\mathbf{B}$ yields the linear mean structure used in standard hierarchical logit and mixed-logit models.

Each coefficient $\beta_{ik}$ is the marginal utility that respondent~$i$ assigns to attribute level $k$ relative to the reference level of the corresponding attribute.  For instance, in the candidate-choice experiment covered below, $\beta_{i,\text{Female}}$ measures how much respondent~$i$ values a female candidate relative to a male candidate.  The key modeling choice is that the systematic mean $f(\cdot)$ is nonparametric, as we impose no parametric restriction on how respondent characteristics relate to preferences. This distinguishes our approach from standard mixed logit or hierarchical Bayes.

The respondent chooses profile~1 if $U_{i1t} > U_{i2t}$.  Since the difference of two independent and identically distributed Gumbel random variables follows a logistic distribution, the forced-choice probability conditional on respondent~$i$'s preferences is
\begin{equation}\label{eq:choiceprob}
  \Pr(Y_{it} = 1 \mid \bfDelta\bfX_{it}, \bfbeta_i) = G\!\left(\bfDelta\bfX_{it}^\top \bfbeta_i\right),
\end{equation}
where $G(v) = (1 + e^{-v})^{-1}$ is the logistic cumulative distribution function.  The profile-pair differencing \eqref{eq:deltaX} eliminates any alternative-specific constant, so we need only estimate respondent-specific preference vectors.  In practice, our estimator uses the cross-sectional information to learn $f(\bfZ_i)$ and then uses each respondent's repeated choices to update the residual component $\boldsymbol{\eta}_i$.

The key assumption of this model is an additive utility that is linear in attribute levels, which rules out attribute interactions unless explicitly included.  The logistic link from the Gumbel error distribution is a secondary modeling choice---common alternatives such as probit are nearly indistinguishable in practice.  While $f(\bfZ)$ is left fully nonparametric, the additive utility structure is what distinguishes the approach from purely design-based methods and is the source of its additional identifying power.  More fundamentally, the random utility framework makes the recovered $\hat{\bfbeta}_i$ interpretable as preferences rather than coefficients in a flexible classifier. Every structural quantity we report, including MRS, WTP, compensating differentials, and vote shares in counterfactual contests, is a parameter of an economic model. Without this framework, the DNN-plus-empirical-Bayes approach would remain a useful flexible prediction tool. It would still recover heterogeneous predicted choice probabilities, direction-versus-intensity decompositions, polarization, attribute-importance shares, and subgroup heterogeneity as features of a flexible classifier.  What the random utility model specifically provides is a utility-cardinality interpretation that captures tradeoffs: the same $\hat{\bfbeta}_i$ can be compared across attributes (MRS, WTP), aggregated into vote shares, and used to compute compensating differentials. We discuss its limitations and possible extensions in Section~\ref{sec:limitations}.

\subsection{Quantities of Interest}\label{sec:quantities}

The structural model \eqref{eq:utility}--\eqref{eq:choiceprob} enables quantities inaccessible to reduced-form analysis.  Some depend only on the population mean of preferences; others depend on the respondent-specific $\bfbeta_i$, which our hybrid estimator treats as empirical-Bayes posterior summaries under \eqref{eq:beta_decomp}.  We focus on the quantities estimated in our applications.

\begin{enumerate}[leftmargin=*]

\item \textit{Average preference parameters.}  The primary estimand is the population-average structural preference for attribute level $k$:
\[
  \theta_k = \E[\beta_{ik}] = \E[f_k(\bfZ_i)], \quad k = 1, \ldots, p.
\]
On the logit scale, $\theta_k$ summarizes how much the average voter rewards or penalizes attribute $k$. Under the working-model interpretation described above, the same estimating procedure targets $\E[f_{0k}(\bfZ_i)]$, an average conditional-logit projection preference.

\item \textit{Average marginal effects.}  The average preference parameter $\theta_k$ lives on the logit scale.  Its probability-scale counterpart is the average marginal effect (AME):
\[
  \text{AME}_k = \E_{\bfbeta_i, \bfX_{-k}}\!\left[\,G\!\left(\beta_{ik} + \bfX_{-k}^\top \bfbeta_{i,-k}\right) - G\!\left(\bfX_{-k}^\top \bfbeta_{i,-k}\right)\right],
\]
the average change in the probability of choosing a profile when level $k$ is switched on, averaging over respondents and the randomization of all other attributes.  Under correct specification the AME equals the AMCE of \citet{hainmueller2014causal}; we use any discrepancy beyond sampling noise as a misspecification diagnostic, and in the democracy application the two agree closely across all 30 attribute levels (Figure~\ref{fig:gs_beta_ridgelines}B).

\item \textit{Individual preference vectors.}  The estimator recovers $\hat{\bfbeta}_i \in \mathbb{R}^p$ for each respondent---the complete vector of marginal utilities across all attribute levels simultaneously.  In our implementation,
\[
  \hat{\bfbeta}_i = \hat f(\bfZ_i) + \hat{\boldsymbol{\eta}}_i,
\]
combining the cross-fitted DNN mean $\hat f(\bfZ_i)$ with the empirical-Bayes update $\hat{\boldsymbol{\eta}}_i$ from the respondent's repeated choices.  This is the central structural object: every quantity below is a functional of it, and reduced-form methods cannot deliver it.

\item \textit{Counterfactual choice probabilities and electoral competition.}  For any pair of hypothetical profiles $A$ and $B$ with attribute vectors $\bfX_A$ and $\bfX_B$:
\[
  \Pr(\text{choose } A \text{ over } B \mid \bfbeta_i) = G\!\left((\bfX_A - \bfX_B)^\top \hat{\bfbeta}_i\right).
\]
This predicts any head-to-head contest between fully specified profiles for any respondent or subgroup; aggregated across the electorate it yields counterfactual vote shares as candidates move through policy positions.  Recovering it requires the joint preference vector, because the nonlinear logistic link makes the combined contrast more than the sum of marginal effects.

\item \textit{Preference polarization.}  The sign split of $\beta_{ik}$ across respondents separates the \textit{direction} of a preference from its \textit{intensity}.  We summarize it by the polarization fraction, the population share who favor level $k$:
\[
  \pi_k = \Pr(\beta_{ik} > 0) = \E\!\left[\mathbf{1}\{\beta_{ik} > 0\}\right],
\]
so that $1 - \pi_k = \Pr(\beta_{ik} < 0)$ is the share who oppose it.  A value of $\pi_k$ near $\tfrac{1}{2}$ marks an attribute on which the electorate divides into offsetting camps of strong supporters and opponents.  This distinguishes genuine consensus from a near-zero average $\theta_k$ that masks such a split---a distinction the AMCE cannot make.

\item \textit{Attribute importance.}  Under conjoint randomization, the variance of utility decomposes additively across attributes:
\[
  \Var\!\left(\bfX^\top \bfbeta_i\right) = \sum_{a=1}^A \Var\!\left(\bfX_a^\top \bfbeta_{i,a}\right),
\]
where $a$ indexes attributes.\footnote{This decomposition is exact when attributes are randomized independently. We compute each attribute's contribution as the exact block variance $\Var(\bfX_a^\top\bfbeta_{i,a})$ over the design distribution, which for multi-level categorical attributes retains the cross-level covariances among its mutually exclusive level dummies. The naive sum-of-squared-effects measure $\sum_{k \in a}\hat\beta_{ik}^2\,\Var(X_k)$ omits these covariances and overstates co-signed multi-level blocks; for the seven undemocratic actions in Section~\ref{sec:graham_svolik} it would report a $24\%$ share against the $9\%$ of the exact decomposition. We report the exact shares throughout.}  Normalizing to shares gives each respondent an importance ranking over attributes.  The resulting heterogeneity in \textit{what} voters weight---some focused on one dimension, others spread across many---is invisible to average-effect analysis.

\item \textit{Marginal rates of substitution (MRS).}  The tradeoff between attributes $j$ and $k$ for respondent~$i$ is:
\[
  \text{MRS}_{ijk} = -\frac{\beta_{ij}}{\beta_{ik}},
\]
the units of attribute $k$ needed to compensate a one-unit change in attribute $j$, expressed in the denominator attribute's units; with a monetary denominator it reduces to WTP.  At the population level, the corresponding quantity is the negative ratio of \emph{average} preference parameters,
\[
  \text{MRS}_{jk} = -\frac{\theta_j}{\theta_k}, \qquad \text{WTP}_j = -\frac{\theta_j}{\theta_{\text{money}}},
\]
which admits valid $\sqrt{N}$ inference by the delta method, with a Fieller interval near a vanishing denominator (Supplementary Materials~\ref{app:other_estimands}).  The population negative ratio is the more stable target; individual MRS is sensitive to near-zero denominators, so we trim in practice.

\item \textit{Compensating differentials.}  For a penalty attribute ($\beta_{ij} < 0$), the compensating differential is the fraction of respondents for whom some benefit $k$ satisfies $\beta_{ij} + \beta_{ik} \geq 0$---the discrete ``would you take the deal?'' counterpart to the MRS, requiring the joint within-respondent distribution of preferences.

\end{enumerate}

The preference vectors also support consumer surplus \citep{mcfadden1981econometric}, preference clustering, inequality measures, and the \textit{majority preference function}---for any pair $(A, B)$, the fraction of respondents with $(\bfX_A - \bfX_B)^\top \hat{\bfbeta}_i > 0$.\footnote{Unlike logit choice probabilities, the majority preference function abstracts from idiosyncratic taste shocks and directly addresses the concern raised by \citet{abramson2022voter} that the AMCE can indicate the opposite of the true majority preference.}  See \citet{train2009discrete}, \citet{louviere2000stated}, and \citet{green1990conjoint} for detailed treatments.

\paragraph{Relations to the AMCE.}  The AMCE, which \citet{bansak2023amce} show maps to aggregate vote shares, marginalizes one attribute over all others and so cannot recover the joint vector $\bfbeta_i$ or its functionals---MRS, counterfactual choice probabilities, compensating differentials, importance, polarization; conditional AMCEs \citep{hainmueller2014causal, leeper2020measuring} still average within subgroups.  \citet{delacuesta2022improving} show the AMCE is design-dependent under heterogeneity, and \citet{abramson2022voter} that it can reverse the majority preference because it reflects preference intensity, not direction alone.  Section~\ref{sec:applications} illustrates these distinctions.

\subsection{Identification}\label{sec:ident}

All of the quantities of interest defined above are derived from either the systematic mean function $f(\bfZ)$ or the respondent-specific preference vector $\bfbeta_i = f(\bfZ_i) + \boldsymbol{\eta}_i$.  It is therefore useful to separate what is identified by randomization and the structural model from what is learned by respondent-level updating.

By construction, $\bfDelta\bfX_{it}$ is randomly assigned and therefore independent of $\bfZ_i$:
\[
  \bfDelta\bfX_{it} \perp \!\!\! \perp \bfZ_i.
\]
This exogeneity, guaranteed by the design, identifies the conditional choice distribution without the omitted-variable bias or endogenous sorting that plague observational discrete choice.

To see what this buys us, consider a simplified version of the candidate-choice experiment with two binary attributes (candidate gender and policy agenda, reform versus status quo) and a binary respondent characteristic (college degree).  Randomization means that within each education group, differences in choice probabilities across profile contrasts identify the systematic mean coefficients directly: a comparison of female versus male candidates (holding the agenda fixed) isolates $f_{\text{Female}}(z)$, and a comparison of reform versus status-quo candidates (holding gender fixed) isolates $f_{\text{Agenda}}(z)$.  Once these coordinates are identified, the structural model combines these marginal utilities into joint quantities of interest.

We emphasize that identification here has two components.  First, experimental randomization nonparametrically identifies the conditional choice probability $\Pr(Y_{it}=1 \mid \bfDelta\bfX_{it}, \bfZ_i)$, eliminating the endogeneity concerns that dominate observational discrete choice.  Second, the logit functional form links those probabilities to latent utilities, which are linear in $\bfX_{it}$, and thereby identifies the systematic preference function $f(\bfZ)=\E[\bfbeta_i\mid\bfZ_i]$ and the population-average structural preferences $\theta_k=\E[\beta_{ik}]=\E[f_k(\bfZ_i)]$ under the maintained mean-logit model. If the same equations are treated only as a working conditional-logit approximation, the identified mean-stage object is instead the projection $f_0(\bfZ)$ and the average target is $\E[f_{0k}(\bfZ_i)]$. The respondent-specific residual $\boldsymbol{\eta}_i$ is different: with finite tasks per respondent it is not point-identified from the design alone, and is instead updated by respondent-level likelihood information under the hybrid model.  The individual-level quantities we report should therefore be understood as empirical-Bayes posterior summaries under \eqref{eq:beta_decomp}, not as design-based point identification without additional structure.

Two features make the structural component relatively modest. First, as the random utility model makes clear, the parametric restriction applies only to the mapping from preferences to choice probabilities, not to the flexible (DNN-estimated) heterogeneity itself. Second, as noted above, the choice of link is secondary in practice---logit and probit are nearly indistinguishable in the interior of the probability space, and the DNN can partially compensate for link misspecification by rescaling the preference vector---so the binding structural assumption is additive utility.

\medskip
The key distinction is between population-average quantities, which admit design-based DML inference, and distributional or individual-level quantities, which rely more heavily on the respondent-level structural update.  Population averages and smooth functionals of the mean stage inherit the experimental design's inferential guarantee, whereas population distributional summaries and individual-level quantities are model-based at fixed $T$ and become more reliable as the number of tasks per respondent grows.

\section{Estimation}\label{sec:estimation}

We estimate the hybrid model in three stages: (i)~estimate the systematic mean $f(\bfZ)$ with a cross-fitted flexible learner (a DNN throughout), (ii)~recover $\hat{\bfbeta}_i$ by respondent-level empirical-Bayes updating, and (iii)~apply the DML correction for valid inference on population averages.  Detailed architecture, tuning, variance construction, and diagnostics are in Supplementary Materials~\ref{app:estimation_details}.

The first stage is learner-agnostic: any supervised learner that estimates $f(\bfZ)$ at a fast enough rate enters the same debiased score and delivers the same valid inference.  We use a deep neural network by default; elastic-net and GRF first stages perform comparably (Supplementary Materials~\ref{app:learners}, \ref{app:thetak_mc}).  The DNN maps covariates to a mean preference vector, $\hat f(\bfZ_i)\in\mathbb{R}^p$, which enters the choice probability through the logit index $\bfDelta\bfX_{it}^\top \hat f(\bfZ_i)$. The map from $\bfZ_i$ to preferences can be arbitrarily nonlinear, but the resulting preference vector still enters utility linearly. We train on the heterogeneous-logit likelihood with respondent-level $K$-fold cross-fitting \citep{chernozhukov2018double}---all of a respondent's tasks in one fold---so out-of-fold estimates never reuse a respondent's data for both training and evaluation; production runs average two independently seeded cross-fits (Supplementary Materials~\ref{app:estimation_details}).

The natural fully modeled second stage is a DNN-offset mixed logit or hierarchical logit:
\[
  \bfbeta_i = f(\bfZ_i) + \boldsymbol{\eta}_i, \qquad \boldsymbol{\eta}_i \sim N(\mathbf{0}, \boldsymbol{\Sigma}_\eta).
\]
In typical conjoint designs the tasks per respondent are few relative to the preference coefficients, so fully estimating $\boldsymbol{\Sigma}_\eta$ is unstable.  We therefore use a simpler empirical-Bayes implementation with prior mean the two-seed cross-fitted ensemble from the first stage:
\[
  \hat f_{\text{ens}}(\bfZ_i) = \frac{1}{2}\left(\hat f^{(1)}(\bfZ_i) + \hat f^{(2)}(\bfZ_i)\right).
\]
We use the score-based scale heuristic $\hat\sigma_{\text{score},k}^2$ as a coordinate-specific scale for residual heterogeneity.
At low $T$, using this raw scale leaves the prior too diffuse, so a few binary choices can move $\hat{\bfbeta}_i$ excessively off the DNN mean.  We therefore add a fixed prior-precision constant $c_\eta$ and form the diagonal working covariance $\hat{\boldsymbol{\Sigma}}_\eta(c_\eta)=\operatorname{diag}(\hat\sigma_{\eta,1}^2(c_\eta),\ldots,\hat\sigma_{\eta,p}^2(c_\eta))$ as
\[
  \hat{\sigma}_{\eta,k}^2(c_\eta)
  =
  \frac{\hat{\sigma}_{\text{score},k}^2}{c_\eta},
  \qquad k = 1,\ldots,p.
\]
Our default uses $c_\eta=5$, selected ex ante from the simulation diagnostics in Supplementary Materials~\ref{app:estimation_details}---a regularization calibration, not a claim about the true variance of $\eta_{ik}$.\footnote{On a large targeted simulation grid, $c_\eta=5$ gave the best average individual-level recovery and was statistically indistinguishable from a more adaptive MAP calibration, so we use the simpler fixed rule (Supplementary Materials~\ref{app:estimation_details}).}  The respondent-level update is the maximum-a-posteriori (MAP) problem
\begin{equation*}
\begin{aligned}
  \hat{\bfbeta}_i
  =
  \operatorname*{arg\,max}_{\bfbeta \in \mathbb{R}^p}
  \Biggl\{
    \sum_{t=1}^{T_i}
    \Bigl[
      Y_{it}\log G(\bfDelta\bfX_{it}^\top \bfbeta)
      +
      (1-Y_{it})\log\!\bigl(1-G(\bfDelta\bfX_{it}^\top \bfbeta)\bigr)
    \Bigr]
    \\
    -
    \frac{1}{2}
    (\bfbeta-\hat f_{\text{ens}}(\bfZ_i))^\top
    \hat{\boldsymbol{\Sigma}}_\eta^{-1}
    (\bfbeta-\hat f_{\text{ens}}(\bfZ_i))
  \Biggr\}.
\end{aligned}
\end{equation*}
Equivalently, because the working covariance is diagonal, this update solves
\[
  \hat{\bfbeta}_i
  =
  \operatorname*{arg\,max}_{\bfbeta}
  \left\{
    \ell_i(\bfbeta)
    -
    \frac{c_\eta}{2}
    \sum_{k=1}^{p}
    \frac{
      \left(\beta_k-\hat f_{\text{ens},k}(\bfZ_i)\right)^2
    }{
      \hat\sigma_{\text{score},k}^2
    }
  \right\},
\]
where $\ell_i(\bfbeta)$ is the respondent-level log-likelihood above.  So $c_\eta$ is a ridge-like prior-precision multiplier---larger values shrink more toward the DNN mean---and the default $c_\eta=5$ favors stable recovery over fitting the few choices per respondent.  We call the resulting estimator the \textit{empirical-Bayes hybrid DNN}, labeled \texttt{EnsC5} in the software.  The calibration affects only the individual-level and distributional quantities based on $\hat{\bfbeta}_i$, not the population-average DML inference, which uses the orthogonal score and is independent of the MAP update.

For population-average parameters, cross-fitting still leaves a first-order bias from the estimation error of $\hat f$, which DML corrects via an orthogonal moment (the score uses the same two-seed ensemble).  For attribute $k$, respondent $i$, and task $t$, define the debiased signal:
\begin{equation}\label{eq:psi}
  \hat\psi_{ikt} = \hat{f}_k(\bfZ_i) + \left[\hat{\bfLambda}^{-1}(\bfZ_i) \cdot \bfDelta\bfX_{it} \cdot \left(Y_{it} - \hat{G}_{it}\right)\right]_k,
\end{equation}
where $\hat{G}_{it} = G(\bfDelta\bfX_{it}^\top \hat f(\bfZ_i))$ and $\hat{\bfLambda}(\bfZ_i)$ is the estimated local information matrix: the first term is the plug-in DNN mean, the second the orthogonal correction from the logit residual.  The debiased estimator of the average preference parameter $\theta_k=\E[f_k(\bfZ_i)]$ under the maintained structural model is then:
\begin{equation}\label{eq:theta_hat}
  \hat{\theta}_k
  =
  \frac{1}{N}
  \sum_{i=1}^N
  \frac{1}{T_i}
  \sum_{t=1}^{T_i}
  \hat\psi_{ikt}.
\end{equation}
When tasks per respondent are equal, this respondent-weighted estimator equals the task average.  Its mean-zero moment $\hat\varphi_{ikt}(\theta_k)=\hat\psi_{ikt}-\theta_k$ is Neyman-orthogonal, so first-stage errors affect $\hat{\theta}_k$ only at second order; clustering standard errors at the respondent level, the regularity conditions of \citet{farrell2025deep} and \citet{chernozhukov2018double} give $\sqrt{N}$-consistent, asymptotically normal inference.

\paragraph{Implications for the quantities of interest.}  The three stages serve different quantities.  Population averages---$\theta_k$, AMEs, counterfactual vote shares, the population MRS and WTP, and the between-respondent importance share---use only the first and third stages: the MAP update never enters, and precision is governed primarily by the number of respondents $N$.  Individual-level quantities---$\hat{\bfbeta}_i$ itself, respondent-specific ratios and rankings, and threshold conditions such as compensating differentials---rely on the second stage.  Their quality is governed by the tasks per respondent $T$ and by how much heterogeneity the covariates explain ($R_Z^2$); at low $T$ they are shrunk toward the mean stage, and their inference uses the respondent-cluster wild bootstrap.  Distributional summaries built from the $\hat{\bfbeta}_i$, such as polarization fractions, inherit both stages.  The result is a hierarchy: averages are easiest, respondent-level coefficients harder, and individual ratios hardest, because a near-zero denominator amplifies noise.  Supplementary Materials~\ref{sec:asymptotics} develops the corresponding asymptotic intuition.

Table~\ref{tab:estimands} summarizes the quantities of interest, their identifying information, and the type of inference they support. Population averages and smooth functionals of the mean stage are covered by the design-based DML guarantee. Population distributional summaries rely on the empirical-Bayes update and are model-based at fixed $T$. Individual, per-respondent quantities, including the recovered preference vector $\hat{\bfbeta}_i$ and its functionals such as individual MRS, become reliable as $T$ grows.
\begin{table}[t!]
\centering\small
\caption{Quantities of Interest, Source of Identification, and Inference}
\label{tab:estimands}\vspace{-0.5em}
\setlength{\tabcolsep}{5pt}
\begin{tabular}{@{}p{0.47\textwidth} p{0.27\textwidth} p{0.21\textwidth}@{}}
\toprule
Quantity & Identified by & Inference \\
\midrule
\multicolumn{3}{@{}l}{\textit{1. Mean-stage averages and smooth functionals (design-identified)}}\\[2pt]
Average preference $\theta_k = \E[f_k(\bfZ_i)]$ & mean stage $f(\bfZ)$ & DML, $\sqrt{N}$ CI \\
Average marginal effect & smooth functional of $f$ & DML, $\sqrt{N}$ CI \\
Counterfactual vote share (mean stage) & smooth functional of $f$ & DML, $\sqrt{N}$ CI \\
Between-respondent importance share & smooth functional of $f$ & DML, $\sqrt{N}$ CI \\
Population MRS, willingness to pay & smooth ratio of $\theta$ & DML/delta-method \\
\addlinespace
\multicolumn{3}{@{}l}{\textit{2. Population distributional summaries (empirical-Bayes, fixed $T$)}}\\[2pt]
Polarization fraction $\Pr(\beta_{ik} > 0)$ & residual law $F_{\eta \mid \bfZ}$ & cluster bootstrap$^{\dagger}$ \\
Compensating-differential fractions & residual law (threshold) & cluster bootstrap$^{\dagger}$ \\
\addlinespace
\multicolumn{3}{@{}l}{\textit{3. Individual, per-respondent quantities (reliable at large $T$)}}\\[2pt]
Recovered individual preferences $\hat{\bfbeta}_i$ & individual posterior mean & model-based$^{\dagger}$ \\
Individual MRS, within-respondent importance, rankings & individual $\hat{\bfbeta}_i$ & model-based$^{\dagger}$ \\
\bottomrule
\end{tabular}
\begin{minipage}[]{1\textwidth}\footnotesize
    \textit{Note:} $^{\dagger}$Quantities that depend on the residual law $F_{\eta\mid\bfZ}$ are not point-identified with a fixed number of tasks. The respondent-cluster wild bootstrap quantifies sampling variability, but fixed-$T$ shrinkage biases the plug-in estimates toward consensus. We therefore interpret these quantities as model-based descriptions and report the individual-level quantities as point summaries without confidence intervals.
\end{minipage}
\end{table}

\paragraph{Monte Carlo evidence.}  Three sets of checks verify this strategy before we take it to data.  First, the mean stage reproduces the reduced form it nests: on the \citet{bansak2016europeans} immigration conjoint it matches a pooled homogeneous logit almost exactly (Pearson correlation $r = 0.997$ across 28 levels; $r = 0.973$ across $15 \times 28$ country-specific coefficients; Supplementary Materials~\ref{app:validation}).

Second, coverage studies for the debiased averages compare the three first-stage learners (Supplementary Materials~\ref{app:learners} and~\ref{app:thetak_mc}).  With the DNN first stage, the intervals are close to nominal across the estimand suite at $N = 1{,}000$: $0.94$--$0.96$ for the average parameter, the AME, and counterfactual vote shares, with ratio quantities hardest ($0.85$ for the MRS).  The elastic-net and GRF first stages enter the same orthogonal score and are valid in the same way, but run a few points lower in finite samples.  The learner matters most when preferences are nonlinear in the moderators: an elastic net on raw moderators under-covers badly there ($0.68$), expanding its basis with splines restores nominal coverage ($0.97$), and the DNN and GRF are robust to both surfaces ($0.94$ and $0.89$).  Because calibration is not uniformly nominal across learners and designs, the DNN---most robust to the unknown shape of heterogeneity---is our default.

Third, a $1{,}872$-cell factorial grid varies $N$, $T$, $p$, covariate informativeness $R_Z^2$, and functional form ($18{,}720$ runs; Supplementary Materials~\ref{app:factorial}).  Population point recovery is strong throughout (mean $|\hat\theta - \theta| = 0.046$ on the logit scale, vote shares within $2.1$ points), and grid-wide coverage is $92\%$ against the $95\%$ target, the worst single cell ($71\%$) traced to mean-stage capacity in the most nonlinear designs and closed by widening the network.  Individual-level recovery is governed first by $R_Z^2$ (a $56\%$ ANOVA share) and is robust to functional form, and the MAP update improves it over the raw mean stage in $96\%$ of cells, supporting the \texttt{EnsC5} default.

\section{Applications}\label{sec:applications}

We illustrate the hybrid structural estimator with three published conjoint experiments, ordered from the most information-rich design to the sparsest.  The \citet{graham2020democracy} democracy conjoint ($T \approx 13$) presents the widest range of recoverable quantities; the \citet{ballard2017structure} tax-plan conjoint ($T = 8$) shows how continuous attributes yield individual-level rate schedules and the sharpest external validation; and the \citet{saha2022ambitious} candidate conjoint tests recovery in the sparsest design ($T = 3$).

\subsection{The Democracy Tradeoff}\label{sec:graham_svolik}

\citet{graham2020democracy} study how American voters trade off democratic principles against policy and partisan considerations.  Their conjoint pairs hypothetical state legislative candidates described by eight attribute groups: party (co-partisan vs.\ not), economic and social policy positions, seven good-governance positions, seven undemocratic actions, two valence violations, sex, race, and profession---$p = 30$ attribute levels, one good-governance position serving as the reference.\footnote{The original fields eight undemocratic wordings, two of which are protest-ban variants that \citet{graham2020democracy} pool into a single action; we follow their pooling.  The gerrymander action appears in two versions (a two-seat and a ten-seat variant).  Candidate age and political experience, also randomized in the design, are excluded here as in the original's main specification.}  The original fielded 1,691 respondents for 16 tasks; the public replication files release 13, with 1.2\% missing outcomes, and dropping respondents with missing covariates yields our sample of 1,605 respondents and 20,657 tasks.  We use the default implementation from Section~\ref{sec:estimation}; the original's estimates are survey-weighted, ours unweighted (reweighting moves the headline penalty by less than 0.1 points).  The covariate vector $\bfZ_i$ has 16 variables (ideology, party ID, Trump approval, age, education, income, authoritarianism, political knowledge, gender, race, and four policy ideal-point scales).  The original also collected each respondent's direct, pre-conjoint ratings of how undemocratic each practice is; we exclude these from $\bfZ_i$ so the recovered preferences are identified from choice behavior alone, reserving the ratings for out-of-sample validation (below).

Panel~A of Figure~\ref{fig:gs_amce_fraction} reports the average preference parameters $\hat{\theta}_k$ with their 95\% DML confidence intervals.  Co-partisanship has the largest positive effect ($\hat{\theta} = 0.72$, $p < 0.001$).  All seven undemocratic actions carry significant negative coefficients from $-0.57$ to $-0.93$, prosecuting journalists most opposed; the two valence violations are larger still---an affair ($-0.98$) and tax evasion ($-1.21$).  Clustered and unclustered standard errors are essentially identical (ratio $1.04$), as expected when profile contrasts are independently randomized across tasks, leaving little within-respondent correlation in the orthogonal scores; we read this as a consistency check, not evidence of correctness.

\citet{graham2020democracy} themselves go beyond the average penalty, specifying a random-utility logit (see Equation~6 and Table~2, ``Structural Estimates'') and summarizing support for democracy through its \emph{relative weight} ($\approx 18\%$, against $33\%$ for social policy, $27\%$ for economic, and $22\%$ for party) and an implied \emph{marginal rate of substitution}---a value for democracy of $\approx 0.82$ in co-partisanship units, below one.  Our contribution is to \emph{individualize} this account: in place of a single weight and price, we recover the full distribution of $\hat{\bfbeta}_i$---importance shares, marginal rates of substitution, compensating differentials, and counterfactual vote shares---with a flexible first stage and debiased inference for the averages.

A central \citet{graham2020democracy} finding is that an undemocratic position costs a candidate only about 11.7\% of vote share on average---roughly one in nine voters punishes an otherwise-preferred candidate for violating democratic principles.  Our model reframes this finding: Panel~B of Figure~\ref{fig:gs_amce_fraction} reports, for each attribute level, the fraction of respondents whose recovered $\hat{\beta}_{ik}$ is positive versus negative, and for every undemocratic action only a small minority have $\hat{\beta}_{ik} > 0$---at most about $6\%$ overall for the ten-seat gerrymander, a tail estimate that is less precisely identified than the average, and effectively zero for the most-opposed actions.  Rejection of undemocratic behavior is broad, if not universal.

These two facts are complementary, not contradictory.  The 11.7\% penalty---the share who defect to punish a violation---measures how often democratic preferences prove decisive against competing considerations, not how widely they are held.  Almost all voters dislike undemocratic behavior (direction), but many weight party and policy more heavily (magnitude): the disagreement is not whether democracy matters but how much.

\begin{figure}[t!]
  \centering
  \caption{Average Preference Estimates and Favor--Oppose Fractions:\\ Graham \& Svolik (2020)}
  \label{fig:gs_amce_fraction}
  \includegraphics[width=\textwidth]{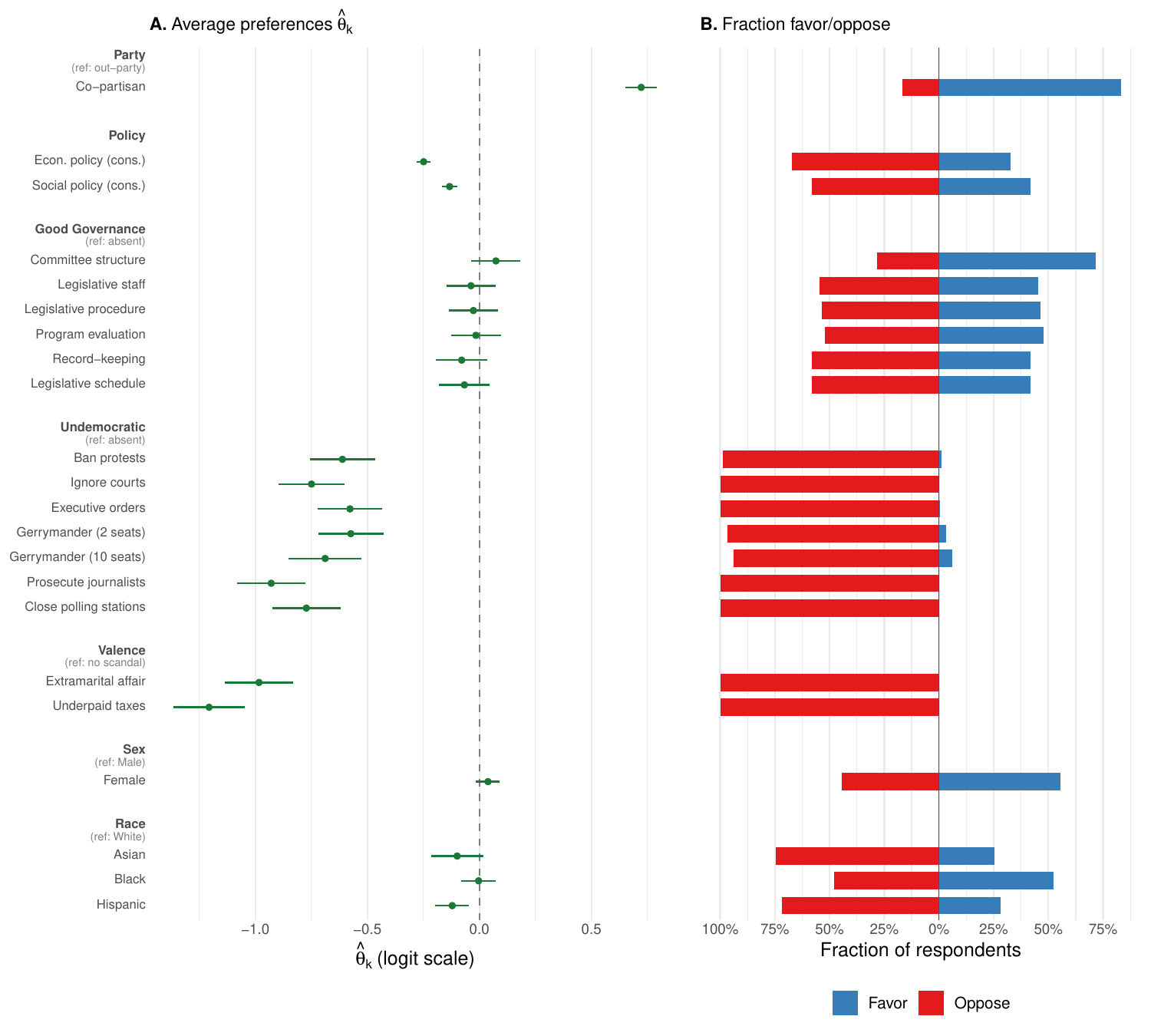}
  \begin{minipage}[]{1\textwidth}\footnotesize
      \textit{Note:} Profession dummies are omitted. \textbf{A}: $\hat{\theta}_k$ with 95\% DML CIs.  \textbf{B}: Fraction favoring vs.\ opposing each level; all undemocratic actions are opposed by $>$93\%.
  \end{minipage}
\end{figure}

\begin{figure}[p]
  \centering
  \caption{Respondent-level Preferences and the Misspecification Diagnostic:\\Graham \& Svolik (2020)}
  \label{fig:gs_beta_ridgelines}
  \includegraphics[width=\textwidth]{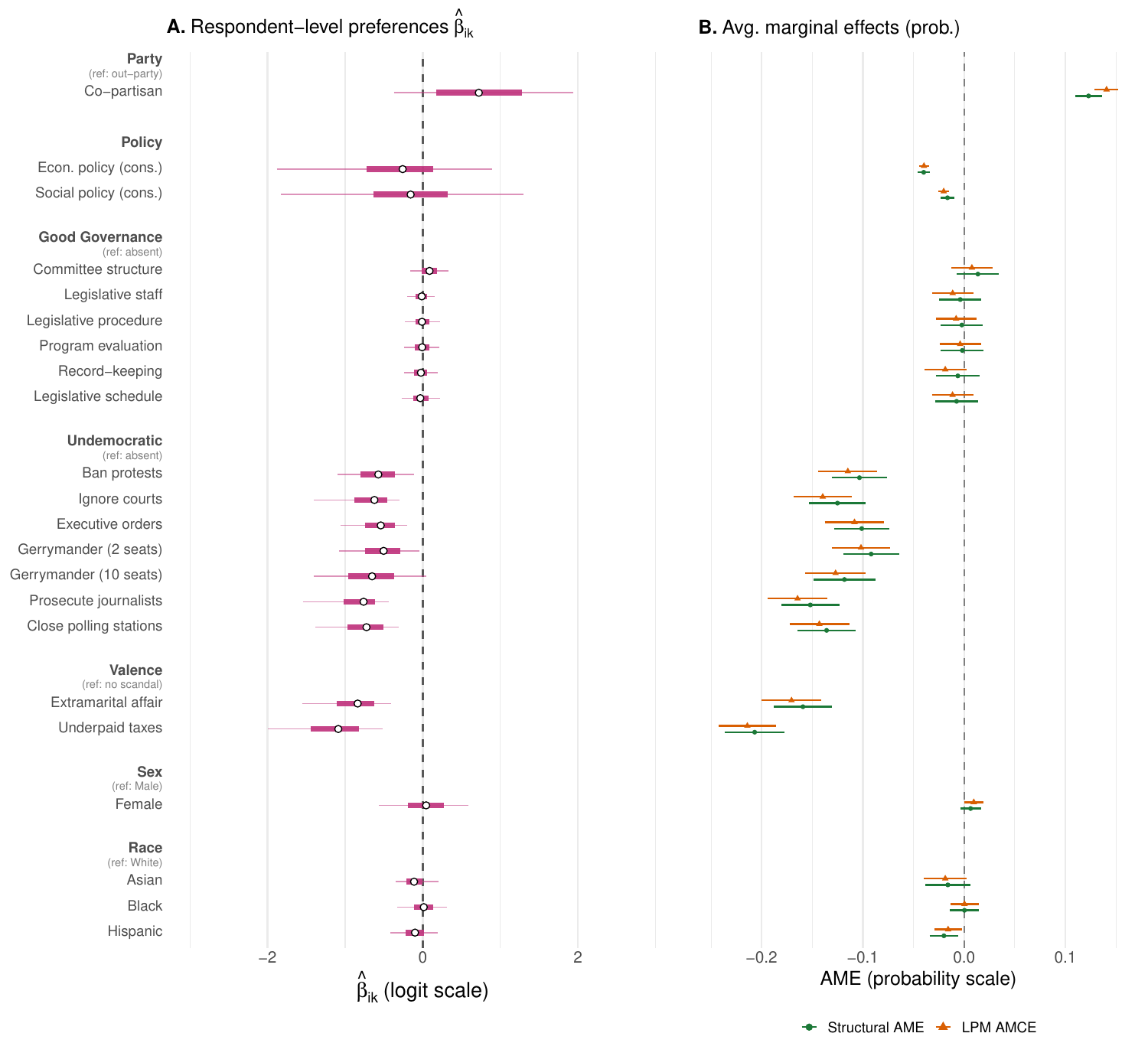}
  \begin{minipage}[]{1\textwidth}\footnotesize
      \textit{Note:} Profession dummies are omitted.  \textbf{A}: $\hat{\beta}_{ik}$ distributions by attribute family; points mark medians, thick bars the IQR, thin bars the 5th--95th percentiles.  Undemocratic actions lie almost entirely below zero; spread reflects intensity heterogeneity.  \textbf{B}: structural AME (probability scale) vs.\ the linear-probability AMCE, with 95\% CIs.  The two series agree in sign on all 30 levels and differ by 0.6 percentage points on average.
  \end{minipage}
\end{figure}

Panel~A of Figure~\ref{fig:gs_beta_ridgelines} displays the $\hat{\beta}_{ik}$ distributions by attribute level: good-governance estimates concentrate near zero, the undemocratic actions sit in negative territory with substantial spread---virtually everyone opposes them, but intensity varies.  Average sensitivity is fairly uniform across ideology (the correlation between mean $|\hat{\beta}_{i,u}|$ and 7-point ideology is weak, $r = -0.20$); the ideological disagreement below is about the weight placed on competing considerations, not the intensity of opposition to erosion.

Panel~B provides a first validation of these estimates: the misspecification diagnostic of Section~\ref{sec:quantities}, comparing the structural AME on the probability scale with the design-based AMCE from a linear probability model with respondent-clustered standard errors.  In this information-rich design the two series agree closely: correlation $0.998$ across the 30 levels, signs coincide everywhere, mean absolute difference $0.6$ points (at most $1.8$, on co-partisanship), all 95\% CIs overlap.  With $T \approx 13$, the structural model reproduces the reduced-form benchmark almost exactly; the quantities that follow are what it adds.

How much weight do voters actually place on each attribute group when deciding between candidates?  We answer this with a variance decomposition: for each respondent~$i$ and attribute group $g$, the importance share $\text{Imp}_{i,g} = \mathrm{Var}_X\!\big(\sum_{k \in g} \hat{\beta}_{i,k} X_k\big) \big/ \sum_{g'} \mathrm{Var}_X\!\big(\sum_{k \in g'} \hat{\beta}_{i,k} X_k\big)$ is the share of utility variance the group accounts for, with the variance taken over the design distribution of attribute levels so that within-attribute level covariances enter exactly rather than through a sum-of-squares approximation.  Figure~\ref{fig:gs_importance} shows the distribution of these shares across respondents.  Averaged across all respondents, valence violations account for $29\%$ of utility variance, policy for $26\%$, party for $23\%$, the seven undemocratic actions together for $9\%$, and sex/race/profession/good-governance together for about $12\%$.  So while voters virtually unanimously oppose undemocratic behavior in \emph{direction}, it commands only $\sim$$9\%$ of variance---comparable to sex, race, and profession combined.  Party, policy, and valence each command between roughly a fifth and a third of the variance, which is consistent with \citeauthor{graham2020democracy}'s finding that voters tolerate undemocratic actions from co-partisans: democratic norms are a real but distinctly secondary constraint.

\begin{figure}[t!]
  \centering
  \caption{Distribution of Individual-level Attribute-importance Shares:\\Graham \& Svolik (2020)}
  \label{fig:gs_importance}
  \includegraphics[width=0.85\textwidth]{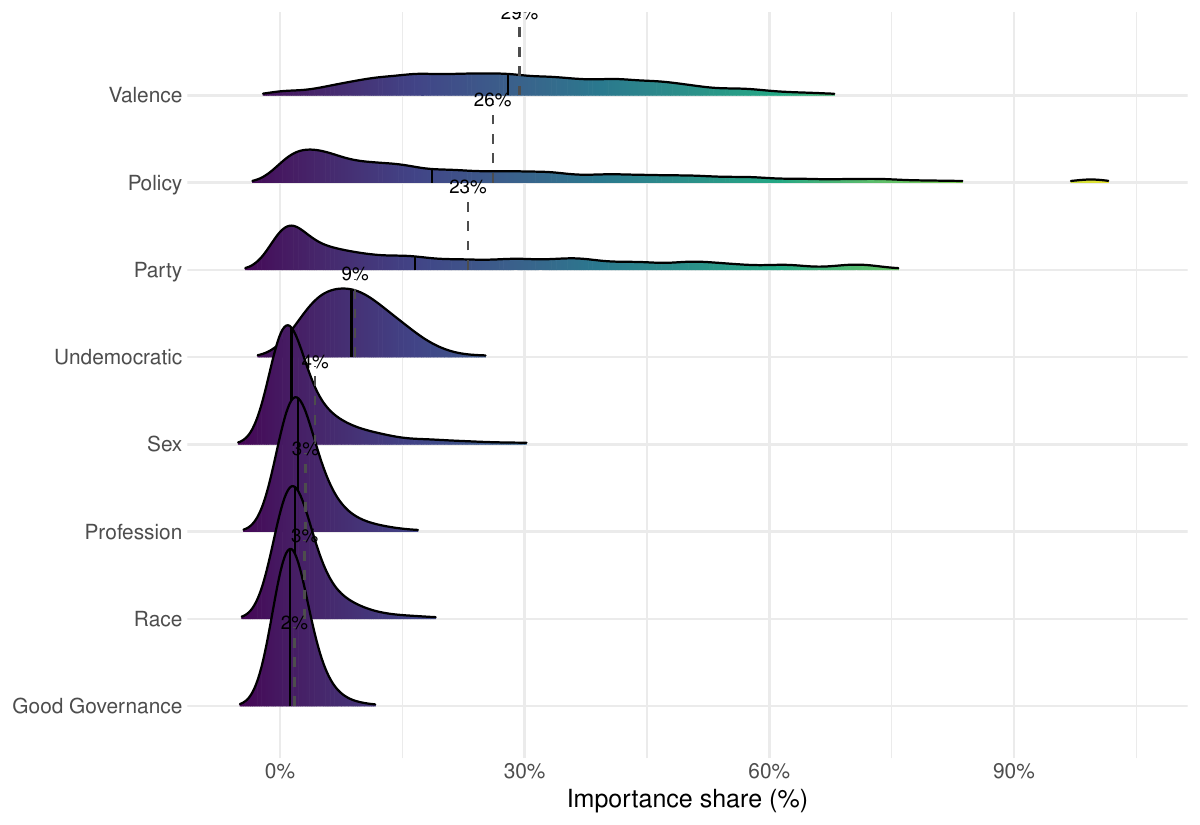}
  \begin{minipage}[]{1\textwidth}\footnotesize
      \textit{Note:} Dashed lines mark the labeled means.  Valence ($\sim$29\%), policy ($\sim$26\%), and party ($\sim$23\%) are roughly co-equal at the top; the seven undemocratic actions together account for $\sim$9\%.
  \end{minipage}
\end{figure}

How does a violation weigh against the pull of co-partisanship?  The population marginal rate of substitution, $-\hat{\theta}_{u}/\hat{\theta}_{\text{party}}$, is the share of the co-partisan benefit that just offsets a violation.  For prosecuting journalists it is $1.29$ ($95\%$ CI $[1.04,\,1.53]$), above the $\approx 0.82$ implied by \citeauthor{graham2020democracy}'s pooled structural estimate; the interval lies entirely above one, so co-partisanship alone does not, on average, compensate for it.  Ignoring courts ($1.04$, CI $[0.81,\,1.27]$) and the ten-seat gerrymander ($0.95$, CI $[0.71,\,1.20]$) sit near one, with the average voter roughly breaking even.  But these population prices average over very different voters.

A more discrete question---closer to \citeauthor{graham2020democracy}'s headline defection rates---asks whether a compensator is \textit{sufficient} for a voter to accept an undemocratic action.  For each voter $i$ and action $j$ we compute $\mathbf{1}\{\hat{\beta}_{i,u_j} + \text{benefit}_{ik} \geq 0\}$ under three compensators: co-partisanship ($+\hat{\beta}_{i,\text{party}}$), a full-range policy swing ($3(|\hat{\beta}_{i,p_1}| + |\hat{\beta}_{i,p_2}|)$, where three units span each policy scale's coded range), and the voter's favorite good-governance feature ($\max_k \hat{\beta}_{i,g_k}$).  Figure~\ref{fig:gs_compdiff} reports the fraction for whom compensation holds, by ideology tercile.

\begin{figure}[t!]
  \centering
  \caption{Fraction Accepting Undemocratic Actions under Compensating Benefits:\\Graham \& Svolik (2020)}
  \label{fig:gs_compdiff}
  \includegraphics[width=\textwidth]{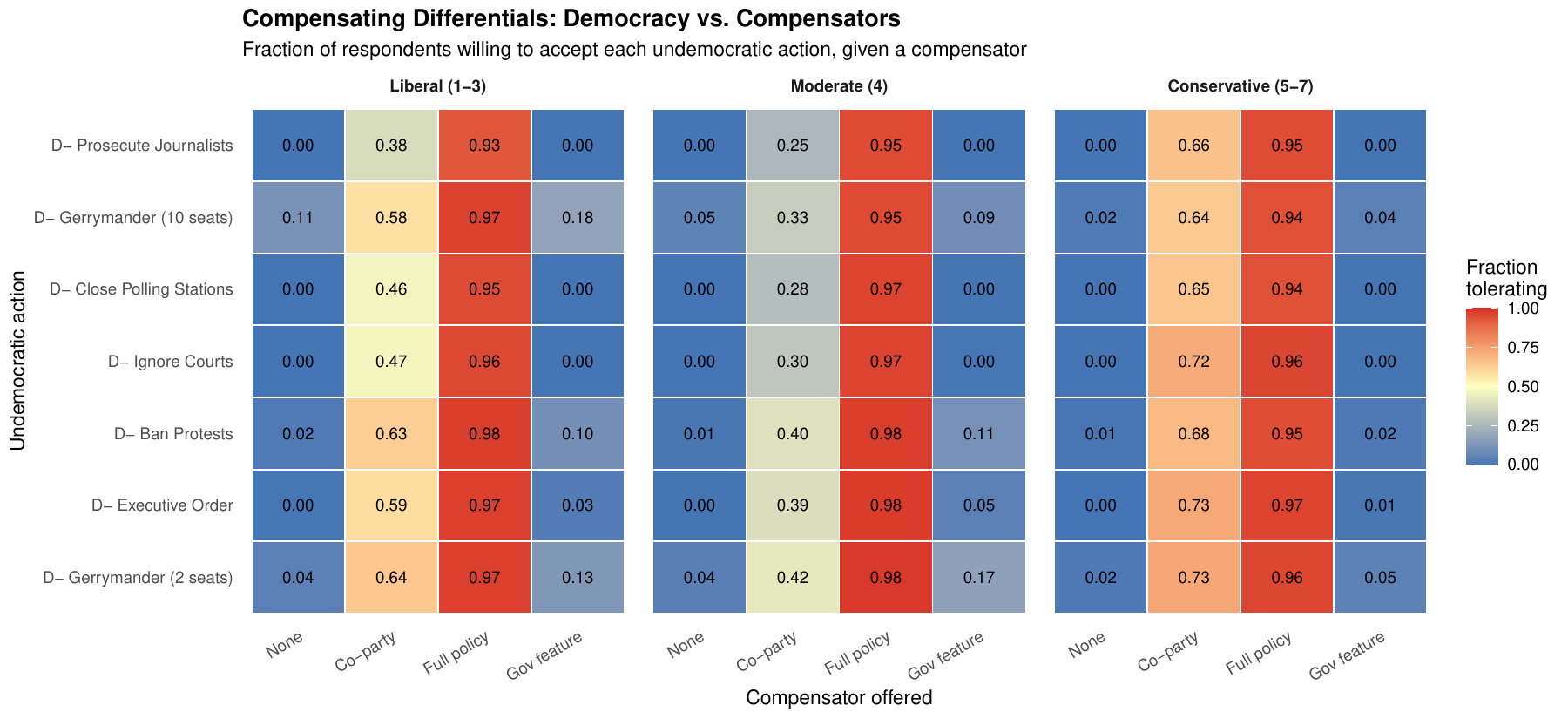}
  \begin{minipage}[]{1\textwidth}\footnotesize
      \textit{Note:} Each cell is the fraction of respondents (by ideology tercile) with $\hat{\beta}_{i,u_j} + \text{benefit}_{ik} \geq 0$---the share for whom the compensator induces acceptance of the undemocratic action.  Rows sorted by severity; the ``None'' column is the baseline fraction with $\hat{\beta}_{i,u_j} \geq 0$.
  \end{minipage}
\end{figure}

Three features stand out.  First, the ``None'' column confirms near-universal resistance: the share with a positive coefficient on any action is essentially zero, rising above a few percent only for gerrymandering ($\sim$$11\%$ of liberals for the ten-seat version).  Second, co-partisanship alone compensates a majority on most actions (ban-protests, ignore-courts, executive order, gerrymandering).  Third, it reveals a clear ideological gradient on the hardest cases: for prosecuting journalists $38\%$ (CI $[34,42]$) of liberals would accept the violation for a co-partisan versus $66\%$ ($[62,70]$) of conservatives---a 28-point gap excluding zero---with closing polling stations a 19-point gap ($46$ vs.\ $65$) and ignoring courts 25 ($47$ vs.\ $72$).  These three---tied to free elections, a free press, and judicial independence---are where liberals most cross party lines to defend democracy.  A full-range policy swing, by contrast, compensates over $94\%$ for any violation, underscoring why policy alignment dominates democratic principles in observational voting.

The compensating-differentials analysis above operates at the individual level: each voter accepts or rejects the candidate depending on whether the compensator clears the threshold.  A complementary structural counterfactual works at the aggregate level, computing predicted support in any two-candidate contest the model can simulate.  Figure~\ref{fig:gs_moderation} reports three such contests, each isolating one strategic dimension with the rest of the bundle fixed; in every panel Candidate A is the respondent's co-partisan and B the opposing party at the design's standard-conservative positions, so A begins with co-partisanship's $+0.72$ logit advantage.\footnote{Here and in the later applications, the displayed contest probabilities aggregate the recovered $\hat{\bfbeta}_i$, so they reflect both covariate-driven and residual heterogeneity---a distributional summary in the hierarchy of Section~\ref{sec:quantities}.  The debiased observations of Supplementary Materials~\ref{app:other_estimands} target the mean-stage population share $\E[G(\bfDelta\bfX^\top f(\bfZ))]$, which differs from these aggregates by roughly one to five percentage points on the candidate contests and by more on the most lopsided tax-plan bundles.}

\begin{figure}[t!]
  \centering
  \caption{Co-partisan Win Probability under Electoral-competition\\ Counterfactuals: Graham \& Svolik (2020)}
  \label{fig:gs_moderation}
  \includegraphics[width=\textwidth, trim={2cm 0.5cm 2cm 0cm}, clip]{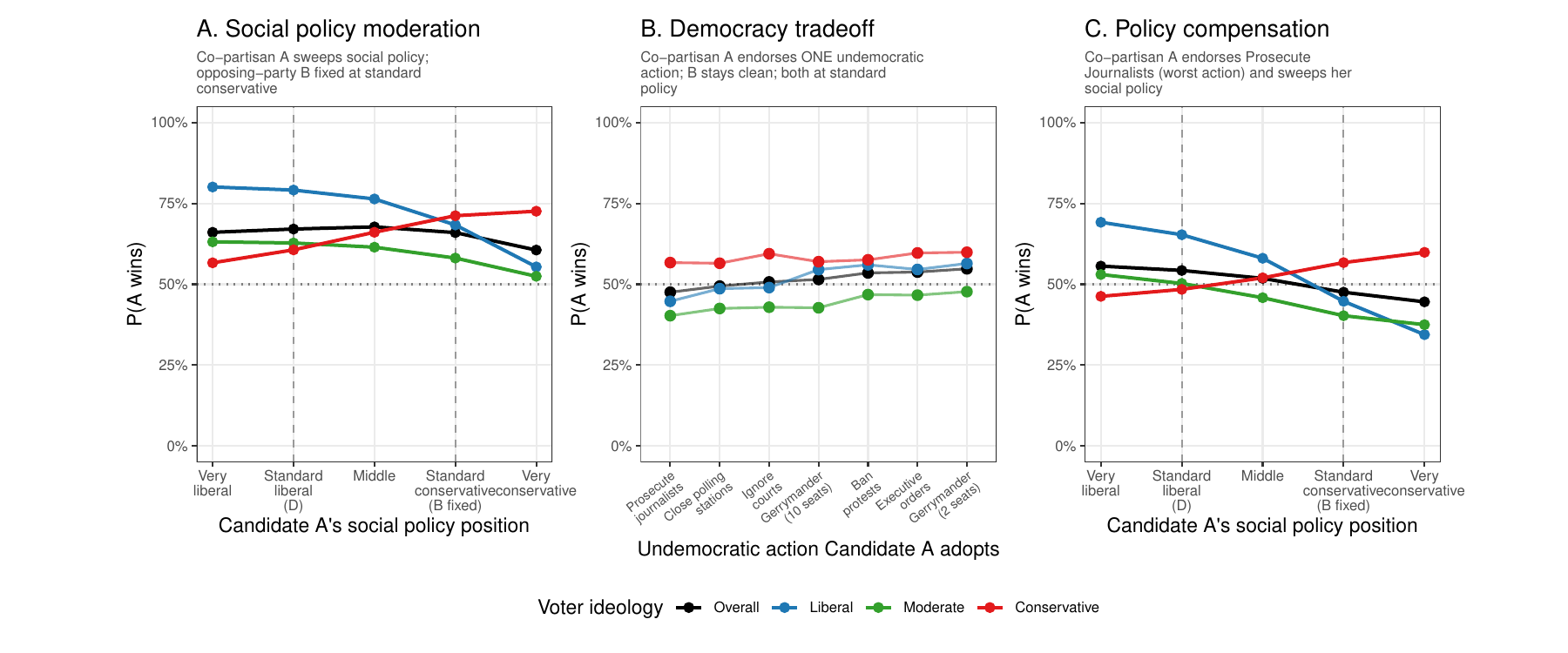}
  \begin{minipage}[]{1\textwidth}\footnotesize
      \textit{Note:} In every panel, Candidate A is the respondent's co-partisan and B the opposing party at standard-conservative positions.  \textbf{A}: A sweeps social policy from very liberal to very conservative, B fixed.  \textbf{B}: A endorses one of the seven undemocratic actions, B stays clean.  \textbf{C}: A endorses Prosecute Journalists and sweeps social policy.  $\Pr(A\text{ wins})$ aggregates the recovered $\hat{\bfbeta}_i$ within ideology subgroups (1--3, 4, 5--7 on the 7-point scale).
  \end{minipage}
\end{figure}

Panel A sweeps A's social policy: Liberals reward staying liberal ($80\%$ at very liberal, $55\%$ at very conservative), Conservatives move oppositely ($57\%$ to $73\%$), and the lines cross near the middle.  Panel B holds policy fixed while A endorses one undemocratic action; most leave A a small overall majority, because the co-partisan benefit absorbs the cost, but Prosecute Journalists is worst (overall $48\%$, Liberal $45\%$).  Panel C combines both: co-partisan A endorses Prosecute Journalists and sweeps social policy.  Liberal support runs $69\%$ (very liberal) to $34\%$ (very conservative); Conservatives mirror it, reaching $60\%$ at very-conservative policy.  The Overall line stays between $45\%$ and $56\%$, masking the polarization---only near the middle does A hold a slim majority in both camps at once; away from it, moving on social policy trades one base for the other one-for-one.

\textbf{External validation.} Are these recovered preferences real or model artifacts?  \citet{graham2020democracy}'s first-wave survey separately asked each respondent to rate how undemocratic each practice is.  Because these ratings are excluded from $\bfZ_i$, they are an out-of-sample check: we correlate each recovered $\hat{\beta}_{i,k}$ against the respondent's own direct rating of that practice.  Pooled, the rank correlation is $0.37$ (Spearman, $n = 11{,}229$ respondent--action pairs), per-action $0.32$ (executive order) to $0.47$ (closing polling stations) and positive throughout---convergent validity against a measure the model never saw, which reduced-form AMCEs cannot provide.  It is strongest among liberals (prosecuting journalists, $0.60$) and weakest among conservatives ($0.20$), consistent with conservatives' more uniformly low direct ratings leaving less individual variation to recover.  Adding the items back to $\bfZ_i$ leaves the averages essentially unchanged (max $|\Delta\hat{\theta}_k| = 0.05$; Supplementary Materials~\ref{app:gs_covariates}), paralleling \citeauthor{graham2020democracy}'s own finding with respondents' severity ratings (their Table~2, columns 5--6).  Figure~\ref{fig:gs_validation} in Supplementary Materials~\ref{app:more_figures} plots the relationship by action and ideology.

\FloatBarrier
\subsection{The Structure of Tax Policy Preferences}\label{sec:ballard_rosa}

\citet{ballard2017structure} study American mass preferences over federal income-tax policy, asking 2{,}000 US adults to choose between pairs of hypothetical tax plans in which each plan specifies the marginal rate on six income brackets together with a five-level revenue indicator.  Their central reduced-form finding is that Americans have generally progressive preferences---opposing higher rates on the poor and supporting higher rates on the rich---but that support for a plan responds much more elastically to rates on the poor than to rates on the rich.  The seven attributes are all continuous: marginal rates on six brackets (in percentage points) and a revenue indicator rescaled to $\{-2,-1,0,1,2\}$.\footnote{The distributed replication file's derived rate variable for the \$175--375k bracket miscodes its 45\% level as 5; we rebuild all six bracket rates from the underlying coded variables and value labels, leaving the original's coded-level analyses unaffected.  As in the other applications, the original's published estimates are survey-weighted while ours are unweighted and describe the analysis sample.}  The hybrid estimator therefore recovers an individual-level \textit{slope} per bracket and respondent, producing a full preferred tax schedule $\hat{\bfbeta}_i \in \mathbb{R}^7$ per person rather than a handful of level effects.  The covariate vector $\bfZ_i$ contains 23 respondent characteristics including age, gender, party ID, education, race, income, ideology, and a battery of attitudinal and economic-belief measures.  Design dimensions: $N = 2{,}000$, $T = 8$, $p = 7$, $|\bfZ| = 23$, $NT = 16{,}000$.

Figure~\ref{fig:br_amce_fraction} reports the DML-corrected average effects.  A one-percentage-point increase in the bottom-bracket rate lowers the predicted probability of plan support by about 1.6 points (at a 50\% baseline); the same increase on the top bracket raises support by only about 0.5 points.  This more than threefold absolute asymmetry is the quantitative form of the central \citeauthor{ballard2017structure} finding.  The \$85--175k bracket is indistinguishable from zero on average---the ``dead zone''---and the revenue indicator carries a clear positive effect ($\hat{\theta}=0.14$, $p<0.01$): respondents penalize plans that reduce federal revenue and reward plans that raise it.  Panel~B reports the fraction favoring versus opposing a rate increase on each bracket, as in the democracy application.

\begin{figure}[t!]
  \centering
  \caption{Average Preference Estimates and Favor--Oppose Fractions:\\ Ballard-Rosa et al. (2017)}
  \label{fig:br_amce_fraction}
  \includegraphics[width=\textwidth]{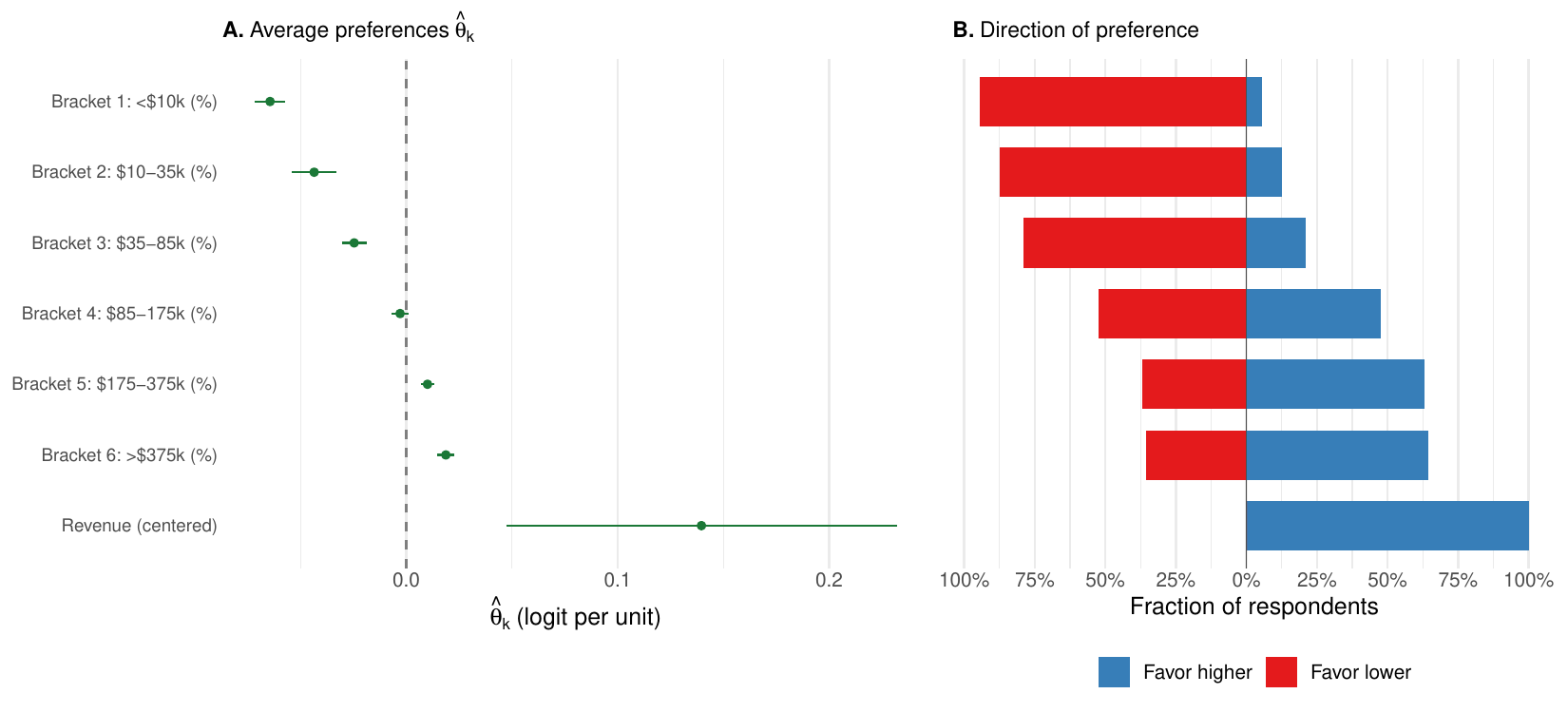}
  \begin{minipage}[]{1\textwidth}\footnotesize
      \textit{Note:} \textbf{A}: $\hat{\theta}_k$ on the logit scale (per percentage point of rate, or per unit of revenue).  \textbf{B}: Fraction with $\hat{\beta}_{i,k} > 0$ (favor raising) vs.\ $< 0$ (oppose).  The \$85--175k bracket is the most polarized---roughly half each way---despite a near-zero average.
  \end{minipage}
\end{figure}

Because each respondent has a full 7-element vector $\hat{\bfbeta}_i$, we can compute an \textit{individual-level progressivity slope}---the within-respondent regression of bracket-specific marginal utilities on the log of the bracket midpoint:
\[
s_i \;=\; \frac{\sum_{k=1}^{6}(\log m_k - \overline{\log m}) \cdot \hat{\beta}_{i,k}}{\sum_{k=1}^{6}(\log m_k - \overline{\log m})^2}.
\]
A positive $s_i$ means respondent~$i$ prefers higher rates on higher incomes; zero means a flat tax; negative means regressive.  In this sample $93\%$ of respondents have a positive slope, and $\hat{\beta}_{i,\text{top}} > \hat{\beta}_{i,\text{bottom}}$ for $91\%$ (distributional summaries of the recovered individual slopes, model-based at $T = 8$).  The progressive direction is dominant: Democrats' mean slope is $+0.0210$, Republicans' is $+0.0142$---a 48\% gap---and $90\%$ of Republican respondents still reveal progressive preferences.  This is a strong individual-level form of the \citeauthor{ballard2017structure} finding that progressive preferences cut across partisan lines, while also revealing a small minority (about $7\%$) with flat or weakly regressive revealed preferences.

\begin{figure}[!ht]
  \caption{Individual-level $\hat{\bfbeta}_i$ Tax Schedules, by Party:\\ Ballard-Rosa et al. (2017)}
  \label{fig:br_schedule_by_party}
  \centering
  \includegraphics[width=0.8\textwidth]{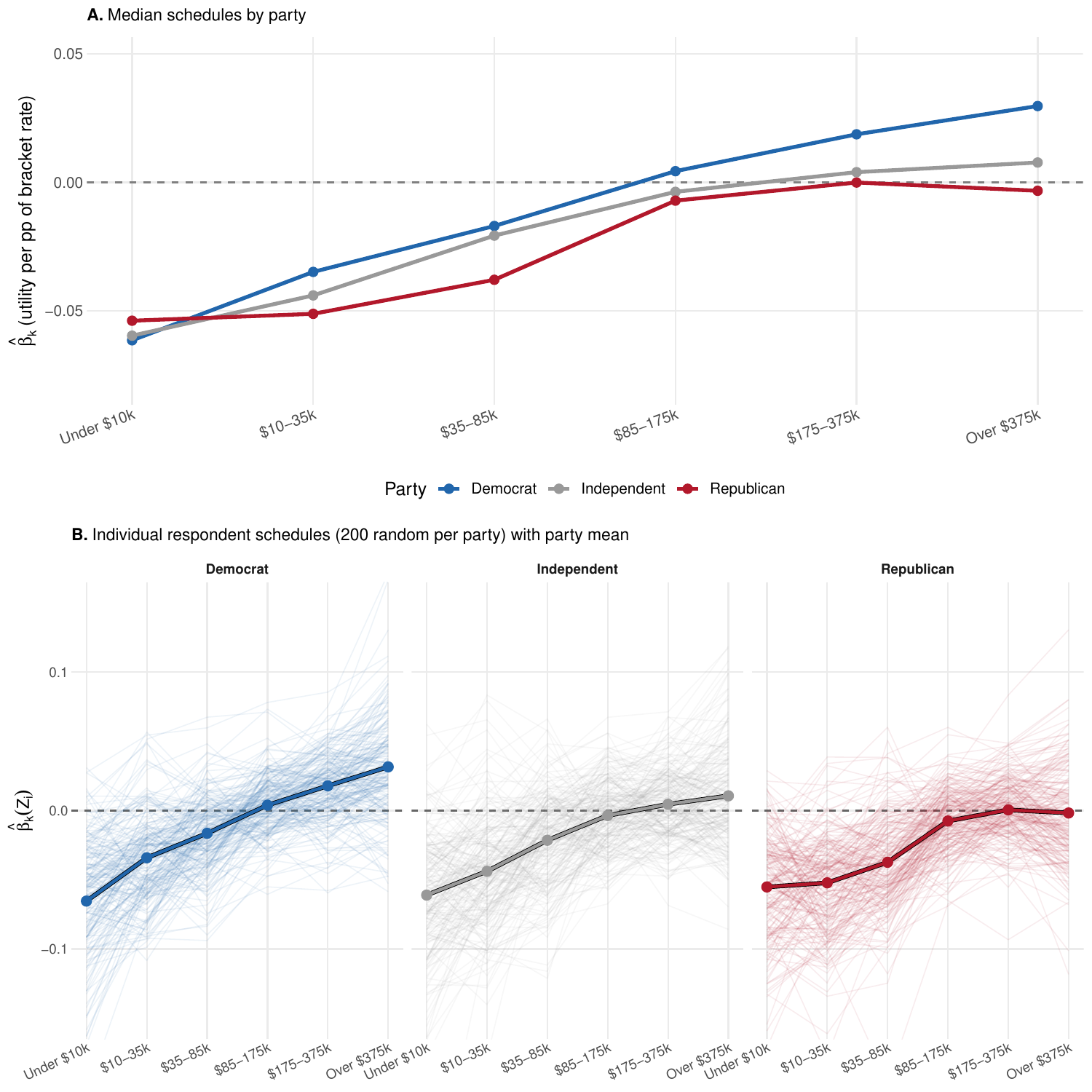}
  \begin{minipage}[]{1\textwidth}\footnotesize
        \textit{Note:} \textbf{A}: Median schedule by party.  \textbf{B}: 200 random individual schedules per party (semi-transparent), party mean in bold.  Almost every line slopes upward, but levels vary widely---within-party heterogeneity far exceeds the between-party gap.
    \end{minipage}
\end{figure}

Figure~\ref{fig:br_schedule_by_party} exposes the within-party heterogeneity the model recovers.  The party-level median schedules (Panel~A) are indistinguishable at the bottom brackets but fan out at the top: Democrats prefer higher top-bracket rates (where Republicans are essentially flat) and lower rates on the lower-middle brackets, while Republicans want smaller gaps between bottom and top---their opposition to higher rates is more uniform across the income distribution.  But the 200 individual schedules per party (Panel~B) almost all slope upward with widely varying levels---some Republicans steeper than the Democratic median, some Democrats nearly as flat as the Republican median.  Within-party heterogeneity far exceeds the between-party gap: a shift in central tendency, not a separation of populations.

The structural model also enables a variance decomposition by subgroup---the importance shares of Section~\ref{sec:graham_svolik}, here computed by party---that is invisible to reduced-form subgroup AMCEs.  Figure~\ref{fig:br_importance_party} (Supplementary Materials~\ref{app:more_figures}) reports each attribute's share of plan-choice variance: Democrats' choices are driven most by the top bracket ($29.2\%$) and the very bottom bracket, while Republicans weight the working- and middle-class brackets most heavily and place the least weight on the top bracket ($18.4\%$).  This reframes the usual partisan account: Republican opposition to progressive taxation reflects substantial weight on rates for the working and middle class, not solely protection of the rich.

Alongside the conjoint, \citet{ballard2017structure} collected self-reported ideal marginal tax rates---every respondent for the top bracket, each for one randomly assigned lower bracket---elicited outside the conjoint and never used in training, so they provide an independent benchmark.  No reduced-form AMCE estimator can be validated at the individual level, because it does not produce an individual-level preference parameter.  Among the 401 respondents answering both the bottom- and top-bracket questions, the revealed progressivity slope $s_i$ correlates $r = 0.43$ with its self-reported analog and the revealed top-minus-bottom gap $r = 0.47$; across all $N = 2{,}000$, the individual top-bracket coefficient correlates $r = 0.42$ with the self-reported ideal top rate, positive within every partisan subgroup (Democrats $0.29$, Independents $0.36$, Republicans $0.39$)---the model recovers progressivity within, not just across, parties.  An $r = 0.42$ between a recovered parameter and an independent self-report---comparable to test--retest reliabilities for such items \citep{ansolabehere2008strength}---demonstrates the value of the empirical-Bayes update with only 8 tasks per respondent.

\begin{figure}[t!]
  \centering
  \caption{Counterfactual Evaluation of Four Stylized Tax Plans from  Recovered~$\hat{\bfbeta}_i$: Ballard-Rosa et al. (2017)}
  \label{fig:br_plans_counterfactual}
  \includegraphics[width=\textwidth]{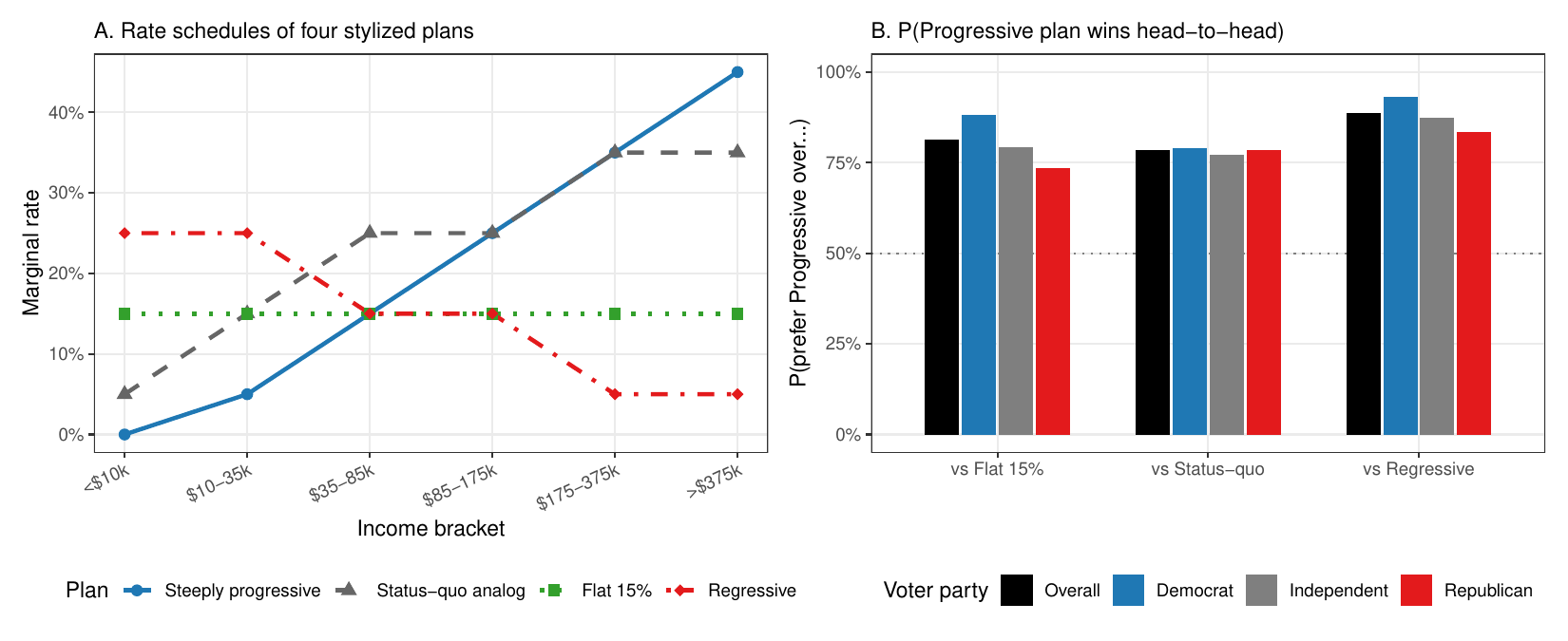}
  \begin{minipage}[]{1\textwidth}\footnotesize
      \textit{Note:} \textbf{A}: Marginal-rate schedules across the six brackets.  \textbf{B}: Predicted probability the progressive plan wins each matchup, by party.  All bars sit well above 50\%---Republicans pick the progressive over the flat plan $74\%$ of the time---so progressive preferences cut across party lines.
  \end{minipage}
\end{figure}

The individual-level preference vector $\hat{\bfbeta}_i$ also enables evaluation of any hypothetical tax schedule.  Figure~\ref{fig:br_plans_counterfactual} compares four stylized plans (Panel A): a steeply progressive plan with rates rising $0/5/15/25/35/45$ and ``much more'' revenue, a revenue-neutral status-quo analog with rates $5/15/25/25/35/35$, a revenue-neutral flat 15\% tax, and a regressive plan with rates falling $25/25/15/15/5/5$ and ``much less'' revenue.  The progressive plan yields the only positive mean systematic utility in the set.  Panel B shows head-to-head probabilistic comparisons by party.  The progressive plan wins every matchup---$81\%$ against the flat plan ($74\%$ among Republicans), $78\%$ against the status quo, $89\%$ against the regressive plan---and delivers higher systematic utility than the flat plan for $89\%$ of respondents.  Americans in this sample prefer the status quo to a flat or regressive plan, but they prefer an explicitly more progressive alternative to the status quo even more strongly.  The \citeauthor{ballard2017structure} conclusion that Americans' preferences lie ``quite close to existing policy'' is consistent with the status-quo plan ranking second-best, but it understates how much the public would prefer a more steeply progressive schedule if offered one.  A purely partisan model---``Republicans want flat, Democrats want progressive''---fits this sample poorly: $83\%$ of Republicans' recovered preferences favor a steeply progressive, higher-revenue schedule over a flat, revenue-neutral one.

\FloatBarrier
\subsection{Political Candidate Preferences}\label{sec:saha_weeks}

\citet{saha2022ambitious} study how voters evaluate candidates for public office---with attention to how gender interacts with perceived ambition---across a series of conjoint experiments.\footnote{The core finding of \citet{saha2022ambitious} is about voter taste for \emph{ambitious women} specifically, not gender in isolation; \citet{teele2018ties} provide a closer reference point for the unconditional effect of candidate gender.}  Each respondent evaluated three pairs of hypothetical candidates described by five attributes: policy agenda (3 levels), talent (7 levels), number of children (4 levels), gender (2 levels), and progressive ambition---whether the candidate has shown interest in running for higher office (2 levels), yielding $p = 13$ dummy-coded attribute levels.  We again use the default implementation from Section~\ref{sec:estimation}.  The covariate vector $\bfZ_i$ includes 19 respondent characteristics (party identification, ideology, gender, age, education, income, employment status, region, 2016 vote choice, and gender attitudes).  We use their main U.S.\ survey, originally fielded by Survey Sampling International (SSI) to 1,249 respondents; after dropping incomplete cases, this is the sparsest of our three applications: $N = 1{,}191$, $T = 3$, $p = 13$, $|\bfZ| = 19$, and $NT = 3{,}573$.

\begin{figure}[t!]
  \centering
  \caption{Average Preference Estimates and Favor--Oppose Fractions:\\Saha \& Weeks (2022)}
  \label{fig:sw_amce_fraction}
  \includegraphics[width=\textwidth]{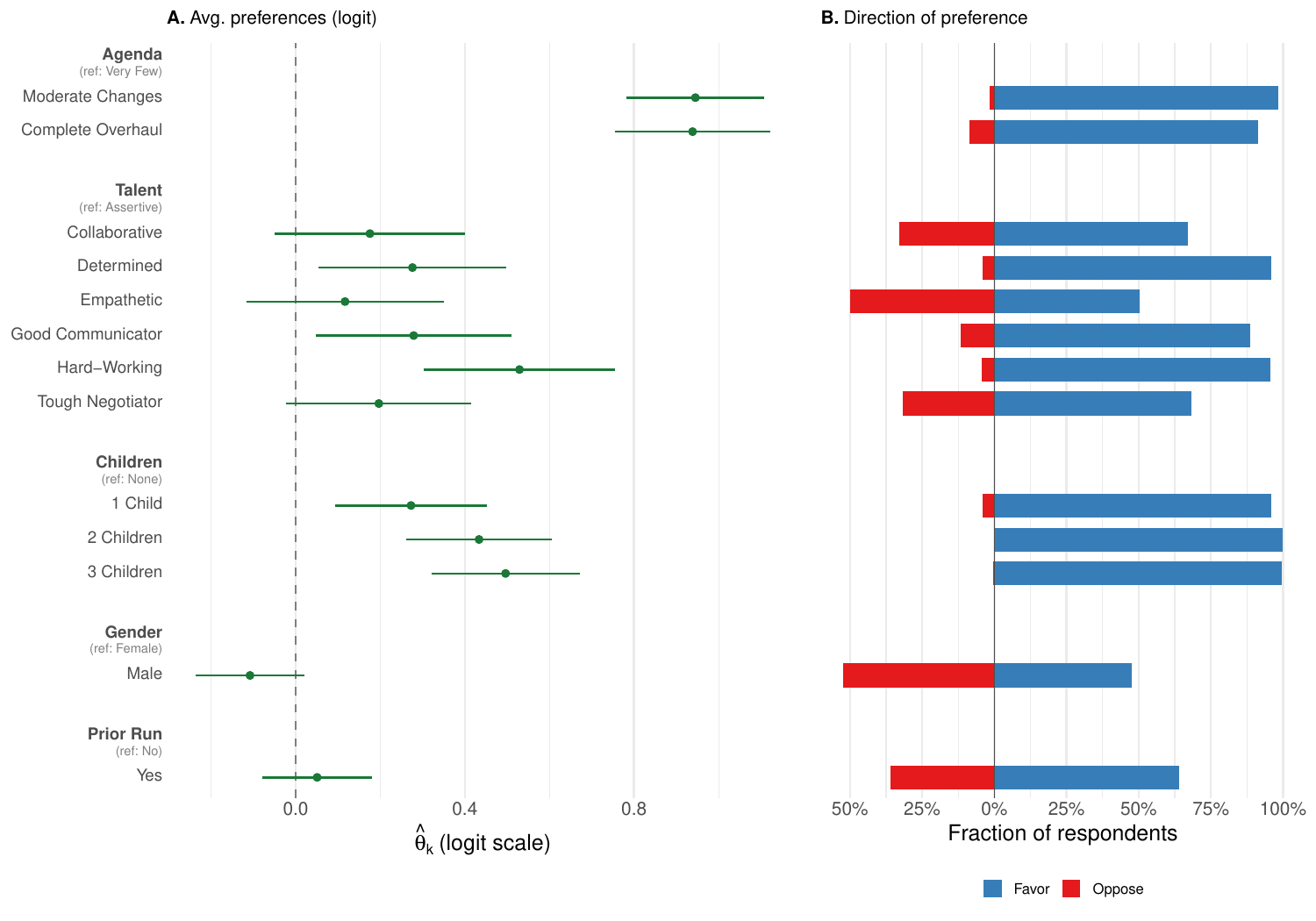}
  \begin{minipage}[]{1\textwidth}\footnotesize
      \textit{Note:} \textbf{A}: Average preference parameters $\hat{\theta}_k$ (logit scale, 95\% DML CIs).  \textbf{B}: Fraction favoring (blue) vs.\ opposing (red) each level, by the sign of $\hat{\beta}_{ik}$.  Gender is near-zero on average but split across the population, masking the partisan polarization of Figure~\ref{fig:sw_partisan_gender}.
  \end{minipage}
\end{figure}

Panel~A of Figure~\ref{fig:sw_amce_fraction} reports the DML-corrected average estimates $\hat{\theta}_k$ on the logit scale.  Policy agenda dominates---Moderate Changes ($\hat{\theta} = 0.95$) and Complete Overhaul ($\hat{\theta} = 0.94$) are both strongly preferred to Very Few Changes ($p < 0.001$)---followed by talent (Hard-Working $0.53$, Good Communicator and Determined to Succeed $0.28$).  The gender coefficient (Male relative to Female) is negative but insignificant ($\hat{\theta} = -0.11$, $[-0.24, 0.02]$), consistent with the original study's slight edge for women candidates.  Clustered and unclustered standard errors again essentially match (ratio $1.03$).

Panel~B reports the fraction favoring versus opposing each level, as in the first two applications; Figure~\ref{fig:sw_beta_ridgelines} (Supplementary Materials~\ref{app:more_figures}) shows the full $\hat{\beta}_{ik}$ densities for all 13 levels, ordered by variance.

Two patterns stand out.  First, most attributes are polarized in \emph{intensity} but not direction---nearly all respondents favor Moderate Changes and Complete Overhaul, with intensities ranging up to over $1.5$ logit units---so voters agree on the sign and the average $\hat{\theta}_k$ tracks individual preferences well.  Second, gender is the exception: its density is wide and centered near zero, the only attribute split in \emph{direction}, so the near-zero AMCE reflects cancellation rather than indifference.  What predicts which side of zero a voter falls on?

Figure~\ref{fig:sw_partisan_gender}, the central display of this application, answers it.  Panel~A reports DML estimates of the average gender effect overall and within party: indistinguishable from zero in the full sample ($\hat{\theta} = -0.11$, $[-0.24, 0.02]$), clearly negative among Democrats ($-0.32$), and positive among Republicans ($+0.15$, interval including zero).  Panel~B shows the full $\hat{\beta}_{i,\text{Male}}$ distributions behind these averages, whose party means ($-0.33$ for Democrats, $+0.23$ for Republicans, $-0.12$ for Independents) differ modestly from the debiased Panel~A subgroup estimates: Democrats prefer female ($68\%$), Republicans male ($69\%$), Independents in between.  Their density-weighted average is the null AMCE---the population mean is zero not from indifference but because two large groups disagree in opposite directions.\footnote{With only three tasks per respondent, these partisan splits are model-based posterior summaries that lean substantially on the covariate mean stage rather than design-identified quantities; Supplementary Materials~\ref{app:factorial_guidance} places distributional targets like these beyond a $T = 3$ design. We read the candidate application as illustrative of what the model can describe in a deliberately sparse design, and rest the paper's distributional conclusions primarily on the richer democracy and tax conjoints.}

\begin{figure}[t!]
  \centering
  \caption{Average and Individual-level Gender Preference Estimates:\\ Saha \& Weeks (2022)}
  \label{fig:sw_partisan_gender}
  \includegraphics[width=\textwidth]{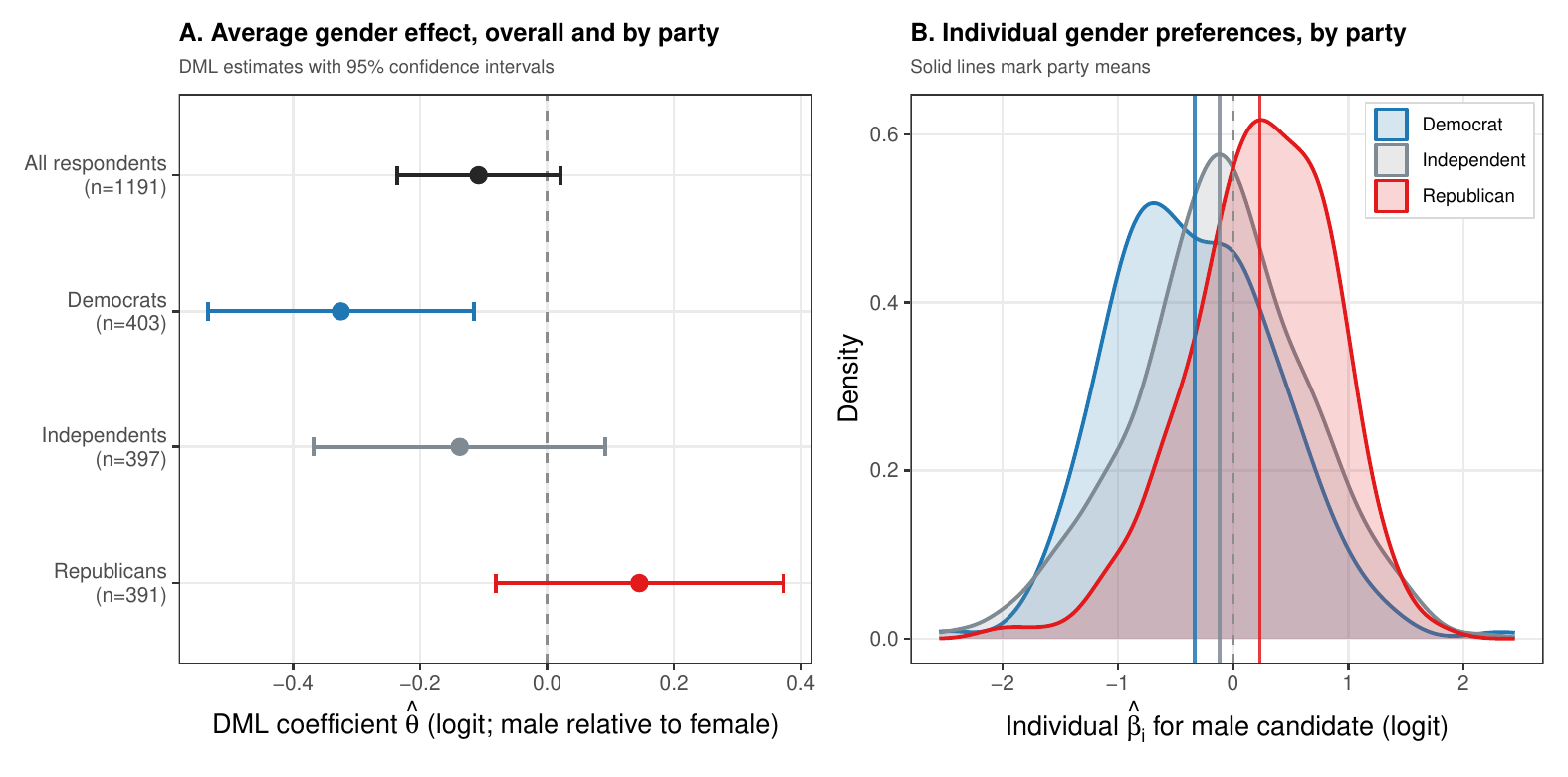}
  \begin{minipage}[]{1\textwidth}\footnotesize
        \textit{Note:} \textbf{A}: DML estimates of the average gender effect (male relative to female) with 95\% confidence intervals, overall and by respondent party.  \textbf{B}: the full distributions of the recovered $\hat{\beta}_{i,\text{Male}}$ by party; solid lines mark party means, the dashed line zero.  The near-zero population AMCE ($\hat{\theta} = -0.11$, $[-0.24, 0.02]$) masks near-mirror-image partisan preferences: Democrats average $-0.33$ ($68\%$ prefer female), Republicans $+0.23$ ($69\%$ prefer male), Independents $-0.12$ between.
  \end{minipage}
\end{figure}

Sign and spread tell us which way voters lean, but not how much each attribute drives the vote: two attributes can carry equally large coefficients yet matter very differently once we account for how far apart their levels are spread.  The importance share $\text{Imp}_{i,g}$ of Section~\ref{sec:graham_svolik} quantifies this; Figure~\ref{fig:sw_importance} (Supplementary Materials~\ref{app:more_figures}) shows its distribution.  Policy agenda is the largest share (mean $52\%$)---more than its $\hat{\theta}$ alone would suggest, since its three highly differentiated levels compound---followed by talent ($21\%$), gender ($16.5\%$), children ($8\%$), and progressive ambition ($2\%$).  Gender's share is the variance-decomposition form of the polarization finding: though $\hat{\theta}_{\text{Male}}$ is near zero, the opposed-partisan spread of $\hat{\beta}_{i,\text{Male}}$ makes it the third-largest driver of decision variance.  A reduced-form AMCE registers gender as a non-issue; the structural model reveals it as a major source of partisan disagreement.

Finally, an electoral-competition counterfactual in the style of Section~\ref{sec:graham_svolik}---Candidate A, an Empathetic Female, scaling back her agenda against a Tough Negotiator Male---shows two mechanisms at once: a near-uniform agenda penalty of about 12 points that no partisan benefit offsets, and a steady $\sim$25-point partisan gap the AMCE cannot deliver (Figure~\ref{fig:sw_agenda_counterfactual}, Supplementary Materials~\ref{app:more_figures}).  The same construction extends to any contest, including ones where multiple attributes move together.

\section{Limitations and Extensions}\label{sec:limitations}

Our framework yields interpretable quantities of interest, but at the cost of additional structural assumptions. In this section, we discuss the resulting limitations and, for each, describe an extension that relaxes the corresponding assumption. Most of these extensions are already implemented in the software or developed in the Supplementary Materials; the main open issue is formal inference for the full preference distribution.

First, the recovered preference distribution can understate its true dispersion.  The decomposition $\bfbeta_i = f(\bfZ_i) + \boldsymbol{\eta}_i$ allows residual heterogeneity within covariate strata, and the empirical-Bayes update uses a respondent's repeated choices to recover posterior summaries of $\boldsymbol{\eta}_i$.  But with finite $T$, $\hat{\bfbeta}_i$ remains a shrinkage estimator: when respondent-level data are sparse it is pulled toward $\hat{f}(\bfZ_i)$, so the recovered distribution can be too narrow when $T$ is small or the covariates are weak.  How much this matters depends on the quantity.  Distributional summaries such as polarization, and individual-level ratios such as a respondent's MRS, inherit the quality of both stages; the population averages do not, since their debiased estimators rest on the mean stage alone (Section~\ref{sec:estimation}).  In practice this means designing for both stages---rich covariates for the mean stage, enough tasks per respondent for the update; Supplementary Materials~\ref{app:factorial_guidance} translates this into design guidance by quantity of interest.

Second, utility is modeled as additive and linear in $\bfX$, which rules out attribute interactions and noncompensatory decision rules unless they are entered explicitly.  Pairwise interactions are straightforward to add but high-dimensional ($p(p+1)/2$ terms); a more systematic extension is a low-rank interaction layer in the DNN, where the network outputs both a main-effect vector $f(\bfZ) \in \mathbb{R}^p$ and a low-rank factor $\mathbf{V} \in \mathbb{R}^{p \times r}$ with $r \ll p$, so each profile's utility gains the quadratic term $\|\mathbf{V}^\top \bfX\|^2$ and the choice index becomes $\bfDelta\bfX^\top f(\bfZ) + \|\mathbf{V}^\top \bfX_A\|^2 - \|\mathbf{V}^\top \bfX_B\|^2$.  The software implements this extension alongside the explicit regularized pairwise-interaction model (Supplementary Materials~\ref{app:interactions}), and an application-based check (Table~\ref{tab:sw_interactions}) leaves the main findings unchanged.  A further restriction comes from linearity in attribute \emph{levels}, which rules out interior ideal points on ordered attributes: a respondent whose most-preferred tax rate is a middle bracket, or who votes by spatial proximity over a policy scale, is represented only by a monotone slope.  On these dimensions the additive index is more restrictive than the parametric spatial form it otherwise generalizes \citep{fowler2023moderates}.  Dummy-coding the ordered attributes or adding curvature relaxes this at the cost of dimensionality. We use the linear coding for comparability with the original studies and flag where it binds.

Relatedly, the Gumbel distribution of the taste shocks keeps the error scale fixed. Preferences are therefore identified only up to this normalization. If true utility has a respondent-specific scale $\sigma_i$, the data identify $\bfbeta_i / \sigma_i$ rather than $\bfbeta_i$ itself. As a result, apparent taste heterogeneity may partly reflect scale heterogeneity. Not all quantities are affected by this scale normalization. Signs, direction-based shares, individual marginal rates of substitution, compensating-differential thresholds, and within-respondent importance shares are invariant to a respondent-specific scale. By contrast, cross-respondent comparisons of coefficient \emph{magnitudes} are not. This includes population MRS and willingness to pay in level units, as well as displays of preference intensity. Supplementary Materials~\ref{app:scale} provides a quantity-by-quantity invariance table and a heteroskedastic-logit diagnostic. We reword the cross-group intensity comparisons flagged by this diagnostic and leave joint modeling of scale and preference heterogeneity for future research.

Third, the importance shares and counterfactual vote shares are design-dependent estimands.  They are defined relative to the uniform randomization distribution of the design and to the unweighted analysis sample; under a realistic correlated distribution of candidate profiles, or a survey-reweighted population, they would differ \citep{delacuesta2022improving, bansak2023amce}.  We report design-distribution estimands throughout, present survey-weighted versions of the headline quantities in Supplementary Materials~\ref{app:gs_covariates}, and treat sensitivity to a specific target profile pool as an application-specific extension.

Fourth, we take binary forced choice as the canonical outcome format, but the hybrid model extends naturally to others---ratings (a Gaussian observation model with a closed-form respondent-level update), multinomial choice (a softmax likelihood), and rankings (Plackett--Luce)---changing only the observation model while the preference decomposition $\bfbeta_i = f(\bfZ_i) + \boldsymbol{\eta}_i$ is unchanged.  Supplementary Materials~\ref{app:extensions} works through each.  Binary forced choice is canonical here because most consequential decisions---voting, accepting or rejecting an offer, hiring---are themselves binary, and it has accordingly been the workhorse format in applied conjoint work; the appropriate format ultimately depends on the substantive context \citep{bansak2021conjoint}.

Finally, formal inference reaches the average parameters but not yet the full preference distribution.  The DML guarantees apply to the average parameters $\theta_k$ because the orthogonal score depends on the cross-fitted mean stage $f(\bfZ)$, not on the respondent-level MAP refinement.  The same orthogonal score extends to every \emph{smooth} functional of the mean stage: composing the score with a gradient (Proposition~\ref{prop:app_master}) delivers $\sqrt{N}$-consistent, asymptotically normal inference for the average marginal effect, counterfactual win probabilities and vote shares and their comparative statics, the between-respondent importance share, and the population-mean MRS and willingness to pay; these are developed in Supplementary Materials~\ref{app:other_estimands}.

What remains genuinely open, by contrast, is formal inference for the quantities that depend on the full shape of the latent coefficient distribution---the polarization fraction $\Pr(\beta_{ik}>0)$, the compensating-differential fractions of the same threshold form, the total (within-respondent) importance share, and individual-level preference rankings.  These hinge on the residual law $F_{\eta\mid\bfZ}$, which is not point-identified at a fixed number of tasks (Proposition~\ref{prop:app_fixedT}), and in the polarization case on a non-smooth indicator.  The accompanying software implements a respondent-cluster wild bootstrap for them, whose intervals attain near-nominal coverage in simulation (around $94\%$ against a $95\%$ target), with the caveat that the bootstrap quantifies sampling variability and does not undo the bias from fixed-$T$ shrinkage toward consensus.  Simulations show these quantities are recovered well in favorable design regimes (Supplementary Materials~\ref{app:factorial_guidance}); formal inferential tools for the residual-law quantities remain for future work.

\section{Conclusion}\label{sec:conclusion}

This paper develops a hybrid structural estimator for conjoint experiments that preserves the benefits of structural modeling without its usual parametric restrictions. Combining a random utility model with a flexible DNN mean function and respondent-level empirical-Bayes updating, it recovers an individual preference vector $\hat{\bfbeta}_i$ for every respondent while retaining valid debiased inference on population averages.  It recovers reduced-form AMCE analysis as a special case under correct specification, and adds the full distribution of preferences and the structural quantities classical theories of voting and electoral competition have long invoked: MRS, WTP, compensating differentials, polarization, and predicted vote shares in head-to-head contests between arbitrary candidate position bundles.

The applications give a sense of what becomes visible.  Strong baseline opposition to undemocratic behavior coexists with substantial heterogeneity in how voters weigh democracy against policy and party, letting us rank the violations by electoral cost and identify the counterfactual position bundles where co-partisanship preserves majority support and where it does not.  Individual-level progressivity preferences are widespread across every subgroup---a clear majority of Republicans prefer a steeply progressive plan to a flat one---and the recovered slopes correlate with respondents' self-reported ideal rates at $r = 0.43$, an external validation reduced-form approaches cannot deliver.  An average gender effect near zero masks sharp partisan polarization---Democrats and Republicans hold roughly equal-and-opposite preferences whose density-weighted average is the null AMCE---yet in that sparse design gender is a leading driver of decision variance.  Each is a pattern invisible to AMCE-style aggregation.

For practice, the simulations and applications yield four recommendations:
\begin{itemize}[leftmargin=1.5em,noitemsep]
  \item \textbf{Invest in covariates first.}  How much of the preference heterogeneity the covariates explain matters more for recovering individual preferences than adding respondents or tasks, regardless of how preferences depend on the covariates, so the design need not guess the functional form.  Before fielding a conjoint, collect the respondent characteristics most likely to predict the preferences under study.
  \item \textbf{Use enough tasks per respondent.}  Averages such as AMCEs are recovered accurately even with a handful of tasks.  Claims about how preferences are spread across respondents---the share favoring a policy, or which attribute matters most---need roughly ten or more tasks plus reasonably informative covariates.  Clear-cut conclusions are dependable then; close calls, such as a 52--48 split or two nearly tied attributes, need more tasks or stronger covariates.
  \item \textbf{Treat individual-level claims as the most demanding.}  Statements about a particular respondent---their own attribute ranking, or choice probabilities computed from their own coefficients---require the recovered vector to track the truth person by person.  This is reachable when $T$ is large or when covariates are highly informative ($R_Z^2$ around $0.5$--$0.75$); when they explain only about a third of the heterogeneity, even the largest designs fall short.  Aggregate and distributional claims succeed with much less, so researchers planning individual-level claims should prioritize covariate quality and a sufficiently large $T$.
  \item \textbf{Design for tradeoff quantities.}  Ratios such as the MRS and WTP are unstable for respondents whose denominator coefficient is near zero, so prefer population-level versions, which average first and carry valid debiased confidence intervals.  Compensating differentials compare sums rather than ratios and are recovered as well as the coefficients themselves.  When tradeoffs are the goal, draw the key attributes from continuous ranges, as in the tax-policy application, and plan on large samples ($N \geq 5{,}000$).
\end{itemize}
Supplementary Materials~\ref{app:factorial_guidance} maps these recommendations quantity by quantity.

This approach lets empirical work catch up with theory.  Classical models of electoral competition---spatial voting, candidate moderation, valence tradeoffs, median-voter dynamics---have for decades predicted how voters respond when candidates move through concrete policy positions, but those predictions have lived mostly in formal theory or in observational studies with limited identification.  With individual preference vectors recovered from randomized conjoint contrasts, the predictions become directly testable: how do predicted vote shares move as candidates moderate, who gains and who loses, where do voters cross party lines, and which position bundles maximize support for which constituencies.  It also sharpens inferences about cross-pressure, polarization, and preference-distribution shape in domains where conjoint designs are standard but analyses have stopped at marginal effects. We implement the method in \ifanon an R package\else the R package \texttt{sconjoint}\fi, which provides functions for estimation, inference, and visualization.\footnote{\ifanon The software package and a user tutorial will be made available upon publication.\else A user tutorial is available at \url{https://yiqingxu.org/packages/sconjoint/}. The package was developed using \href{https://statsclaw.ai/}{\texttt{StatsClaw}}, an AI-collaborative workflow described in \citet{qinxu2026statsclaw}.\fi}

\vspace{1em}

\onehalfspacing
\bibliographystyle{apsr}
\bibliography{refs}

@article{abramson2022voter,
  author    = {Abramson, Scott F. and Ko\c{c}ak, Korhan and Magazinnik, Asya},
  title     = {What Do We Learn about Voter Preferences from Conjoint Experiments?},
  journal   = {American Journal of Political Science},
  year      = {2022},
  volume    = {66},
  number    = {4},
  pages     = {1008--1020}
}

@article{alvarez1998multiparty,
  author    = {Alvarez, R. Michael and Nagler, Jonathan},
  title     = {When Politics and Models Collide: Estimating Models of Multiparty Elections},
  journal   = {American Journal of Political Science},
  year      = {1998},
  volume    = {42},
  number    = {1},
  pages     = {55--96}
}

@book{austensmith1999positive,
  author    = {Austen-Smith, David and Banks, Jeffrey S.},
  title     = {Positive Political Theory I: Collective Preference},
  publisher = {University of Michigan Press},
  year      = {1999}
}

@article{bansak2016europeans,
  author    = {Bansak, Kirk and Hainmueller, Jens and Hangartner, Dominik},
  title     = {How Economic, Humanitarian, and Religious Concerns Shape European Attitudes Toward Asylum Seekers},
  journal   = {Science},
  year      = {2016},
  volume    = {354},
  number    = {6309},
  pages     = {217--222}
}

@article{bansak2023amce,
  author    = {Bansak, Kirk and Hainmueller, Jens and Hopkins, Daniel J. and Yamamoto, Teppei},
  title     = {Using Conjoint Experiments to Analyze Election Outcomes: The Essential Role of the Average Marginal Component Effect},
  journal   = {Political Analysis},
  year      = {2023},
  volume    = {31},
  number    = {4},
  pages     = {500--518}
}

@incollection{bansak2021conjoint,
  author    = {Bansak, Kirk and Hainmueller, Jens and Hopkins, Daniel J. and Yamamoto, Teppei},
  title     = {Conjoint Survey Experiments},
  booktitle = {Advances in Experimental Political Science},
  editor    = {Druckman, James N. and Green, Donald P.},
  publisher = {Cambridge University Press},
  year      = {2021},
  pages     = {19--41}
}

@article{ballard2017structure,
  author    = {Ballard-Rosa, Cameron and Martin, Lucy and Scheve, Kenneth},
  title     = {The Structure of {A}merican Income Tax Policy Preferences},
  journal   = {Journal of Politics},
  year      = {2017},
  volume    = {79},
  number    = {1},
  pages     = {1--16}
}

@article{chernozhukov2018double,
  author    = {Chernozhukov, Victor and Chetverikov, Denis and Demirer, Mert and Duflo, Esther and Hansen, Christian and Newey, Whitney and Robins, James},
  title     = {Double/Debiased Machine Learning for Treatment and Structural Parameters},
  journal   = {The Econometrics Journal},
  year      = {2018},
  volume    = {21},
  number    = {1},
  pages     = {C1--C68}
}

@article{clinton2004statistical,
  author    = {Clinton, Joshua and Jackman, Simon and Rivers, Douglas},
  title     = {The Statistical Analysis of Roll Call Data},
  journal   = {American Political Science Review},
  year      = {2004},
  volume    = {98},
  number    = {2},
  pages     = {355--370}
}

@article{delacuesta2022improving,
  author    = {de la Cuesta, Brandon and Egami, Naoki and Imai, Kosuke},
  title     = {Improving the External Validity of Conjoint Analysis: The Essential Role of Profile Distribution},
  journal   = {Political Analysis},
  year      = {2022},
  volume    = {30},
  number    = {1},
  pages     = {19--45}
}

@book{enelow1984spatial,
  author    = {Enelow, James M. and Hinich, Melvin J.},
  title     = {The Spatial Theory of Voting: An Introduction},
  publisher = {Cambridge University Press},
  year      = {1984}
}

@article{farrell2021deep,
  author    = {Farrell, Max H. and Liang, Tengyuan and Misra, Sanjog},
  title     = {Deep Neural Networks for Estimation and Inference},
  journal   = {Econometrica},
  year      = {2021},
  volume    = {89},
  number    = {1},
  pages     = {181--213}
}

@unpublished{farrell2025deep,
  author    = {Farrell, Max H. and Liang, Tengyuan and Misra, Sanjog},
  title     = {Deep Learning for Individual Heterogeneity: An Automatic Inference Framework},
  year      = {2025},
  note      = {Working paper, arXiv:2010.14694}
}

@article{goplerud2025estimating,
  author    = {Goplerud, Max and Imai, Kosuke and Pashley, Nicole E.},
  title     = {Estimating Heterogeneous Causal Effects of High-Dimensional Treatments: Application to Conjoint Analysis},
  journal   = {Annals of Applied Statistics},
  year      = {2025},
  volume    = {19},
  number    = {2},
  pages     = {866--888}
}

@article{graham2020democracy,
  author    = {Graham, Matthew H. and Svolik, Milan W.},
  title     = {Democracy in {A}merica? {P}artisanship, Polarization, and the Robustness of Support for Democracy in the {U}nited {S}tates},
  journal   = {American Political Science Review},
  year      = {2020},
  volume    = {114},
  number    = {2},
  pages     = {392--409}
}

@article{green1971conjoint,
  author    = {Green, Paul E. and Rao, Vithala R.},
  title     = {Conjoint Measurement for Quantifying Judgmental Data},
  journal   = {Journal of Marketing Research},
  year      = {1971},
  volume    = {8},
  number    = {3},
  pages     = {355--363}
}

@article{green1978conjoint,
  author    = {Green, Paul E. and Srinivasan, V.},
  title     = {Conjoint Analysis in Consumer Research: Issues and Outlook},
  journal   = {Journal of Consumer Research},
  year      = {1978},
  volume    = {5},
  number    = {2},
  pages     = {103--123}
}

@article{green1990conjoint,
  author    = {Green, Paul E. and Srinivasan, V.},
  title     = {Conjoint Analysis in Marketing: New Developments with Implications for Research and Practice},
  journal   = {Journal of Marketing},
  year      = {1990},
  volume    = {54},
  number    = {4},
  pages     = {3--19}
}

@article{greenhalgh1981conjoint,
  author    = {Greenhalgh, Leonard and Neslin, Scott A.},
  title     = {Conjoint Analysis of Negotiator Preferences},
  journal   = {Journal of Conflict Resolution},
  year      = {1981},
  volume    = {25},
  number    = {2},
  pages     = {301--327}
}

@book{hinich1997analytical,
  author    = {Hinich, Melvin J. and Munger, Michael C.},
  title     = {Analytical Politics},
  publisher = {Cambridge University Press},
  year      = {1997}
}

@article{hainmueller2014causal,
  author    = {Hainmueller, Jens and Hopkins, Daniel J. and Yamamoto, Teppei},
  title     = {Causal Inference in Conjoint Analysis: Understanding Multidimensional Choices via Stated Preference Experiments},
  journal   = {Political Analysis},
  year      = {2014},
  volume    = {22},
  number    = {1},
  pages     = {1--30}
}

@article{hainmueller2015hidden,
  author    = {Hainmueller, Jens and Hopkins, Daniel J.},
  title     = {The Hidden {A}merican Immigration Consensus: A Conjoint Analysis of Attitudes toward Immigrants},
  journal   = {American Journal of Political Science},
  year      = {2015},
  volume    = {59},
  number    = {3},
  pages     = {529--548}
}

@article{ham2024machine,
  author    = {Ham, Dae Woong and Imai, Kosuke and Janson, Lucas},
  title     = {Using Machine Learning to Test Causal Hypotheses in Conjoint Analysis},
  journal   = {Political Analysis},
  year      = {2024},
  volume    = {32},
  number    = {3},
  pages     = {329--344}
}

@article{leeper2020measuring,
  author    = {Leeper, Thomas J. and Hobolt, Sara B. and Tilley, James},
  title     = {Measuring Subgroup Preferences in Conjoint Experiments},
  journal   = {Political Analysis},
  year      = {2020},
  volume    = {28},
  number    = {2},
  pages     = {207--221}
}

@book{louviere2000stated,
  author    = {Louviere, Jordan J. and Hensher, David A. and Swait, Joffre D.},
  title     = {Stated Choice Methods: Analysis and Applications},
  publisher = {Cambridge University Press},
  year      = {2000}
}

@article{luce1964simultaneous,
  author    = {Luce, R. Duncan and Tukey, John W.},
  title     = {Simultaneous Conjoint Measurement: A New Type of Fundamental Measurement},
  journal   = {Journal of Mathematical Psychology},
  year      = {1964},
  volume    = {1},
  number    = {1},
  pages     = {1--27}
}

@article{martin2002dynamic,
  author    = {Martin, Andrew D. and Quinn, Kevin M.},
  title     = {Dynamic Ideal Point Estimation via {M}arkov Chain {M}onte {C}arlo for the {U.S.} {S}upreme {C}ourt, 1953--1999},
  journal   = {Political Analysis},
  year      = {2002},
  volume    = {10},
  number    = {2},
  pages     = {134--153}
}

@incollection{mcfadden1974conditional,
  author    = {McFadden, Daniel},
  title     = {Conditional Logit Analysis of Qualitative Choice Behavior},
  booktitle = {Frontiers in Econometrics},
  editor    = {Zarembka, Paul},
  publisher = {Academic Press},
  year      = {1974},
  pages     = {105--142}
}

@incollection{mcfadden1981econometric,
  author    = {McFadden, Daniel},
  title     = {Econometric Models of Probabilistic Choice},
  booktitle = {Structural Analysis of Discrete Data with Econometric Applications},
  editor    = {Manski, Charles F. and McFadden, Daniel},
  publisher = {MIT Press},
  year      = {1981},
  pages     = {198--272}
}

@article{palfrey1987relationship,
  author    = {Palfrey, Thomas R. and Poole, Keith T.},
  title     = {The Relationship between Information, Ideology, and Voting Behavior},
  journal   = {American Journal of Political Science},
  year      = {1987},
  volume    = {31},
  number    = {3},
  pages     = {511--530}
}

@article{poole1985spatial,
  author    = {Poole, Keith T. and Rosenthal, Howard},
  title     = {A Spatial Model for Legislative Roll Call Analysis},
  journal   = {American Journal of Political Science},
  year      = {1985},
  volume    = {29},
  number    = {2},
  pages     = {357--384}
}

@article{rivers1988heterogeneity,
  author    = {Rivers, Douglas},
  title     = {Heterogeneity in Models of Electoral Choice},
  journal   = {American Journal of Political Science},
  year      = {1988},
  volume    = {32},
  number    = {3},
  pages     = {737--757}
}

@article{robinson2024detect,
  author    = {Robinson, Thomas S. and Duch, Raymond},
  title     = {How to Detect Heterogeneity in Conjoint Experiments},
  journal   = {Journal of Politics},
  year      = {2024},
  volume    = {86},
  number    = {2},
  pages     = {412--427}
}

@article{saha2022ambitious,
  author    = {Saha, Sparsha and Weeks, Ana Catalano},
  title     = {Ambitious Women: Gender and Voter Perceptions of Candidate Ambition},
  journal   = {Political Behavior},
  year      = {2022},
  volume    = {44},
  pages     = {779--805},
  number    = {4},
  doi       = {10.1007/s11109-020-09636-z}
}

@article{teele2018ties,
  author    = {Teele, Dawn Langan and Kalla, Joshua L. and Rosenbluth, Frances},
  title     = {The Ties That Double Bind: Social Roles and Women's Underrepresentation in Politics},
  journal   = {American Political Science Review},
  year      = {2018},
  volume    = {112},
  number    = {3},
  pages     = {525--541}
}

@article{shamir1995competing,
  author    = {Shamir, Michal and Shamir, Jacob},
  title     = {Competing Values in Public Opinion: A Conjoint Analysis},
  journal   = {Political Behavior},
  year      = {1995},
  volume    = {17},
  number    = {1},
  pages     = {107--133}
}

@book{train2009discrete,
  author    = {Train, Kenneth E.},
  title     = {Discrete Choice Methods with Simulation},
  edition   = {2nd},
  publisher = {Cambridge University Press},
  year      = {2009}
}

@misc{qinxu2026statsclaw,
  title={StatsClaw: An AI-Collaborative Workflow for Statistical Software Development},
  author={Qin, Tianzhu and Xu, Yiqing},
  year={2026},
  note={arXiv:2604.04871},
  eprint={2604.04871},
  archivePrefix={arXiv},
  primaryClass={cs.SE},
  doi={10.48550/arXiv.2604.04871},
  url={https://arxiv.org/abs/2604.04871}
}

@article{black1948rationale,
  author    = {Black, Duncan},
  title     = {On the Rationale of Group Decision-making},
  journal   = {Journal of Political Economy},
  year      = {1948},
  volume    = {56},
  number    = {1},
  pages     = {23--34}
}

@book{downs1957economic,
  author    = {Downs, Anthony},
  title     = {An Economic Theory of Democracy},
  publisher = {Harper and Row},
  year      = {1957},
  address   = {New York}
}

@article{zhirkov2022individual,
  author  = {Zhirkov, Kirill},
  title   = {Estimating and Using Individual Marginal Component Effects from Conjoint Experiments},
  journal = {Political Analysis},
  year    = {2022}, volume = {30}, number = {2}, pages = {236--249}
}

@article{lenk1996hierarchical,
  author  = {Lenk, Peter J. and DeSarbo, Wayne S. and Green, Paul E. and Young, Martin R.},
  title   = {Hierarchical {B}ayes Conjoint Analysis: Recovery of Partworth Heterogeneity from Reduced Experimental Designs},
  journal = {Marketing Science}, year = {1996}, volume = {15}, number = {2}, pages = {173--191}
}

@book{rossi2005bayesian,
  author    = {Rossi, Peter E. and Allenby, Greg M. and McCulloch, Robert},
  title     = {Bayesian Statistics and Marketing}, publisher = {Wiley}, year = {2005}
}

@article{fowler2023moderates,
  author  = {Fowler, Anthony and Hill, Seth J. and Lewis, Jeffrey B. and Tausanovitch, Chris and Vavreck, Lynn and Warshaw, Christopher},
  title   = {Moderates}, journal = {American Political Science Review},
  year    = {2023}, volume = {117}, number = {2}, pages = {643--660}
}

@article{ansolabehere2008strength,
  author  = {Ansolabehere, Stephen and Rodden, Jonathan and Snyder, James M.},
  title   = {The Strength of Issues: Using Multiple Measures to Gauge Preference Stability, Ideological Constraint, and Issue Voting},
  journal = {American Political Science Review}, year = {2008}, volume = {102}, number = {2}, pages = {215--232}
}

\onehalfspacing
\newpage
\appendix
\setcounter{page}{1}
\setcounter{table}{0}
\setcounter{figure}{0}
\setcounter{equation}{0}
\setcounter{footnote}{0}
\renewcommand\thetable{A\arabic{table}}
\renewcommand\thefigure{A\arabic{figure}}
\renewcommand{\thepage}{A-\arabic{page}}
\renewcommand{\theequation}{A\arabic{equation}}
\renewcommand{\thefootnote}{A\arabic{footnote}}

\begin{center}
{\LARGE\bfseries Supplementary Materials}
\end{center}
\vspace{0.5em}

\noindent These supplementary materials contain the estimation details (\ref{app:estimation_details}), the asymptotic theory and debiased-inference results (\ref{sec:asymptotics} and~\ref{app:other_estimands}), the benchmark validation and the design-guidance simulation behind the checks of Section~\ref{sec:estimation} (\ref{app:validation} and~\ref{app:factorial}), additional results for the three applications (\ref{app:more_figures}), and extensions to other outcome formats (\ref{app:extensions}).\medskip

\startcontents[supplementary]
\printcontents[supplementary]{}{1}{\setcounter{tocdepth}{2}}
\vspace{1em}

\counterwithin{figure}{section}
\counterwithin{table}{section}
\vspace{3em}

\section{Estimation Details and Implementation}\label{app:estimation_details}

This section collects the technical details behind Section~\ref{sec:estimation}: the DNN mean stage, the respondent-level empirical-Bayes update, and the main alternative stage-2 estimators used in the simulations.  Table~\ref{tab:comparison} situates the hybrid estimator relative to existing approaches.

\begin{table}[!h]
\centering
\footnotesize
\caption{Comparison of Approaches to Conjoint Analysis}\label{tab:comparison}
\resizebox*{\textwidth}{!}{
{\setlength{\tabcolsep}{4pt}
\begin{tabular}{@{}l p{0.19\textwidth} cccc cc@{}}
\toprule
& \multicolumn{4}{c}{\textit{Structural}} & \multicolumn{2}{c}{\textit{Reduced-form}} \\
\cmidrule(lr){2-5} \cmidrule(lr){6-7}
& Hybrid DNN-EB & Homog. & Mixed & Hier. & & BART/ \\
& (ours)$^\dagger$ & Logit$^\dagger$ & Logit$^\dagger$ & Bayes$^\dagger$ & AMCE & CF \\[\medskipamount]
\midrule
Utility model              & Yes & Yes & Yes & Yes & Implicit & No \\[\medskipamount]
Preference heterogeneity   & Flexible mean + EB residual & None & Parametric & Parametric & None & Nonparam. \\[\medskipamount]
Systematic by $\bfZ$       & Flexible & --- & Linear & Linear & --- & Flexible \\[\medskipamount]
Individual $\bfbeta_i$     & \shortstack[l]{EB posterior\\mode} & --- & Posterior$^*$ & Posterior$^*$ & --- & --- \\[\medskipamount]
Structural quantities            & Yes & Yes & Yes & Yes & No & No \\[\medskipamount]
Distributional assumption  & \shortstack[l]{Working Gaussian\\prior} & --- & Normal & Normal & None & None \\[\medskipamount]
Inference on average quantities    & DML & MLE & MLE & MCMC & OLS & --- \\
\bottomrule
\end{tabular}
}}
\begin{minipage}[]{1\textwidth}\footnotesize
    \textit{Note:} Structural quantities include MRS, WTP, counterfactual choice probabilities, and compensating differentials---all of which require the utility model and individual preference vectors.  Our method combines a flexible DNN mean stage with respondent-level empirical-Bayes (EB) updating.  $^*$Individual-level posteriors from Mixed Logit and hierarchical Bayes are heavily shrunk toward the population mean when the number of tasks per respondent is small relative to the number of parameters.  BART/CF denotes tree-based machine-learning estimators such as Bayesian additive regression trees (BART) \citep{robinson2024detect} and causal forests (CF).  ``---'' indicates the feature is not applicable or not provided by the method.  Inference-row abbreviations: DML (double machine learning), MLE (maximum likelihood estimation), MCMC (Markov chain Monte Carlo), and OLS (ordinary least squares).
\end{minipage}
\end{table}

\subsection{Mean-Stage DNN}

The DNN mean stage consists of a \textit{feature network} that maps respondent characteristics to preference parameters, and a \textit{model layer} that embeds these parameters in the structural logit model.  The feature network takes $\bfZ \in \mathbb{R}^{p_Z}$ as input and applies $L$ hidden layers with ReLU activations:
\[
  \bfh_\ell = \text{ReLU}(\bfW_\ell \bfh_{\ell-1} + \bfb_\ell), \qquad \ell = 1,\ldots,L,
\]
with $\bfh_0 = \bfZ$ and $\text{ReLU}(x)=\max(0,x)$ applied elementwise.  The final hidden layer maps to the $p$-dimensional mean preference vector
\[
  f(\bfZ) = \bfW_{L+1}\bfh_L + \bfb_{L+1} \in \mathbb{R}^p,
\]
with no output activation, since the preference coefficients are unrestricted in sign and magnitude.  The model layer then computes the structural logit index $\bfDelta\bfX^\top f(\bfZ)$.

The network is trained by minimizing the mean-stage conditional-logit binary cross-entropy loss
\[
  \mathcal{L}
  =
  -\frac{1}{n}
  \sum_{i=1}^{N}\sum_{t=1}^{T_i}
  \left[
    Y_{it}\log G\!\left(\bfDelta\bfX_{it}^\top f(\bfZ_i)\right)
    +
    (1-Y_{it})\log\!\left(1-G\!\left(\bfDelta\bfX_{it}^\top f(\bfZ_i)\right)\right)
  \right],
  \quad n=\sum_{i=1}^{N}T_i.
\]
This is the negative mean-stage conditional-logit log-likelihood for the model $\Pr(Y_{it}=1\mid \bfDelta\bfX_{it},\bfZ_i)=G(\bfDelta\bfX_{it}^\top f(\bfZ_i))$.  In our implementation, optimization uses Adam in \texttt{torch}, all cross-fitting is performed at the respondent level so that every respondent's tasks remain in the same fold, and the binary-choice production runs use no inner-split early stopping together with a fixed budget of $1{,}000$ epochs per fold.

Cross-fitting proceeds as follows.  We partition respondents into $K$ folds, train the DNN on $K-1$ folds, and predict $\hat f(\bfZ_i)$ only for respondents in the held-out fold.  Cycling through all folds yields an out-of-fold mean preference vector for every respondent.  In the locked binary implementation we repeat this full respondent-level cross-fit twice with independent seeds and average the two out-of-fold predictions before any downstream step.  This ensemble mean is the object used both in the orthogonal score for population averages and in the prior mean for the respondent-level update.

The binary-choice production rule further regularizes the DNN mean stage with an adaptive ridge penalty $\text{weight decay}=K/n$, where $K=15$ if $n/p<300$ and $K=25$ otherwise.  In the larger-$p$ corner with $p\geq 40$ and $n\geq 80{,}000$, we also widen the final hidden layer from $32$ to $64$ units.  None of the three applications in the main text trigger this architecture override, but it improves the hardest high-$p$ benchmark cells and is therefore part of the locked implementation.

\subsection{Alternative First-Stage Learners}\label{app:learners}

The debiased inference of Supplementary Materials~\ref{app:other_estimands} depends on the first stage only through the cross-fitted mean preference vector $\hat f(\bfZ_i)$ and the local information matrix $\hat{\bfLambda}(\bfZ_i)$.  The orthogonal score, the respondent-clustered variance, and every downstream quantity are computed from these two objects and are otherwise indifferent to how $\hat f$ was obtained.  Neyman orthogonality asks only that the first stage converge at the $o_p(N^{-1/4})$ rate of Supplementary Materials~\ref{app:asymptotic_results}, which many flexible learners can meet.  We use a deep network throughout the applications because it scales naturally to rich $\bfZ$ and to larger attribute spaces, but it is not the only admissible choice.  The accompanying software implements two alternatives, which we use here to show that the validity of the debiased intervals does not hinge on the particular learner.

The first is an elastic-net logit.  We expand each moderator into a natural-spline basis $\phi(\bfZ_i)$ (with pairwise products across moderators), interact it with the attribute contrasts, and regress the binary choice on $[\bfDelta\bfX_{it},\,\bfDelta\bfX_{it}\otimes\phi(\bfZ_i)]$ under an elastic-net penalty; the implied preference vector is read off the fitted index as $\hat f_k(\bfZ_i)=\hat B^{0}_k+\langle\hat B^{\mathrm{int}}_{k},\phi(\bfZ_i)\rangle$.  The basis expansion lets preferences vary nonlinearly in the moderators, and the penalty selects which expanded terms to retain, so the fit is flexible while remaining transparent and fast.  The second is a GRF.  We grow a multivariate forest of per-respondent preference coefficients on $\bfZ_i$, and for each respondent use the forest's adaptive neighbor weights to solve a locally weighted logit moment, which gives a forest-localized estimate of $\hat f(\bfZ_i)$.  Unlike the elastic net, the GRF captures nonlinear and interactive heterogeneity without a pre-specified basis, at the cost of explicit coefficients.

Both learners enter the same cross-fitting, $\hat{\bfLambda}$, orthogonal-score, and clustered-variance computations as the network; only $\hat f$ changes.  Table~\ref{tab:learner_coverage} reports Monte-Carlo coverage of the nominal $95\%$ debiased intervals on a data-generating process in which preferences vary smoothly with three moderators and each respondent completes eight tasks ($N=1{,}000$ respondents, $300$ replications).  Across the estimand suite---the average preference parameter, the AME, counterfactual vote share and its comparative static, the attribute-importance share, willingness to pay, and the marginal rate of substitution---all three learners are reasonably calibrated, with the deep network the most tightly so; the elastic-net and forest learners run a few points lower, most visibly on the hardest ratio and comparative-static quantities.  The marginal rate of substitution, a ratio of average parameters, is the hardest quantity for every learner, consistent with the ordering in Supplementary Materials~\ref{sec:asymptotics}.  Preference polarization, the open estimand of Supplementary Materials~\ref{app:other_estimands}, is reported only through the respondent-cluster wild bootstrap and is omitted here.

The choice of learner does matter when preferences vary nonlinearly with the moderators.  Table~\ref{tab:learner_nonlinear} reports coverage of the average preference parameter $\theta_k$ on the simple process above and on a nonlinear one whose mean preference surface is genuinely curved.  An elastic net using only the raw moderators is well-calibrated when preferences are near-linear ($0.95$) but under-covers sharply when they are not ($0.68$): a linear first stage cannot track the curved projection, and the resulting bias does not vanish as the sample grows.  Expanding the moderators into a spline basis restores nominal coverage ($0.97$).  The deep network and the GRF, flexible by construction, are robust to both surfaces.  This is why the elastic-net first stage expands its basis automatically, and why we adopt the deep network throughout the applications: it is the most robust to the unknown shape of preference heterogeneity.

\begin{table}[t]
\centering
\footnotesize
\caption{Monte-Carlo Coverage of Nominal $95\%$ Debiased Confidence Intervals across First-stage Learners}
\label{tab:learner_coverage}
\begin{tabular}{@{}l ccc@{}}
\toprule
Estimand & DNN & Elastic net & GRF \\
\midrule
Average preference parameter $\theta_k$ & 0.94 & 0.88 & 0.89 \\
Average marginal effect (probability)   & 0.95 & 0.91 & 0.91 \\
Counterfactual vote share               & 0.96 & 0.90 & 0.91 \\
Vote-share comparative static           & 0.92 & 0.85 & 0.89 \\
Attribute-importance share              & 0.99 & 0.95 & 0.95 \\
Willingness to pay                      & 0.94 & 0.93 & 0.94 \\
Marginal rate of substitution           & 0.85 & 0.84 & 0.84 \\
\bottomrule
\end{tabular}
\begin{minipage}[]{1\textwidth}\footnotesize
    \textit{Note:} Empirical coverage of nominal $95\%$ confidence intervals over $300$ Monte-Carlo replications at $N=1{,}000$ respondents and $T=8$ tasks, on a data-generating process in which the mean preference vector varies smoothly with three respondent moderators.  The elastic net uses its default spline-expanded moderator basis.  All three learners feed the identical cross-fitting and orthogonal-score computations; only the first-stage mean estimate differs.  The marginal rate of substitution is a ratio of average parameters and is the hardest case in the hierarchy of Supplementary Materials~\ref{sec:asymptotics}.
\end{minipage}
\end{table}

\begin{table}[t]
\centering
\footnotesize
\caption{Coverage of the Average Preference Parameter $\theta_k$ under a Simple and a Nonlinear Data-generating Process, by First-stage Learner}
\label{tab:learner_nonlinear}
\begin{tabular}{@{}l cc@{}}
\toprule
First-stage learner & Simple DGP & Nonlinear DGP \\
\midrule
Deep neural network          & 0.94 & 0.94 \\
Elastic net (linear basis)   & 0.95 & 0.68 \\
Elastic net (spline basis)   & 0.88 & 0.97 \\
GRF    & 0.89 & 0.89 \\
\bottomrule
\end{tabular}
\par\vspace{3pt}
\begin{minipage}[]{1\textwidth}\footnotesize
    \textit{Note:} Empirical coverage of nominal $95\%$ intervals for the average preference parameter $\theta_k$ over $300$ Monte-Carlo replications at $N=1{,}000$ respondents and $T=8$ tasks.  The simple process has preferences near-linear in the three moderators; the nonlinear process has a genuinely curved mean preference surface.  The linear-basis elastic net under-covers under nonlinearity because its first-stage bias does not vanish with sample size; the spline-expanded basis (the package default) restores nominal coverage.  The deep network and the GRF are flexible by construction and are robust to both surfaces.
\end{minipage}
\end{table}

\subsection{Respondent-Level Empirical-Bayes Update}

The conceptual second stage is
\[
  \bfbeta_i = f(\bfZ_i) + \boldsymbol{\eta}_i, \qquad \boldsymbol{\eta}_i \sim N(\mathbf{0}, \boldsymbol{\Sigma}_\eta),
\]
but in the low-task regime typical of conjoint studies we do not estimate a full unrestricted $\boldsymbol{\Sigma}_\eta$.  Instead, we average two independently seeded full cross-fits to form an ensemble prior mean:
\[
  \hat f_{\text{ens}}(\bfZ_i) = \frac{1}{2}\left(\hat f^{(1)}(\bfZ_i) + \hat f^{(2)}(\bfZ_i)\right).
\]
We then construct a diagonal working covariance from first-stage residual scores.  Let
\[
  \hat G_{it} = G\!\left(\bfDelta\bfX_{it}^\top \hat f_{\text{ens}}(\bfZ_i)\right), \qquad
  r_{it} = Y_{it} - \hat G_{it},
\]
and define the coordinate-specific task score
\[
  s_{itk} = \Delta X_{itk}\, r_{it}.
\]
For respondent $i$, average this score over tasks:
\[
  \bar s_{ik} = \frac{1}{T_i}\sum_{t=1}^{T_i} s_{itk}.
\]
Let the coordinate-specific average score information be
\[
  \bar v_k =
  \frac{1}{n}
  \sum_{i=1}^{N}\sum_{t=1}^{T_i}
  \hat G_{it}(1-\hat G_{it})\,\Delta X_{itk}^{2},
  \qquad n=\sum_{i=1}^{N}T_i,
\]
and let $\bar\tau=N^{-1}\sum_{i=1}^{N}T_i^{-1}$.  The score-based scale heuristic is
\[
  \hat\sigma_{\text{score},k}^2
  =
  \max\!\left\{
    \frac{\Var_i(\bar s_{ik})}{\bar v_k^2}
    -
    \frac{\bar\tau}{\bar v_k},
    \; 0.01
  \right\}.
\]
When $T_i=T$ for all respondents, the subtraction term is the balanced-design correction $1/(\bar v_k T)$.  The floor at $0.01$ prevents degenerate priors in near-separated or weak-signal cells.  We use this score-based scale heuristic as a coordinate-specific working scale proxy, not as an unbiased or consistent estimator of $\Var(\eta_{ik})$, and introduce a fixed prior-precision calibration constant $c_\eta$:
\[
  \hat{\sigma}_{\eta,k}^2(c_\eta)
  =
  \frac{\hat{\sigma}_{\text{score},k}^2}{c_\eta}.
\]
The default \texttt{EnsC5} calibration sets $c_\eta=5$, shrinking the working covariance by a factor of $5$, or equivalently multiplying the prior precision by $5$.

Given $\hat f_{\text{ens}}(\bfZ_i)$ and
\[
  \hat{\boldsymbol{\Sigma}}_\eta(c_\eta)
  =
  \operatorname{diag}\!\left(
    \hat{\sigma}_{\eta,1}^2(c_\eta),
    \ldots,
    \hat{\sigma}_{\eta,p}^2(c_\eta)
  \right),
\]
and writing $\hat{\boldsymbol{\Sigma}}_\eta=\hat{\boldsymbol{\Sigma}}_\eta(c_\eta)$, the respondent-specific update is the maximum-a-posteriori (MAP) problem of §\ref{sec:estimation}, now evaluated with the score-based working covariance $\hat{\boldsymbol{\Sigma}}_\eta(c_\eta)$ constructed above in place of a generic prior.  In implementation we solve it by Newton iterations on the residual component $\boldsymbol{\eta}_i = \bfbeta_i - \hat f_{\text{ens}}(\bfZ_i)$:
\[
  \nabla_i(\boldsymbol{\eta})
  =
  \bfDelta\bfX_i^\top(\mathbf{Y}_i-\mathbf{p}_i)
  -
  \hat{\boldsymbol{\Sigma}}_\eta^{-1}\boldsymbol{\eta},
\]
\[
  \mathbf{H}_i(\boldsymbol{\eta})
  =
  -\bfDelta\bfX_i^\top \mathbf{W}_i \bfDelta\bfX_i
  -
  \hat{\boldsymbol{\Sigma}}_\eta^{-1},
\]
where $\mathbf{p}_i$ collects the fitted choice probabilities and $\mathbf{W}_i$ is the diagonal matrix of logit weights $p_{it}(1-p_{it})$.  Because the objective is strictly concave, these updates are numerically stable and fast in practice.

\paragraph{Choosing the calibration constant $c_\eta$.}  The constant $c_\eta=5$ is the empirical default we adopt for categorical-attribute applications.  It was selected through a paired-comparison diagnostic that holds the DNN ensemble $\hat f_{\text{ens}}$ fixed within each simulation cell and applies seven respondent-level estimators on the same nuisance: the unrestricted ensemble (no MAP), the calibration constants $c_\eta \in \{5,10,20,40\}$, an adaptive variant in which $c_\eta = 5 + 35\,\hat R_Z^2$ tracks an estimated share of $\bfZ$-explained heterogeneity, and an ``oracle'' calibration that plugs in the true innovation variance $\sigma_\eta^2$.  The grid spans $3{,}071$ paired cells across four DGP families, $N \in [1{,}000,\,10{,}000]$, $T \in \{3,5,10,15\}$, $p \in \{20,30\}$, and $R_Z^2 \in \{0.10,\,0.55,\,0.75\}$, with ten replications per cell.  The pooled mean paired improvement in individual-$\bfbeta$ correlation over the unrestricted ensemble is $+0.063$ for $c_\eta=5$, $+0.061$ for the adaptive variant, $+0.056$ for $c_\eta=10$, $+0.041$ for $c_\eta=20$, $+0.036$ for the oracle, and $+0.027$ for $c_\eta=40$.  Every MAP variant beats the unrestricted ensemble on the mean in $100\%$ of cells; only $c_\eta=5$ has a (small) negative single-replication worst case, of $-0.010$, whereas the adaptive variant has a strictly positive worst case of $+0.003$.  The adaptive variant has the cleaner adaptive rationale (when the DNN ensemble explains more of the cross-respondent variance, the prior should be tighter so the respondent's own choices have less leverage), but its performance is statistically indistinguishable from the fixed $c_\eta=5$ default and we therefore ship the simpler form.  The oracle that uses the true $\sigma_\eta^2$ lies $0.027$ below $c_\eta=5$ on the pooled mean, indicating that the $\hat\sigma_{\text{score}}^2/5$ calibration is implicitly compensating for bias in the ensemble prior mean rather than merely estimating the population innovation variance.  The regime structure underneath the pooled mean is intuitive: tighter calibrations dominate at large $R_Z^2$ (where the ensemble is a reliable shrinkage target), looser calibrations dominate at small $R_Z^2$ (where the ensemble is noisy), and $c_\eta=5$ sits at the practical frontier across the design.

\paragraph{Continuous-attribute applications.}  When the attribute differences $\bfDelta\bfX_{it}$ are continuous rather than indicator-coded---as in the tax-bracket application of \citet{ballard2017structure}---the score-based scale heuristic $\hat\sigma_{\text{score},k}^2$ defined above can become numerically very large because the within-respondent score variance no longer reflects discrete level contrasts.  In such cases the $\hat\sigma_{\text{score}}^2/5$ calibration produces a near-flat prior that effectively turns off the respondent-level update.  For these applications we replace the score-based heuristic with a variance-of-references heuristic that uses the cross-respondent dispersion of the ensemble itself,
\[
  \hat\sigma_{\eta,k}^{2,\text{varref}}
  =
  \max\!\left\{
    \tfrac{1}{2}\,\Var_i\!\left(\hat f_{\text{ens},k}(\bfZ_i)\right),\;
    \tau_{\text{floor}}
  \right\},
\]
with a small floor $\tau_{\text{floor}}=10^{-3}$ to prevent degenerate priors.  This is the calibration we use for the \citet{ballard2017structure} application; on the held-out self-reported ideal-rate validation set, the choice between $\tau_{\text{floor}}=10^{-3}$ and a fixed prior variance of $10^{-3}$ for every coefficient gives identical correlation up to numerical noise.

\paragraph{Survey weights.}  The applications report unweighted structural estimates.  The accompanying software optionally accepts respondent survey weights, which reweight the respondent-level aggregation that forms the population averages and their clustered standard errors; the first-stage learner and the per-respondent empirical-Bayes update are left unweighted, so the recovered $\hat\bfbeta_i$ are unchanged.  Only the democracy application's public file carries usable respondent weights; there, reweighting moves the headline quantities by at most a few hundredths on the logit scale (Supplementary Materials~\ref{app:gs_covariates}), so we report unweighted estimates in the main text.

\subsection{Alternative Stage-2 Estimators}

The most natural alternative to \texttt{EnsC5} is a DNN-offset mixed logit or hierarchical logit.  In that approach, one treats the DNN prediction as a fixed offset,
\[
  o_{it} = \bfDelta\bfX_{it}^\top \hat f_{\text{ens}}(\bfZ_i),
\]
and estimates respondent-specific random slopes from
\[
  \Pr(Y_{it}=1 \mid \bfDelta\bfX_{it}, \bfZ_i, \mathbf{u}_i)
  =
  G\!\left(o_{it} + \bfDelta\bfX_{it}^\top \mathbf{u}_i\right),
  \qquad
  \mathbf{u}_i \sim N(\mathbf{0}, \boldsymbol{\Sigma}_u).
\]
In the code base this estimator is implemented with \texttt{glmer} as a logistic mixed model with respondent random slopes and Laplace approximation ($nAGQ=0$).  The resulting best-linear-unbiased-predictor (BLUP)-style estimate is $\hat\bfbeta_i^{\text{glmer}} = \hat f_{\text{ens}}(\bfZ_i) + \hat{\mathbf{u}}_i$.

We benchmark this DNN-offset mixed-logit estimator against \texttt{EnsC5} on a paired simulation of $980$ cells spanning the same DGP families used to calibrate $c_\eta$ above, with $N \in \{1{,}000,2{,}500,5{,}000,10{,}000\}$, $T \in \{3,5,10,15\}$, and $p \in \{20,30\}$.  Two findings undermine the mixed-logit approach as a default.  First, the \texttt{glmer} estimation of $\boldsymbol{\Sigma}_u$ goes singular in $26.5\%$ of cells, and the singularity rate is heavily concentrated at small $T$: with $T=3$ and $N \leq 2{,}500$, $93$--$100\%$ of fits are singular regardless of DGP; with $T=5$ and $N=1{,}000$, the singularity rate is still $20$--$60\%$.  The mixed-logit specification has $p(p+1)/2$ covariance parameters to identify from per-respondent task counts that are typically far too small to do so, and the failures fall in precisely the regimes where respondent-level recovery would be most valuable.  Second, even on the $73.5\%$ clean-fit subset, the average improvement of mixed logit over the unrestricted DNN ensemble is $+0.062$ in individual-$\bfbeta$ correlation, which is below \texttt{EnsC5}'s $+0.065$ on the same cells; with singular fits counted as zero gain (since the mixed-logit BLUP collapses to the offset), the overall mean drops to $+0.059$.  Mixed logit modestly leads on a secondary metric---variance-share correlation---by $+0.02$ to $+0.04$, but loses or ties on $\beta$ root-mean-square error (RMSE) and polarization mean-absolute error (MAE).  The compute cost of \texttt{glmer} is roughly $10$ to $20$ times that of the \texttt{EnsC5} update: at $(N,T,p)=(10{,}000,15,20)$, mixed logit takes about $18$ minutes per cell against under one minute for the full \texttt{EnsC5} update.  In a small appendage of the simulation at the larger $p=30$, $T \in \{10,15\}$ regime that matches the \citet{graham2020democracy} application, mixed logit ties \texttt{EnsC5} on the mean and does not dominate.  Across the full design we therefore find no regime in which mixed logit's gain over \texttt{EnsC5} is large enough to overcome its $26.5\%$ failure rate and order-of-magnitude compute cost, and we adopt \texttt{EnsC5} as the default.

We also benchmark hierarchical Bayes, ridge-style penalized updates, and tempered variants that interpolate between the precision-calibrated \texttt{EnsC5} working covariance and the tighter covariance implied by the mixed model.  The empirical pattern is consistent across these comparisons: \texttt{EnsC5} is the most robust default in the low-task regime, and alternative respondent-level estimators offer at most incremental gains in regimes where $T$ is large, the coefficient dimension is modest, and $\bfZ$ already explains a substantial share of heterogeneity.

\subsection{Attribute Interactions}\label{app:interactions}

The accompanying software implements the interaction extension of Section~\ref{sec:limitations}.  We describe the model, the two forms the software provides, how the interaction term enters estimation, and why formal inference on it requires richer designs than our applications afford.

\paragraph{Pairwise-interaction generalization.}  Write the profile utility with population-level pairwise interactions as
\begin{equation}\label{eq:int_utility}
  U_{ijt}
  =
  \bfX_{ijt}^\top \bfbeta_i
  +
  \sum_{1\leq k<\ell\leq p}
  \gamma_{k\ell} X_{ijtk}X_{ijt\ell}
  +
  \varepsilon_{ijt},
\end{equation}
where $\gamma_{k\ell}$ is the coefficient on the product of attribute levels $k$ and $\ell$.  Differencing the two profiles within a task, as in the additive model, the logit choice probability becomes
\begin{equation}\label{eq:int_index}
  \Pr(Y_{it}=1 \mid \bfbeta_i, \boldsymbol{\gamma})
  =
  G\!\left(
    \bfDelta\bfX_{it}^\top \bfbeta_i
    +
    \sum_{1\leq k<\ell\leq p}
    \gamma_{k\ell}
    \{X_{i1tk}X_{i1t\ell}-X_{i2tk}X_{i2t\ell}\}
  \right).
\end{equation}
The interaction enters as a difference of profile-level product terms, not as a product of the profile contrast $\bfDelta\bfX_{it}$. This choice is not merely notational. A forced choice requires that swapping the two profiles send the choice probability to its complement, and the difference of profile-level products has exactly this antisymmetry: the swap negates both $\bfDelta\bfX_{it}$ and the interaction term. A quadratic in the contrast alone, $\|\mathbf{V}^\top \bfDelta\bfX_{it}\|^2$, is invariant under the swap and cannot arise from any profile-separable utility.

\paragraph{Explicit and low-rank forms.}  The explicit form estimates the identified pairwise coefficients $\gamma_{k\ell}$ under a ridge penalty $\lambda\sum_{k<\ell}\gamma_{k\ell}^{2}$. With dummy-coded attributes, products of levels within the same attribute are identically zero, so only cross-attribute products survive. For the candidate application, $59$ of the nominal $\binom{13}{2}=78$ pairwise products are identified; the remaining $19$ are within-attribute products. The low-rank form parameterizes the pairwise coefficients as $\gamma_{k\ell}=2\mathbf{v}_k^\top\mathbf{v}_\ell$, with rows $\mathbf{v}_k^\top$ of $\mathbf{V}\in\mathbb{R}^{p\times r}$ and $r\ll p$. Equivalently, up to diagonal terms $\|\mathbf{v}_k\|^2X_k^2$ that are collinear with main effects for dummy-coded attributes, the profile-level interaction can be written as $\|\mathbf{V}^\top\bfX\|^2$. The matrix $\mathbf{V}$ is an additional output head of the network penalized by $\lambda_V\|\mathbf{V}\|_F^2$. Two caveats attach to this form. First, $\mathbf{V}$ is fixed only up to rotation, so substantive interpretation should be attached to the implied pairwise coefficients $\gamma_{k\ell}$ rather than to the factors themselves. Second, the low-rank representation $\gamma_{k\ell}=2\mathbf{v}_k^\top\mathbf{v}_\ell$ restricts the interaction matrix; a fully indefinite interaction structure requires a signed factorization such as $\mathbf{V}_+\mathbf{V}_+^\top-\mathbf{V}_-\mathbf{V}_-^\top$.

\paragraph{Estimation.}  The interaction coefficient vector $\boldsymbol{\gamma}$ is a population-level object. Respondent-specific interaction coefficients are not feasible in the few-task designs we study, so $\boldsymbol{\gamma}$ is shared across respondents and the individual heterogeneity stays in $\bfbeta_i$. It enters the respondent-level empirical-Bayes update of Supplementary Materials~\ref{app:estimation_details} as a known per-task offset,
\begin{equation}\label{eq:int_offset}
  o_{it}
  =
  \sum_{1\leq k<\ell\leq p}
  \gamma_{k\ell}
  \{X_{i1tk}X_{i1t\ell}-X_{i2tk}X_{i2t\ell}\},
\end{equation}
added to the index $\bfDelta\bfX_{it}^\top \bfbeta_i$ before the logit; the prior, the Laplace update, and the scale estimation are unchanged, and the recovered $\hat{\bfbeta}_i$ keep their reading as main-effect preferences at the reference profile. We use this interaction extension as a plug-in robustness check in the applications. Formal debiased inference for interaction-bearing targets would require replacing $\bfDelta\bfX_{it}$ with the expanded contrast $\widetilde{\bfDelta\bfX}_{it}=(\bfDelta\bfX_{it},\,\mathbf{q}_{i1t}-\mathbf{q}_{i2t})$, with $\mathbf{q}_{ijt}$ the vector of identified pairwise products of profile $j$, and estimating the expanded local information matrix
\begin{equation}\label{eq:int_lambda}
  \widetilde{\bfLambda}_0(\bfZ)
  = \E\!\big[G'(\cdot)\, \widetilde{\bfDelta\bfX}\, \widetilde{\bfDelta\bfX}^\top \mid \bfZ\big],
\end{equation}
of dimension $(p+q)\times(p+q)$, with $q$ the number of identified interactions ($q=59$ here). The cost of the extension is concentrated in~\eqref{eq:int_lambda}. Each cross-attribute product is nonzero in only a small share of tasks, so at small $T$ the expanded information matrix is estimated too imprecisely to invert stably, and the debiased intervals for interaction-bearing functionals are vacuous. This is a property of the design, not of the estimator, and it matches the design guidance of Supplementary Materials~\ref{app:factorial_guidance}, which places interaction-level targets well beyond a $T=3$ design.

We assess the extension on the candidate application in Supplementary Materials~\ref{app:more_figures} (Table~\ref{tab:sw_interactions}).  Under both forms the substantive findings are essentially unchanged, while formal inference from the expanded interaction specification is uninformative at $T = 3$, exactly as~\eqref{eq:int_lambda} anticipates.

\FloatBarrier

\subsection{Reproducibility}\label{app:reproducibility}

All estimates use seed $42$, with a two-seed ensemble (seeds $42$ and $7$) for the mean stage; headline quantities shift by at most $0.05$ on the logit scale across seeds.  The deep network is fit with \texttt{torch} under the configuration recorded in Supplementary Materials~\ref{app:estimation_details}, and the accompanying \ifanon R package\else \texttt{sconjoint} package\fi reproduces the reported quantities from the analysis data.  A full replication archive---data-preparation scripts, fitted objects, and figure code---will accompany the published paper.

\section{Asymptotic Theory}\label{sec:asymptotics} \label{app:asymptotic_results}

In this section we study the high-level consequences of standard orthogonal-score and M-estimation arguments. We do not verify primitive neural-network approximation or entropy conditions. Instead, we assume that the first-stage learners satisfy the rate, stability, and remainder conditions that such primitive assumptions are meant to deliver.

Let $N$ denote the number of respondents and $n=\sum_{i=1}^N T_i$ the number of choice observations.  The independent sampling units are respondents.  Unless otherwise noted, $p$ is fixed and $T_i\leq \bar T$.  The average-parameter results are stated for respondent-weighted averages.  When $T_i=T$ is constant, these coincide with task-weighted averages.  If $T_i$ varies, the task-weighted estimator targets a different population functional in general; the two targets agree only when $T_i$ is independent of $f_0(\bfZ_i)$.

Let $m_0(\bfZ_i)=\E[\bfbeta_i\mid \bfZ_i]$ denote the conditional mean of the latent random coefficients under the structural random-coefficients model. For the average-parameter results, we take the maintained mean-logit condition as a primitive:
\[
  \E[Y_{it}\mid \bfDelta\bfX_{it},\bfZ_i]
  =
  G\!\left(\bfDelta\bfX_{it}^{\top}m_0(\bfZ_i)\right).
\]
Under this maintained structural condition, the mean-stage target equals the conditional mean, $f_0=m_0$. If the same estimating equations are read only as a working conditional-logit approximation, then $f_0$ is instead the projection target defined in (H2).  For the average-parameter result, the target is
\[
  \theta_{0k} = \E[f_{0k}(\bfZ_i)],
\]
which equals $\E[\beta_{ik}]=\E[m_{0k}(\bfZ_i)]$ under the maintained structural model and is the average conditional-logit projection under the weaker working-model interpretation.

Let $\psi_{ikt}(f,\Lambda)$ denote the debiased signal whose feasible plug-in is given in~\eqref{eq:psi}, and define the mean-zero score
\[
  \varphi_{ikt}(\theta,f,\Lambda)
  =
  \psi_{ikt}(f,\Lambda)-\theta.
\]
The feasible respondent-weighted average-parameter estimator is $\hat\theta_k$ of~\eqref{eq:theta_hat}.

Define the respondent-level conditional log-likelihood
\[
  \ell_i(\bfbeta)
  =
  \sum_{t=1}^{T_i}
  \left[
    Y_{it}\log G(\bfDelta\bfX_{it}^\top \bfbeta)
    +
    (1-Y_{it})\log\!\bigl(1-G(\bfDelta\bfX_{it}^\top \bfbeta)\bigr)
  \right].
\]
For a positive-definite working covariance matrix $\boldsymbol{\Sigma}_{\eta,0}$, define the \emph{oracle} respondent-level penalized mode
\[
  \bfbeta_{i,T_i}^\star
  =
  \operatorname*{arg\,max}_{\bfbeta \in \mathbb{R}^p}
  \left\{
    \ell_i(\bfbeta)
    -
    \frac{1}{2}
    (\bfbeta-f_0(\bfZ_i))^\top
    \boldsymbol{\Sigma}_{\eta,0}^{-1}
    (\bfbeta-f_0(\bfZ_i))
  \right\}.
\]
Under the maintained random-coefficients prior interpretation, this is the oracle posterior mode.  Under a working-model interpretation of $f_0$, it is the oracle penalized target induced by the limiting first-stage mean and working covariance.

Finally, define
\[
  R_{Z,k}
  =
  \frac{\Var(m_{0k}(\bfZ_i))}{\Var(\beta_{ik})},
\]
the share of latent random-coefficient heterogeneity in coefficient $k$ explained by observed respondent characteristics. This is a structural-model object. It is defined in terms of the latent coefficient $\beta_{ik}$ and its conditional mean $m_{0k}(\bfZ_i)$, and it is not defined under the weaker projection interpretation unless one separately specifies a latent random-coefficients law.

The propositions below invoke the following high-level conditions as needed:
\begin{enumerate}[label=(H\arabic*),leftmargin=*]
  \item Respondent clusters are i.i.d.; all tasks from a respondent are kept in the same cross-fitting fold; profile contrasts are randomly assigned conditional on $\bfZ_i$; the respondent-level score contribution $T_i^{-1}\sum_{t=1}^{T_i}\varphi_{ikt}(\theta_{0k},f_0,\boldsymbol{\Lambda}_0)$ has a finite $2+\delta$ moment for some $\delta>0$; and its variance $V_k=\Var(\phi_{ik})$ is strictly positive.
  \item The mean-stage target $f_0$ is the unique square-integrable solution to
  \[
    \E\!\left[
      \bfDelta\bfX_{it}
      \left\{
        Y_{it}
        -
        G\!\left(\bfDelta\bfX_{it}^\top f_0(\bfZ_i)\right)
      \right\}
      \;\middle|\;
      \bfZ_i
    \right]
    =
    \mathbf{0}.
  \]
  Under the maintained mean-logit primitive stated above, $f_0(\bfZ_i)=m_0(\bfZ_i)=\E[\bfbeta_i\mid\bfZ_i]$; under the weaker working-model interpretation, $f_0$ is the conditional-logit projection.
  \item The local information matrix
  \[
    \boldsymbol{\Lambda}_0(\bfZ)
    =
    \E\!\left[
      G'\!\left(\bfDelta\bfX_{it}^\top f_0(\bfZ)\right)
      \bfDelta\bfX_{it}\bfDelta\bfX_{it}^\top
      \;\middle|\;
      \bfZ
    \right]
  \]
  has eigenvalues bounded away from zero and infinity uniformly on the support of $\bfZ$.
  \item The cross-fitted nuisance estimates satisfy
	  \[
	    \|\hat f-f_0\|_{L^2}=o_p(N^{-1/4}),
	    \qquad
	    \|\hat{\boldsymbol{\Lambda}}-\boldsymbol{\Lambda}_0\|=o_p(1),
	  \]
	  and the resulting respondent-averaged orthogonal-score remainder is $o_p(N^{-1/2})$. (For the average-parameter score, this high-level remainder condition can be replaced by the usual DML product-rate and stability conditions that make products of nuisance-estimation errors asymptotically negligible. We state the condition at this level because we do not verify primitive neural-network entropy or approximation conditions here.)
  \item For the respondent-level MAP results, $\hat{\boldsymbol{\Sigma}}_\eta\to_p\boldsymbol{\Sigma}_{\eta,0}$ with eigenvalues bounded away from zero and infinity, $\hat f(\bfZ_i)=O_p(1)$ for respondent sequences under consideration, the relevant penalized objective has a unique well-separated maximizer, and the numerical procedure returns the global maximizer.  For the fixed-$T$ result below, additionally $\hat f(\bfZ_i)\to_p f_0(\bfZ_i)$ for the respondent under consideration.
  \item For the large-$T$ results, along a sequence with $T_i\to\infty$, within-respondent tasks satisfy a uniform law of large numbers, the expected respondent-level log-likelihood is uniquely maximized at the true $\bfbeta_i$, and the Fisher information at $\bfbeta_i$ is positive definite.
\end{enumerate}

Our first result concerns the population-average estimand $\theta_{0k}$, where pooling across the $N$ respondents drives identification and within-respondent information enters only through the cluster structure of the variance.

\begin{proposition}
\label{prop:app_avg}
Under (H1)--(H4), for each fixed coordinate $k$,
\[
  \sqrt{N}\,(\hat\theta_k-\theta_{0k})
  =
  \frac{1}{\sqrt{N}}
  \sum_{i=1}^{N}
  \phi_{ik}
  +
  o_p(1),
\]
where
\[
  \phi_{ik}
  =
  \frac{1}{T_i}
  \sum_{t=1}^{T_i}
  \varphi_{ikt}(\theta_{0k},f_0,\boldsymbol{\Lambda}_0),
  \qquad
  \E[\phi_{ik}]=0,
  \qquad
  V_k=\Var(\phi_{ik})\in(0,\infty).
\]
Consequently,
\[
  \sqrt{N}\,(\hat\theta_k-\theta_{0k})
  \;\xrightarrow{d}\;
  N(0,V_k),
\]
where $N$ is the number of respondents.  The respondent-clustered variance estimator based on the centered empirical influence contributions
\[
  \hat\phi_{ik}
  =
  \frac{1}{T_i}
  \sum_{t=1}^{T_i}
  \hat\psi_{ikt}
  -
  \hat\theta_k
\]
yields
\[
  \hat V_k
  =
  \frac{N}{N-1}\cdot\frac{1}{N}
  \sum_{i=1}^{N}\hat\phi_{ik}^2
  \to_p V_k.
\]
Equivalently, $\widehat{\Var}(\hat\theta_k)=\hat V_k/N$.  Under a balanced fixed-$T$ sequence with $T_i=T$ for all $i$, we have $n=NT$ and the equivalent $\sqrt{n}$ normalization follows by rescaling the asymptotic variance by the constant factor $T$.
\end{proposition}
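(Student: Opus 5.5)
The proof follows the standard cross-fitted DML argument, with respondents as the i.i.d. units. Write $\hat\theta_k-\theta_{0k}=N^{-1}\sum_i\hat\phi^{\circ}_{ik}$, where $\hat\phi^{\circ}_{ik}=T_i^{-1}\sum_t\varphi_{ikt}(\theta_{0k},\hat f,\hat{\boldsymbol\Lambda})$. Then decompose
\[
\hat\theta_k-\theta_{0k}=\frac1N\sum_{i=1}^N\phi_{ik}+\frac1N\sum_{i=1}^N\bigl(\hat\phi^{\circ}_{ik}-\phi_{ik}\bigr).
\]
The whole task is to show the second term is $o_p(N^{-1/2})$.

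\textbf{Step 1: the score has mean zero.} Under (H1)--(H2), $\E[\phi_{ik}]=0$. By iterated expectations,
\[
\E\bigl[\boldsymbol\Lambda_0^{-1}(\bfZ_i)\bfDelta\bfX_{it}\{Y_{it}-G(\bfDelta\bfX_{it}^\top f_0(\bfZ_i))\}\bigr]=\mathbf 0,
\]
because the conditional moment in (H2) vanishes given $\bfZ_i$. Hence $\E[\psi_{ikt}]=\E[f_{0k}(\bfZ_i)]=\theta_{0k}$. The $2+\delta$ moment and $V_k>0$ in (H1) then give the Lindeberg CLT for $N^{-1/2}\sum_i\phi_{ik}$.

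\textbf{Step 2: Neyman orthogonality in $f$ and $\boldsymbol\Lambda$.} Perturb $f_0$ in a direction $h$. The Gateaux derivative of $\E[\psi_{ikt}\mid\bfZ]$ is
\[
h_k(\bfZ)-\bigl[\boldsymbol\Lambda_0^{-1}(\bfZ)\,\E\{G'(\cdot)\bfDelta\bfX\bfDelta\bfX^\top\mid\bfZ\}\,h(\bfZ)\bigr]_k=0,
\]
using the definition of $\boldsymbol\Lambda_0$ in (H3). The derivative in $\boldsymbol\Lambda$ is zero because it multiplies the conditionally mean-zero residual.

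\textbf{Step 3: split the remainder.} Handle the second term fold by fold, using that $\hat f,\hat{\boldsymbol\Lambda}$ are trained on other respondents.

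(i) \emph{Empirical-process piece.} This is the centered difference $\hat\phi^\circ_{ik}-\phi_{ik}$ minus its conditional mean given the training folds. Conditional on those folds, the summands are i.i.d. across $i$, so Chebyshev bounds this piece by $O_p\bigl(N^{-1/2}\,\E[(\hat\phi^\circ-\phi)^2\mid\text{train}]^{1/2}\bigr)$. That second moment is $o_p(1)$ by (H4), since $G$ is Lipschitz, $\bfDelta\bfX$ is bounded, and (H3) keeps $\hat{\boldsymbol\Lambda}^{-1}$ bounded with probability approaching one.

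(ii) \emph{Bias piece.} This is the conditional mean itself. A second-order Taylor expansion, with the first order killed by Step 2, bounds it by
\[
C\bigl(\|\hat f-f_0\|_{L^2}^2+\|\hat{\boldsymbol\Lambda}-\boldsymbol\Lambda_0\|\,\|\hat f-f_0\|_{L^2}\bigr)=o_p(N^{-1/2}),
\]
using $G''$ bounded and the rates in (H4). This is exactly the respondent-averaged remainder condition stated in (H4), so one may simply invoke it.

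I expect this bias-piece step to be the main obstacle. It needs uniform invertibility of $\hat{\boldsymbol\Lambda}$ and care that the task average within respondent does not break conditional independence. Cluster-level cross-fitting handles the latter.

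\textbf{Step 4: consistency of the variance estimator.} Write
\[
\hat\phi_{ik}=\phi_{ik}+(\hat\phi^\circ_{ik}-\phi_{ik})-(\hat\theta_k-\theta_{0k}).
\]
Then $N^{-1}\sum_i\hat\phi_{ik}^2-N^{-1}\sum_i\phi_{ik}^2\to_p0$, using Cauchy--Schwarz, the $L^2$ bound from Step 3(i), and $\hat\theta_k-\theta_{0k}=o_p(1)$. The WLLN gives $N^{-1}\sum_i\phi_{ik}^2\to_p V_k$, and $N/(N-1)\to1$. The balanced-$T$ rescaling is immediate from $n=NT$.
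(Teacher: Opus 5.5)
Your proof takes the same route as the paper. The paper's proof is a one-paragraph appeal to the standard DML conclusion: cross-fitting, Neyman orthogonality, and a respondent-clustered CLT. Your Steps 1, 2, 3(i) and 4 fill in what that appeal leaves implicit, including the consistency of $\hat V_k$, which the paper never spells out.

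One claim in Step 3(ii) is wrong as written. Your bias bound has the right form. But (H4) states only $\|\hat f-f_0\|_{L^2}=o_p(N^{-1/4})$ and $\|\hat\bfLambda-\bfLambda_0\|=o_p(1)$, so under those rates the cross term is
\[
  \|\hat\bfLambda-\bfLambda_0\|\,\|\hat f-f_0\|_{L^2}=o_p(1)\cdot o_p(N^{-1/4})=o_p(N^{-1/4}),
\]
not $o_p(N^{-1/2})$. This is the term $\E\bigl[\bigl[\hat\bfLambda^{-1}(\hat\bfLambda-\bfLambda_0)(\hat f-f_0)\bigr]_k\bigr]$. It is the part linear in $\hat f-f_0$, which your Step 2 eliminates only at $\bfLambda=\bfLambda_0$, and nothing in the defining moment of $f_0$ removes it. The paper makes exactly this point in the remark after Proposition~\ref{prop:app_master}: the same cross term is present for $\theta_k$ and is absorbed into the catch-all remainder clause of (H4). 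So Step 3(ii) must rest either on that clause (your fallback sentence) or on an explicit product rate $\|\hat\bfLambda-\bfLambda_0\|=o_p(N^{-1/4})$. The product rate holds automatically if $\hat\bfLambda(\bfZ)$ is the plug-in $\E[G'(\bfDelta\bfX^\top\hat f(\bfZ))\bfDelta\bfX\bfDelta\bfX^\top\mid\bfZ]$ computed over the known design. In that case $\|\hat\bfLambda(\bfZ)-\bfLambda_0(\bfZ)\|\lesssim\|\hat f(\bfZ)-f_0(\bfZ)\|$, and the cross term is $O_p(\|\hat f-f_0\|_{L^2}^2)$.

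The catch-all clause cannot rescue Step 4. It controls the average $N^{-1}\sum_i(\hat\phi^{\circ}_{ik}-\phi_{ik})$, not the mean square $N^{-1}\sum_i(\hat\phi^{\circ}_{ik}-\phi_{ik})^2$ that you need for $\hat V_k\to_p V_k$. So the $L^2$ bound of Step 3(i) is genuinely load-bearing there, and it requires $\hat\bfLambda^{-1}$ to be bounded with probability approaching one. (H3) constrains only $\bfLambda_0$, so it does not give you that. You should either read the $\hat\bfLambda$-consistency in (H4) as uniform over the support of $\bfZ$, which together with (H3) bounds the eigenvalues of $\hat\bfLambda$ away from zero w.p.a.~1, or assume that eigenvalue bound directly.
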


\begin{proof}
This is the standard DML conclusion under cross-fitting and Neyman orthogonality.  Given respondent-level randomization, bounded cluster size, and the $o_p(N^{-1/4})$ first-stage rate, the mean-zero orthogonal moment $\varphi$ yields first-order insensitivity to estimation error in $\hat f$, giving the asymptotically linear representation above.  The respondent-clustered central limit theorem then applies to the independent respondent-level contributions $\phi_{ik}$; see \citet{chernozhukov2018double} and \citet{farrell2025deep}.
\end{proof}

We now turn from the average to the respondent-level estimator $\hat{\bfbeta}_i^{\mathrm{MAP}}$, where within-respondent variation becomes the central object.  Two questions arise: with a fixed number of tasks per respondent, what does the MAP step recover; and what changes as within-respondent information accumulates?  Our next result addresses the fixed-$T$ case.

\begin{proposition}
\label{prop:app_fixedT}
Under (H1), (H2), and (H5)---including the additional pointwise convergence requirement $\hat f(\bfZ_i)\to_p f_0(\bfZ_i)$ in (H5)---for any respondent $i$ with fixed $T_i$,
\[
  \hat{\bfbeta}_i^{\mathrm{MAP}} - \bfbeta_{i,T_i}^\star = o_p(1)
  \qquad \text{as } N \to \infty.
\]
Thus, with a fixed number of tasks per respondent, the feasible hybrid estimator converges to the oracle penalized mode induced by the limiting prior mean and limiting working covariance, not necessarily to the true $\bfbeta_i$ itself.  The respondent-level prior penalty continues to exert asymptotically non-negligible influence when $T_i$ remains fixed.
\end{proposition}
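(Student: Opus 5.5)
The argument is a continuity-of-argmax argument in which the respondent's data stay fixed and only the nuisance inputs $(\hat f(\bfZ_i),\hat{\boldsymbol{\Sigma}}_\eta)$ move with $N$. Fix respondent $i$ and condition on its $T_i$ observations $\{(Y_{it},\bfDelta\bfX_{it})\}_{t\le T_i}$ and on $\bfZ_i$. These are a fixed, finite collection, so $\ell_i(\cdot)$ is a fixed, concave, smooth function of $\bfbeta$. Define the penalized objective as a function of the nuisance pair,
\[
  Q_i(\bfbeta;\mathbf{m},\mathbf{S}) = \ell_i(\bfbeta) - \tfrac12(\bfbeta-\mathbf{m})^\top \mathbf{S}^{-1}(\bfbeta-\mathbf{m}),
\]
so that $\hat{\bfbeta}_i^{\mathrm{MAP}}=\arg\max Q_i(\cdot;\hat f(\bfZ_i),\hat{\boldsymbol{\Sigma}}_\eta)$ and $\bfbeta^\star_{i,T_i}=\arg\max Q_i(\cdot;f_0(\bfZ_i),\boldsymbol{\Sigma}_{\eta,0})$. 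By (H5), $\hat f(\bfZ_i)\to_p f_0(\bfZ_i)$ and $\hat{\boldsymbol{\Sigma}}_\eta\to_p\boldsymbol{\Sigma}_{\eta,0}$, and the eigenvalues of the limit are bounded away from zero and infinity. Hence $\hat{\boldsymbol{\Sigma}}_\eta^{-1}\to_p\boldsymbol{\Sigma}_{\eta,0}^{-1}$ by continuity of inversion on that set. (H2) is used only to identify $f_0$ as the well-defined target of the first stage; (H1) guarantees that cross-fitting keeps respondent $i$ out of the training fold. So $\hat f(\bfZ_i)$ is a function of the other respondents' data, and the conditioning is legitimate.

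The steps are as follows. (i) Strong concavity: since $\ell_i$ is concave, $-\nabla^2 Q_i\succeq\mathbf{S}^{-1}\succeq\lambda_{\min}\mathbf{I}$ with $\lambda_{\min}>0$ uniformly for $\mathbf{S}$ in a neighborhood of $\boldsymbol{\Sigma}_{\eta,0}$. Each maximizer therefore exists, is unique, and is well separated; this is consistent with, and in fact supplies, the uniqueness part of (H5). (ii) A deterministic Lipschitz bound for the argmax map. Let $\bfbeta(\mathbf{m},\mathbf{S})$ denote the maximizer. The first-order condition is $\nabla\ell_i(\bfbeta)-\mathbf{S}^{-1}(\bfbeta-\mathbf{m})=\mathbf{0}$. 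Apply the implicit function theorem, or more simply compare first-order conditions at two parameter values using strong monotonicity of $-\nabla Q_i$. This gives
\[
\|\bfbeta(\mathbf{m},\mathbf{S})-\bfbeta(\mathbf{m}_0,\mathbf{S}_0)\|\le \lambda_{\min}^{-1}\bigl(\|\mathbf{S}^{-1}\|\,\|\mathbf{m}-\mathbf{m}_0\|+\|\mathbf{S}^{-1}-\mathbf{S}_0^{-1}\|\,\|\bfbeta(\mathbf{m}_0,\mathbf{S}_0)-\mathbf{m}_0\|\bigr).
\]
(iii) Plug in $(\hat f(\bfZ_i),\hat{\boldsymbol{\Sigma}}_\eta)$ and invoke the continuous mapping theorem. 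The right-hand side is $o_p(1)$ because the oracle term $\bfbeta^\star_{i,T_i}-f_0(\bfZ_i)$ is a fixed $O_p(1)$ quantity. The random data of respondent $i$ do not change with $N$, and the logistic log-likelihood has gradient bounded by $\sum_t\|\bfDelta\bfX_{it}\|$, which keeps the oracle mode within a bounded distance of $f_0(\bfZ_i)$. (H5)'s requirement that the numerical routine return the global maximizer identifies the computed estimator with $\bfbeta(\hat f(\bfZ_i),\hat{\boldsymbol{\Sigma}}_\eta)$.

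The main obstacle is step (ii): making the Lipschitz bound uniform over the random neighborhood rather than merely pointwise. I would handle it by working on the event $E_N=\{\|\hat{\boldsymbol{\Sigma}}_\eta^{-1}-\boldsymbol{\Sigma}_{\eta,0}^{-1}\|\le\epsilon\}$, whose probability tends to one, and on which $\lambda_{\min}$ is bounded below by half the smallest eigenvalue of $\boldsymbol{\Sigma}_{\eta,0}^{-1}$. The final remark, that the prior keeps non-negligible influence, then follows immediately. With $T_i$ fixed, the penalty Hessian $\boldsymbol{\Sigma}_{\eta,0}^{-1}$ is of the same order as $\sum_t G'\,\bfDelta\bfX_{it}\bfDelta\bfX_{it}^\top$, so $\bfbeta^\star_{i,T_i}$ depends nontrivially on $f_0(\bfZ_i)$. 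In general it differs from $\bfbeta_i$, for example whenever $\nabla\ell_i(\bfbeta_i)\ne\boldsymbol{\Sigma}_{\eta,0}^{-1}(\bfbeta_i-f_0(\bfZ_i))$, which happens with positive probability.
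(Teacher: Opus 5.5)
Your proposal is correct. It shares the paper's plug-in logic: the feasible objective is the oracle objective evaluated at $(\hat f(\bfZ_i),\hat{\boldsymbol{\Sigma}}_\eta)$, and both nuisances converge by (H5). Where it differs is in how you close the argument.

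The paper simply invokes the argmax continuous mapping theorem. It takes the unique, well-separated maximizer clause of (H5) as an assumption, and it leaves implicit the uniform-on-compacts convergence and tightness that the theorem needs on $\mathbb{R}^p$. Uniform closeness over all of $\mathbb{R}^p$ fails here, because the penalty difference grows quadratically.

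You instead use the structure of this particular problem. Concavity of the logit likelihood plus the Gaussian penalty makes the objective strongly concave with modulus $\lambda_{\min}(\hat{\boldsymbol{\Sigma}}_\eta^{-1})$. Comparing first-order conditions then gives a deterministic Lipschitz bound for the argmax map. The first-order condition at the oracle mode gives $\|\bfbeta^\star_{i,T_i}-f_0(\bfZ_i)\|\le\|\boldsymbol{\Sigma}_{\eta,0}\|\sum_t\|\bfDelta\bfX_{it}\|$, which finishes it.

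Your route buys three things:
\begin{enumerate}
\item The uniqueness and well-separation part of (H5) is derived rather than assumed.
\item The tightness issue disappears.
\item You get a rate: $\|\hat{\bfbeta}_i^{\mathrm{MAP}}-\bfbeta^\star_{i,T_i}\|=O_p\bigl(\|\hat f(\bfZ_i)-f_0(\bfZ_i)\|+\|\hat{\boldsymbol{\Sigma}}_\eta^{-1}-\boldsymbol{\Sigma}_{\eta,0}^{-1}\|\bigr)$.
\end{enumerate}
The paper's route is more generic: it would survive a non-concave likelihood or penalty, provided well-separation is assumed. For this problem, though, yours is sharper and more complete.

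One small remark: because your bound is pathwise, you do not need the step that conditions on respondent $i$'s data. That is fortunate. As implemented, $\hat{\boldsymbol{\Sigma}}_\eta$ is built from score variances across all respondents, including $i$. So it is not independent of $i$'s data even under cross-fitting; only $\hat f(\bfZ_i)$ is. Unconditional convergence in probability from (H5), combined with your deterministic inequality on the event $E_N$, is all the argument requires.
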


\begin{proof}
The feasible respondent-level objective is a plug-in version of the oracle objective.  Under (H5), the plug-in prior mean at $\bfZ_i$ and the working covariance converge in probability to their population limits.  The statement is conditional on the respondent's finite task sequence, with cross-fitting ensuring that the nuisance estimates used for that respondent are trained on other respondents. Since $T_i$ is fixed, the respondent-level log-likelihood is a finite continuous function of $\bfbeta$.  The unique well-separated maximizer condition permits the argmax continuous mapping theorem, yielding $\hat{\bfbeta}_i^{\mathrm{MAP}} \to_p \bfbeta_{i,T_i}^\star$.
\end{proof}

Proposition~\ref{prop:app_fixedT} shows that with $T_i$ fixed, the respondent-level MAP estimator remains anchored to the prior even as $N\to\infty$: the limit is the oracle penalized mode, not the respondent's true coefficient $\bfbeta_i$.  The next result turns this around by letting within-respondent information accumulate.  Once $T_i$ grows, the respondent-level log-likelihood scales linearly in $T_i$ while the prior penalty does not, so the likelihood eventually dominates and the feasible MAP estimator becomes consistent for $\bfbeta_i$.

\begin{proposition}
\label{prop:app_largeT}
Under (H1), (H5), and (H6), along a sequence with $T_i\to\infty$, define the feasible hybrid estimator by
\[
  \hat{\bfbeta}_i^{\mathrm{MAP}}
  =
  \operatorname*{arg\,max}_{\bfbeta \in \mathbb{R}^p}
  \left\{
    \ell_i(\bfbeta)
    -
    \frac{1}{2}
    (\bfbeta-\hat f(\bfZ_i))^\top
    \hat{\boldsymbol{\Sigma}}_\eta^{-1}
    (\bfbeta-\hat f(\bfZ_i))
  \right\}.
\]
Then
\[
  \hat{\bfbeta}_i^{\mathrm{MAP}} - \bfbeta_i = o_p(1).
\]
\end{proposition}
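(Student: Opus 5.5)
The plan is a standard M-estimation consistency argument applied to the normalized objective. Divide the penalized objective by $T_i$ and define
\[
  Q_{T_i}(\bfbeta)=\frac{1}{T_i}\ell_i(\bfbeta)-\frac{1}{2T_i}(\bfbeta-\hat f(\bfZ_i))^\top\hat{\boldsymbol{\Sigma}}_\eta^{-1}(\bfbeta-\hat f(\bfZ_i)),
\]
so that $\hat{\bfbeta}_i^{\mathrm{MAP}}=\arg\max Q_{T_i}$. Let $Q(\bfbeta)=\lim_{T_i\to\infty}\E[T_i^{-1}\ell_i(\bfbeta)\mid\bfbeta_i,\bfZ_i]$ be the expected per-task log-likelihood, conditional on the respondent's true coefficient. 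By (H6), $Q$ is uniquely maximized at $\bfbeta_i$. I would then verify the two conditions of the argmax consistency theorem: uniform convergence of $Q_{T_i}$ to $Q$ on compacts, and identifiable uniqueness.

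\textbf{Step 1: the penalty vanishes.} By (H5), $\hat f(\bfZ_i)=O_p(1)$ and $\hat{\boldsymbol{\Sigma}}_\eta^{-1}$ has eigenvalues bounded in probability. Hence on any compact $K$,
\[
  \sup_{\bfbeta\in K}(2T_i)^{-1}(\bfbeta-\hat f)^\top\hat{\boldsymbol{\Sigma}}_\eta^{-1}(\bfbeta-\hat f)=O_p(T_i^{-1})=o_p(1).
\]
Notably, no consistency of $\hat f$ for $f_0$ is needed, which is why (H2) is absent from the statement.

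\textbf{Step 2: uniform convergence of the likelihood.} The uniform law of large numbers in (H6) gives $\sup_{K}|T_i^{-1}\ell_i-Q|=o_p(1)$. Combined with Step 1, this yields $\sup_K|Q_{T_i}-Q|=o_p(1)$. Positive-definite Fisher information at $\bfbeta_i$, together with uniqueness, gives a well-separated maximum on $K$.

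\textbf{Step 3: restricting to a compact set (the main obstacle).} The parameter space is $\mathbb{R}^p$, so I must show that the maximizer lies in a fixed compact with probability tending to one. Here concavity does the work. $\ell_i$ is concave, since the logistic log-likelihood is concave in the index, and the penalty is strictly concave, so $Q_{T_i}$ is concave. Its limit $Q$ is also concave, with a unique maximizer. The plan is to invoke the convexity lemma (Newey--McFadden Theorem 2.7, or Andersen--Gill / Pollard): pointwise convergence of concave functions to a concave limit with a unique interior maximizer implies uniform convergence on compacts and consistency of the argmax, with no compactness assumption needed. Concretely, take a small ball $B$ around $\bfbeta_i$. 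With probability tending to one, $Q_{T_i}$ on $\partial B$ is strictly below $Q_{T_i}(\bfbeta_i)$. Concavity then forces the global maximizer to lie inside $B$.

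The delicate point is that the logit likelihood can be unbounded along separating directions in finite samples. However, the limit $Q$ is strictly concave near $\bfbeta_i$ by positive-definite information, so this only affects events of vanishing probability. The penalty also guarantees that the finite-sample maximizer exists, and (H5) ensures the numerical procedure returns it. Letting the radius of $B$ shrink gives $\hat{\bfbeta}_i^{\mathrm{MAP}}-\bfbeta_i=o_p(1)$.
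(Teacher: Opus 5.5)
Your proposal is correct and follows the paper's proof: normalize by $T_i$, show the penalty is $O_p(T_i^{-1})$ uniformly on compacts by (H5), use the uniform law of large numbers in (H6), and apply the argmax theorem to a limit uniquely maximized at $\bfbeta_i$. Your concavity step (Newey--McFadden Theorem~2.7) supplies the passage from compact sets to all of $\mathbb{R}^p$, which the paper's proof leaves implicit; one small slip is that the logit log-likelihood is bounded above by zero, so under separation the problem is that its supremum is not attained, not that it is unbounded.
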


\begin{proof}
Divide the penalized objective by $T_i$:
\[
  \frac{1}{T_i}\ell_i(\bfbeta)
  -
  \frac{1}{2T_i}
  (\bfbeta-\hat f(\bfZ_i))^\top
  \hat{\boldsymbol{\Sigma}}_\eta^{-1}
  (\bfbeta-\hat f(\bfZ_i)).
\]
On compact sets, the penalty term is $O_p(T_i^{-1})$ uniformly by (H5), whereas the average log-likelihood converges uniformly to its population limit by (H6).  The normalized objective therefore has the same asymptotic limit as the respondent-level likelihood objective, whose unique maximizer is $\bfbeta_i$.  The argmax theorem gives $\hat{\bfbeta}_i^{\mathrm{MAP}}\to_p\bfbeta_i$.
\end{proof}

Propositions~\ref{prop:app_fixedT} and~\ref{prop:app_largeT} together describe what the respondent-level estimator can recover when both observables in $\bfZ_i$ and respondent-specific choice data are available.  The final proposition isolates a complementary question: without using respondent-specific choices, how informative is $\bfZ_i$ on its own?  Any predictor of $\beta_{ik}$ built from $\bfZ_i$ alone is bounded in its correlation with the latent coefficient by a fixed population quantity, with the bound saturated by the conditional mean $m_{0k}(\bfZ_i)$.

\begin{proposition}
\label{prop:app_corr_ceiling}
Let $m_{0k}(\bfZ_i)=\E[\beta_{ik}\mid \bfZ_i]$, and let $g_k(\bfZ_i)$ be any square-integrable predictor measurable with respect to $\bfZ_i$ with $\Var(g_k(\bfZ_i))>0$.  Assume $\Var(\beta_{ik})>0$.  Then
\[
  \operatorname{Corr}\!\bigl(g_k(\bfZ_i),\beta_{ik}\bigr)^2
  \leq
  R_{Z,k}.
\]
If $\Var(m_{0k}(\bfZ_i))>0$, then
\[
  \operatorname{Corr}\!\bigl(m_{0k}(\bfZ_i),\beta_{ik}\bigr)^2
  =
  \frac{\Var(m_{0k}(\bfZ_i))}{\Var(\beta_{ik})}
  =
  R_{Z,k}.
\]
In that case, equality holds if and only if $g_k(\bfZ_i)=a+b\,m_{0k}(\bfZ_i)$ almost surely for some $a\in\mathbb{R}$ and $b\neq 0$.  If $\Var(m_{0k}(\bfZ_i))=0$, then $R_{Z,k}=0$ and every square-integrable $Z$-measurable predictor has zero covariance with $\beta_{ik}$.
\end{proposition}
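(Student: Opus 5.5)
The plan is to reduce everything to the law of total covariance plus Cauchy--Schwarz. Write $\beta_{ik}=m_{0k}(\bfZ_i)+\eta_{ik}$, where $\eta_{ik}=\beta_{ik}-m_{0k}(\bfZ_i)$ satisfies $\E[\eta_{ik}\mid\bfZ_i]=0$. Square-integrability of $\beta_{ik}$ is implicit in $\Var(\beta_{ik})$ being finite and positive, and it makes $m_{0k}$ and $\eta_{ik}$ square-integrable as well. For any square-integrable $\bfZ_i$-measurable $g_k$, iterated expectations give $\E[g_k(\bfZ_i)\eta_{ik}]=\E[g_k(\bfZ_i)\E[\eta_{ik}\mid\bfZ_i]]=0$, and also $\E[\eta_{ik}]=0$. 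Hence
\[
\Cov\bigl(g_k(\bfZ_i),\beta_{ik}\bigr)=\Cov\bigl(g_k(\bfZ_i),m_{0k}(\bfZ_i)\bigr),
\]
which is the key identity: a $Z$-measurable predictor cannot ``see'' the residual.

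For the bound, I would apply Cauchy--Schwarz to the right-hand side:
\[
\Cov(g_k,\beta_{ik})^2\le\Var(g_k)\Var(m_{0k}).
\]
Dividing by $\Var(g_k)\Var(\beta_{ik})>0$ then yields $\operatorname{Corr}(g_k,\beta_{ik})^2\le\Var(m_{0k})/\Var(\beta_{ik})=R_{Z,k}$.

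For the attained value, I would take $g_k=m_{0k}$ in the identity. This gives $\Cov(m_{0k},\beta_{ik})=\Var(m_{0k})$, so
\[
\operatorname{Corr}(m_{0k},\beta_{ik})^2=\Var(m_{0k})^2/(\Var(m_{0k})\Var(\beta_{ik}))=R_{Z,k}
\]
whenever $\Var(m_{0k})>0$.

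The equality characterization, which is the only step needing care, comes from the equality case of Cauchy--Schwarz for centered variables. Equality holds iff $g_k-\E g_k$ and $m_{0k}-\E m_{0k}$ are linearly dependent a.s. Since both variances are positive, this means $g_k=a+b\,m_{0k}$ a.s. with $b\neq0$. Conversely, any such affine $g_k$ has correlation $\pm1$ with $m_{0k}$, so the squared correlation with $\beta_{ik}$ equals $R_{Z,k}$. Finally, if $\Var(m_{0k})=0$, then $R_{Z,k}=0$ by definition, and Cauchy--Schwarz in the identity forces $\Cov(g_k,\beta_{ik})=0$ for every admissible $g_k$. This includes those with $\Var(g_k)=0$, where the covariance vanishes trivially. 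That completes the plan.
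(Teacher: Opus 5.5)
Your proposal is correct and follows essentially the same route as the paper's proof. Both use the iterated-expectations identity $\Cov(g_k(\bfZ_i),\beta_{ik})=\Cov(g_k(\bfZ_i),m_{0k}(\bfZ_i))$, then Cauchy--Schwarz for the bound, and the Cauchy--Schwarz equality case for the affine characterization; your explicit residual decomposition and your check of the converse direction add detail but no new idea.
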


\begin{proof}
Since $g_k(\bfZ_i)$ is measurable with respect to $\bfZ_i$, the law of iterated expectations implies
\[
  \Cov(g_k(\bfZ_i),\beta_{ik})
  =
  \Cov(g_k(\bfZ_i),m_{0k}(\bfZ_i)).
\]
Therefore,
\[
  \operatorname{Corr}\!\bigl(g_k(\bfZ_i),\beta_{ik}\bigr)^2
  =
  \frac{
    \Cov(g_k(\bfZ_i),m_{0k}(\bfZ_i))^2
  }{
    \Var(g_k(\bfZ_i))\Var(\beta_{ik})
  }.
\]
Applying the Cauchy--Schwarz inequality,
\[
  \Cov(g_k(\bfZ_i),m_{0k}(\bfZ_i))^2
  \leq
  \Var(g_k(\bfZ_i))\,\Var(m_{0k}(\bfZ_i)).
\]
This yields
\[
  \operatorname{Corr}\!\bigl(g_k(\bfZ_i),\beta_{ik}\bigr)^2
  \leq
  \frac{
    \Var(m_{0k}(\bfZ_i))
  }{
    \Var(\beta_{ik})
  }.
\]
If $\Var(m_{0k}(\bfZ_i))>0$, then
\[
  \Cov(m_{0k}(\bfZ_i),\beta_{ik})
  =
  \Var(m_{0k}(\bfZ_i)),
\]
so
\[
  \operatorname{Corr}\!\bigl(m_{0k}(\bfZ_i),\beta_{ik}\bigr)^2
  =
  \frac{
    \Var(m_{0k}(\bfZ_i))
  }{
    \Var(\beta_{ik})
  }.
\]
When $\Var(m_{0k}(\bfZ_i))>0$, equality in Cauchy--Schwarz holds if and only if $g_k(\bfZ_i)$ is almost surely affine in $m_{0k}(\bfZ_i)$.  When $\Var(m_{0k}(\bfZ_i))=0$, the same covariance identity gives $\Cov(g_k(\bfZ_i),\beta_{ik})=0$ for every square-integrable $Z$-measurable predictor.
\end{proof}

These propositions formalize the central asymptotic distinction in the paper. Large $N$ identifies and estimates the population-average structural mean $\E[m_{0k}(\bfZ_i)]=\E[\beta_{ik}]$ under the maintained mean-logit specification; in the more general working-model notation of the propositions, the same result is stated for the population-average conditional-logit projection $\E[f_{0k}(\bfZ_i)]$. With fixed $T$, the respondent-level estimator converges only to an oracle penalized summary. Large $T$ is what turns the respondent-level update into a consistent estimator of the individual preference vector itself because the respondent-level likelihood dominates the prior penalty.

The $R_{Z,k}$ result describes a different object: the best possible correlation between the latent coefficient and any predictor based only on observed respondent characteristics.  It is a fixed superpopulation explained-variance ratio under the $N$ and $T$ asymptotic sequences above, not a quantity that generically converges to one.  A sequence with $R_{Z,k}\to 1$ would require a separate assumption, such as $\E[\Var(\beta_{ik}\mid\bfZ_i)]/\Var(\beta_{ik})\to0$, that residual heterogeneity in $\beta_{ik}$ conditional on $\bfZ_i$ vanishes.  By contrast, the respondent-level MAP propositions concern estimators that also use respondent-specific choice data.  In particular, Proposition~\ref{prop:app_largeT} is not a $Z$-only prediction result and does not imply $R_{Z,k}\to1$.

\FloatBarrier

\section{Debiased Inference}\label{app:other_estimands}

This appendix develops orthogonal-score inference for the population-average preference parameters and for the broader set of smooth functionals reported in the applications.  We first present the orthogonal score and respondent-clustered variance estimator for the average parameters $\theta_k$, then extend the same cross-fitted construction to nonlinear functionals such as average marginal effects, counterfactual win probabilities, attribute-importance shares, and marginal rates of substitution.

\subsection{Orthogonal score and clustered inference}

For population-average parameters, the maintained structural target is $\E[f_k(\bfZ_i)]=\E[\beta_{ik}]$; the high-level orthogonal-score result writes the target in its working-model form,
\[
  \theta_k = \E[f_{0k}(\bfZ_i)],
\]
which equals the structural target when the maintained mean-logit model holds.
The population local information matrix is
\[
  \Lambda_{0,jk}(\bfZ)
  =
  \E\!\left[
    G'\!\left(\bfDelta\bfX^\top f_0(\bfZ)\right)\Delta X_j \Delta X_k
    \;\middle|\;
    \bfZ
  \right],
\]
with $G'(v)=G(v)(1-G(v))$.  The orthogonal score $\psi_{ikt}$ and the debiased estimator $\hat\theta_k=N^{-1}\sum_i T_i^{-1}\sum_t\psi_{ikt}$ are the feasible quantities of~\eqref{eq:psi} and~\eqref{eq:theta_hat}: the plug-in DNN mean $\hat f_k(\bfZ_i)$ plus the influence-function correction $[\hat{\bfLambda}^{-1}(\bfZ_i)\,\bfDelta\bfX_{it}\,(Y_{it}-\hat G_{it})]_k$ formed from the logit residual and the inverse local information matrix.

The key orthogonality property is
\[
  \frac{\partial}{\partial r}
  \E\!\left[\psi_{ikt}(\theta_k^0, f_0 + r\,\mathbf{h})\right]\bigg|_{r=0}
  =
  0
  \qquad
  \text{for all perturbation directions } \mathbf{h},
\]
which implies that first-order errors in the DNN mean stage do not propagate into $\hat\theta_k$.  We estimate the variance by clustering at the respondent level, using the centered respondent-level contributions $\hat\phi_{ik} = T_i^{-1}\sum_{t=1}^{T_i}\psi_{ikt} - \hat\theta_k$:
\[
  \widehat{\Var}(\hat{\theta}_k)
  =
  \frac{1}{N(N-1)}
  \sum_{i=1}^{N}
  \hat\phi_{ik}^{\,2}
  =
  \frac{\hat V_k}{N},
  \qquad
  \hat V_k = \frac{N}{N-1}\cdot\frac{1}{N}\sum_{i=1}^{N}\hat\phi_{ik}^{\,2}.
\]
Centering each respondent's averaged score by $\hat\theta_k$ is required because $\psi_{ikt}$ has mean $\theta_k \neq 0$; this matches the respondent-clustered estimator of Proposition~\ref{prop:app_avg}.  Under the regularity conditions in \citet{farrell2025deep} and \citet{chernozhukov2018double}, this yields $\sqrt{N}$-consistent asymptotically normal inference for the average preference parameters, where $N$ is the number of respondents (the independent sampling units).

\subsection{Debiased inference for additional quantities of interest}

The orthogonal score in~\eqref{eq:psi} debiases the linear mean-stage functional, written in the general notation as
$\theta_k=\E[f_{0k}(\bfZ_i)]$; under the maintained mean-logit model this is $\E[f_k(\bfZ_i)]=\E[\beta_{ik}]$.  Many quantities of interest are smooth
\emph{nonlinear} functionals of the same mean stage $f_0$, and they inherit
$\sqrt{N}$-consistent, asymptotically normal inference from the \emph{same}
cross-fitted nuisances by composing~\eqref{eq:psi} with a gradient.  Let $W_i$
collect the randomized design data for respondent $i$ and let
$H(\cdot,W):\mathbb{R}^p\to\mathbb{R}$ be continuously differentiable in its
first argument.

\begin{proposition}
\label{prop:app_master}
Let $\theta_H=\E[H(f_0(\bfZ_i),W_i)]$ and define the task-level signal
\[
  m^{H}_{it}(f,\bfLambda)
  = H\!\big(f(\bfZ_i),W_i\big)
  + \nabla_f H\!\big(f(\bfZ_i),W_i\big)^{\top}
    \bfLambda^{-1}(\bfZ_i)\,\bfDelta\bfX_{it}\,\big(Y_{it}-G_{it}\big),
\]
with $G_{it}=G(\bfDelta\bfX_{it}^{\top}f(\bfZ_i))$, and define the centered score
\[
  \psi^{H}_{it}(\theta,f,\bfLambda)
  =
  m^{H}_{it}(f,\bfLambda)-\theta.
\]
The population information matrix is $\bfLambda_0(\bfZ)=\E[G'(\bfDelta\bfX^{\top}f_0(\bfZ))\,\bfDelta\bfX\bfDelta\bfX^{\top}\mid\bfZ]$.
Suppose (H1)--(H4) hold, $H$ is $C^1$ with locally Lipschitz gradient and finite
envelope, and \textup{(HG)}
$$\Cov\big(\nabla_f H(f_0(\bfZ_i),W_i),\,G'_{it}\bfDelta\bfX_{it}\bfDelta\bfX_{it}^{\top}\mid\bfZ_i\big)=\mathbf{0}.$$
Then $\E[m^{H}_{it}(f_0,\bfLambda_0)]=\theta_H$, $\E[\psi^{H}_{it}(\theta_H,f_0,\bfLambda_0)]=0$, the centered moment is Neyman-orthogonal in
$(f,\bfLambda)$, and
\[
  \hat\theta_H=\frac{1}{N}\sum_{i=1}^{N}\frac{1}{T_i}\sum_{t=1}^{T_i}
  \Big[H(\hat f(\bfZ_i),W_i)+\nabla_f H(\hat f(\bfZ_i),W_i)^{\top}
  \hat\bfLambda^{-1}(\bfZ_i)\bfDelta\bfX_{it}(Y_{it}-\hat G_{it})\Big]
\]
satisfies $\sqrt{N}(\hat\theta_H-\theta_H)\xrightarrow{d}N(0,V_H)$,
$V_H=\Var(\phi_i)$, $\phi_i=T_i^{-1}\sum_t\psi^{H}_{it}(\theta_H,f_0,\bfLambda_0)$,
estimated consistently by the respondent-clustered
$\hat V_H=\tfrac{N}{N-1}\tfrac1N\sum_i\hat\phi_i^{\,2}$, with $\hat\phi_i=T_i^{-1}\sum_t\psi^{H}_{it}(\hat\theta_H,\hat f,\hat\bfLambda)$.
\end{proposition}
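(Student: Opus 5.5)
The argument has three parts: unbiasedness at the truth, Neyman orthogonality in $(f,\bfLambda)$, and the cross-fitting argument used for Proposition~\ref{prop:app_avg}. Only the orthogonality step is new, and it is where (HG) enters.

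\textbf{Unbiasedness at the truth.} Condition on $(\bfZ_i,W_i)$. By (H2) and randomization (H1), $\E[\bfDelta\bfX_{it}(Y_{it}-G(\bfDelta\bfX_{it}^\top f_0(\bfZ_i)))\mid\bfZ_i]=\mathbf 0$. Because $\nabla_fH(f_0(\bfZ_i),W_i)$ and $\bfLambda_0^{-1}(\bfZ_i)$ are fixed given the conditioning information, the correction term has conditional mean zero. Hence $\E[m^H_{it}(f_0,\bfLambda_0)]=\E[H(f_0(\bfZ_i),W_i)]=\theta_H$. Under the maintained model the residual is conditionally mean zero given $(\bfDelta\bfX_{it},\bfZ_i)$, and I would state that randomization of $\bfDelta\bfX_{it}$ makes $W_i$ enter only through design data that are independent of the logit residual.

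\textbf{Orthogonality in $\bfLambda$.} The $\bfLambda$-derivative in direction $\mathbf{D}$ is $-\nabla H^\top\bfLambda_0^{-1}\mathbf D\bfLambda_0^{-1}\bfDelta\bfX(Y-G_0)$. This still multiplies the mean-zero residual, so it vanishes in expectation for the same reason as above.

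\textbf{Orthogonality in $f$ (main obstacle).} Take a perturbation $f_0+r\mathbf h$ and differentiate at $r=0$. This gives three pieces:
\begin{enumerate}
\item $\E[\nabla_fH^\top\mathbf h(\bfZ_i)]$ from the plug-in term;
\item $\E[\mathbf h^\top\nabla^2_fH\,\bfLambda_0^{-1}\bfDelta\bfX(Y-G_0)]$ from differentiating the gradient, which is zero by the residual argument;
\item $-\E[\nabla_fH^\top\bfLambda_0^{-1}G'_{it}\bfDelta\bfX_{it}\bfDelta\bfX_{it}^\top\mathbf h(\bfZ_i)]$ from differentiating $G_{it}$.
\end{enumerate}
Condition the third piece on $\bfZ_i$. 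By (HG), the conditional expectation of the product $\nabla_fH^\top\bfLambda_0^{-1}(G'\bfDelta\bfX\bfDelta\bfX^\top)$ factors into the product of conditional expectations, because $\bfLambda_0^{-1}(\bfZ_i)$ is $\bfZ_i$-measurable. The factor $\E[G'\bfDelta\bfX\bfDelta\bfX^\top\mid\bfZ_i]=\bfLambda_0(\bfZ_i)$ then cancels $\bfLambda_0^{-1}$, leaving $-\E[\E[\nabla_fH\mid\bfZ_i]^\top\mathbf h(\bfZ_i)]$. This exactly cancels the first piece. The differentiation under the integral sign is justified by the $C^1$, locally Lipschitz, and finite-envelope assumptions on $H$, together with the boundedness of $G'$ and of the design.

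\textbf{Asymptotic normality.} With orthogonality in hand, follow the route of Proposition~\ref{prop:app_avg}. Decompose $\hat\theta_H-\theta_H$ into three terms:
\begin{enumerate}
\item the oracle average $N^{-1}\sum_i\phi_i$;
\item a cross-fitted empirical-process term, which is $o_p(N^{-1/2})$ because respondent-level folds make the nuisances independent of the evaluation sample and $\|\hat f-f_0\|_{L^2}\to0$, $\|\hat\bfLambda-\bfLambda_0\|\to0$;
\item a second-order bias term.
\end{enumerate}
The second-order term is bounded by a Taylor expansion using the Lipschitz gradient of $H$, the smoothness of $G$, and the eigenvalue bounds in (H3). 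It is $O_p(\|\hat f-f_0\|^2_{L^2}+\|\hat f-f_0\|\,\|\hat\bfLambda-\bfLambda_0\|)=o_p(N^{-1/2})$, which is the high-level remainder condition in (H4). The Lyapunov CLT over i.i.d.\ respondent clusters, using the $2+\delta$ moments in (H1) extended to $\psi^H$ via the envelope, gives $N(0,V_H)$. Consistency of $\hat V_H$ follows from the same nuisance convergence together with $\hat\theta_H\to_p\theta_H$ and a weak law of large numbers for $\hat\phi_i^2$. The step I expect to need the most care is the (HG) factorization, including checking that the conditioning on $\bfZ_i$ versus $W_i$ is handled correctly.
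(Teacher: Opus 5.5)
Your proposal is correct and follows the route the paper indicates; the paper gives no separate proof, only the remarks after the statement. In your argument, (HG) collapses $\E[\nabla_f H^\top\bfLambda_0^{-1}(\bfZ_i)G'_{it}\bfDelta\bfX_{it}\bfDelta\bfX_{it}^\top\mid\bfZ_i]$ to $\E[\nabla_f H\mid\bfZ_i]^\top$, which cancels the plug-in derivative, and the cross-fitted DML argument of Proposition~\ref{prop:app_avg} then applies. One step needs tightening: under the explicit rates in (H4), the $f$--$\bfLambda$ cross term $\|\hat f-f_0\|\,\|\hat\bfLambda-\bfLambda_0\|$ is only $o_p(N^{-1/4})$, so, as the paper notes right after the statement, you need the product rate $\|\hat\bfLambda-\bfLambda_0\|=o_p(N^{-1/4})$, or you must read the catch-all remainder clause of (H4) as covering the $H$-score.
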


Note that setting $H(f,W)=f_k$ (so $\nabla_f H=\mathbf{e}_k$, and
(HG) holds trivially) returns exactly~\eqref{eq:psi} and
Proposition~\ref{prop:app_avg}.  When $H$ is nonlinear, the second-order
$f$--$\bfLambda$ remainder carries a non-residual cross term not removed by the
defining moment of $f_0$, so $\sqrt{N}$-normality additionally requires the
\emph{product} rate $\|\hat f-f_0\|_{L_2}\,\|\hat\bfLambda-\bfLambda_0\|=o_p(N^{-1/2})$,
i.e.\ $\|\hat\bfLambda-\bfLambda_0\|=o_p(N^{-1/4})$.  (The same cross term is
present for $\theta_k$ and is absorbed in the catch-all remainder clause of condition (H4).)

Next, we make a series of observations that follow from the result above, which are all instances of the same estimator.  For a target $\theta_H=\E[H(f_0(\bfZ_i),W_i)]$, with $W_i$ collecting the relevant design data (the profile randomization law, a fixed contest contrast, or a design second-moment block), the feasible estimator $\hat\theta_H$ averages over respondents and tasks the plug-in $H(\hat f(\bfZ_i),W_i)$ together with the one-step correction $\nabla_f H(\hat f(\bfZ_i),W_i)^{\top}\hat\bfLambda^{-1}(\bfZ_i)\bfDelta\bfX_{it}(Y_{it}-\hat G_{it})$, which re-weights the average-parameter influence term of~\eqref{eq:psi} by $\nabla_f H$; its standard error is the respondent-clustered $\widehat{\mathrm{SE}}(\hat\theta_H)=\sqrt{\hat V_H/N}$, with $\hat V_H=\tfrac{N}{N-1}\tfrac1N\sum_i\hat\phi_i^{2}$ and $\hat\phi_i=T_i^{-1}\sum_t\psi^{H}_{it}(\hat\theta_H,\hat f,\hat\bfLambda)$.  The cross-fitted nuisances $(\hat f,\hat\bfLambda)$ are estimated once and reused; only the functional $H$ and its gradient $\nabla_f H$ change across the quantities below, which the observations below take up in turn.

The four observations correspond to the smooth mean-stage quantities used in the paper. AMEs connect the structural logit scale back to the AMCE tradition; counterfactual win probabilities are the electoral-comparison quantities used in the applications; attribute-importance shares quantify which blocks of attributes drive systematic utility variation; and population MRS/WTP are smooth tradeoff ratios for which the average-parameter scores deliver standard delta-method inference. This is why they admit the orthogonal-score treatment below, while polarization fractions, total importance shares, and compensating-differential fractions are separated as distributional functionals.
\begin{itemize}[leftmargin=1.5em,noitemsep]
  \item \emph{Average marginal effect}: let $\mathcal Q_k$ be the actual profile-pair randomization law for the remaining admissible contrast vector $\mathbf{D}_{-k}$ after fixing the level-$k$ contrast to switch from the reference/off condition to level $k$, respecting mutually exclusive levels, reference categories, and any design restrictions. Then
  \[
    H_k(f)=\int\!\left[
      G\!\left(f_k+\mathbf{d}_{-k}^{\top}f_{-k}\right)
      -
      G\!\left(\mathbf{d}_{-k}^{\top}f_{-k}\right)
    \right]d\mathcal Q_k(\mathbf{d}_{-k});
  \]
  the gradient has $a_k=\int G'_{\mathrm{on}}\,d\mathcal Q_k$ and $a_j=\int d_j(G'_{\mathrm{on}}-G'_{\mathrm{off}})\,d\mathcal Q_k$ for $j\neq k$.
  \item \emph{Counterfactual win probability}: for a fixed contrast $\bfc=\bfX_A-\bfX_B$, $H(f)=G(\bfc^{\top}f)$ with gradient $\nabla_f H=G'(\bfc^{\top}f)\,\bfc$.
  \item \emph{Attribute-importance share}: the systematic numerator $N_a^{\mathrm{sys}}$ from a quadratic score $\psi^{N_a}_{it}$, with the share $N_a^{\mathrm{sys}}/\sum_b N_b^{\mathrm{sys}}$ formed by the delta method.
  \item \emph{Population MRS and WTP}: a function of the average parameters, $g(\theta)=-\theta_j/\theta_k$ with gradient $\nabla g=(-1/\theta_k,\theta_j/\theta_k^{2})^{\top}$, via the delta method on the $\hat\theta$ scores.
\end{itemize}

More precisely, we have: \medskip

\noindent \textbf{Average marginal effect on the probability scale.} 
The estimand is the AME of \S\ref{sec:quantities}, $\theta_H=\mathrm{AME}_k^{f_0}$, with $H=H_k$ defined by the design integral over $\mathcal Q_k$ above and gradient $a(\bfZ)=\nabla_f H_k(f_0(\bfZ))$. Here $G'_{\mathrm{on}}=G'(f_{0k}+\mathbf{d}_{-k}^{\top}f_{0,-k})$ and $G'_{\mathrm{off}}=G'(\mathbf{d}_{-k}^{\top}f_{0,-k})$ inside the $\mathcal Q_k$ integral. Because $a$ is a $Y$-free functional of the known randomization law, it is $\bfZ$-measurable and satisfies (HG) by construction; under (H1)--(H4), the product rate, and a profile pool that grows proportionally to $N$, Proposition~\ref{prop:app_master} gives $\sqrt{N}$-normal inference for $\mathrm{AME}_k^{f_0}$, which equals the nonparametric AMCE under correct specification and the latent-coefficient AME under the maintained mean-logit specification. \medskip

\noindent \textbf{Counterfactual win probability and comparative statics.} 
The estimand is the counterfactual win probability of \S\ref{sec:quantities}, $\theta_H=\pi_{AB}=\E_{\bfZ}[G(\bfc^{\top}f_0(\bfZ_i))]$, with $H$ and gradient as listed above for a fixed contrast $\bfc=\bfX_A-\bfX_B$ independent of the randomized $\bfDelta\bfX_{it}$.  Condition (HG) holds because $\bfc$ is fixed; Proposition~\ref{prop:app_master} (with the product rate) yields $\sqrt{N}$-normal inference for $\pi_{AB}$.  The comparative static $\Delta=\pi_{A'B}-\pi_{AB}$ has the differenced influence $\psi^{\pi'}_{it}-\psi^{\pi}_{it}$, whose respondent-clustered variance nets out the within-respondent covariance of the two contests.  The $\eta$-integrated electorate share $\E[G(\bfc^{\top}\bfbeta_i)]$ differs from $\pi_{AB}$ by a Jensen gap and is not point-identified at fixed $T$. \medskip

\noindent \textbf{Attribute importance: mean-stage share.} 
The estimand is the between-respondent attribute-importance share of \S\ref{sec:quantities}.  Let $\bfS_a=\Var_X(\bfDelta\bfX_{\cdot,a})$ be the (known) design second-moment block of attribute $a$ and $N_a^{\mathrm{sys}}=\E[f_{0,a}(\bfZ_i)^{\top}\bfS_a f_{0,a}(\bfZ_i)]$, with quadratic-functional score
\[
  \psi^{N_a}_{it}=f_a(\bfZ_i)^{\top}\bfS_a f_a(\bfZ_i)
  +2\big(\bfS_a f_a(\bfZ_i)\big)^{\top}
    \big[\bfLambda^{-1}(\bfZ_i)\bfDelta\bfX_{it}(Y_{it}-G_{it})\big]_{g_a}-N_a^{\mathrm{sys}}.
\]
This score is Neyman-orthogonal in $(f,\bfLambda)$ under (H1)--(H4), a strict $o_p(N^{-1/4})$ mean rate, and $\E\|f_0\|^4<\infty$.  The share $\theta^{\mathrm{share,sys}}_a=N_a^{\mathrm{sys}}/\sum_b N_b^{\mathrm{sys}}$ then admits $\sqrt{N}$-normal inference by the delta method, with simplex Jacobian $\bfJ_{ab}=(\mathbf{1}\{a=b\}-\theta^{\mathrm{share,sys}}_a)/\sum_b N_b^{\mathrm{sys}}$ and respondent-clustered standard error $\sqrt{\bfJ^{\top}\hat\bfSigma_N\bfJ/N}$.  The total share, which adds $\operatorname{tr}(\bfS_a\,\E[\Var(\bfeta_{i,a}\mid\bfZ_i)])$, is a functional of the residual law and is not point-identified at fixed $T$ (Proposition~\ref{prop:app_fixedT}). \medskip

\noindent \textbf{Population MRS and willingness to pay at the mean.} 
The estimand is the population MRS and WTP of \S\ref{sec:quantities}, a function of the average parameters (with $g$ and gradient $\nabla g$ as listed above, and $H_{\mathrm{WTP}}(\theta)=-\theta_j/\theta_{\mathrm{money}}$), handled by the delta method on the average-parameter scores.  Stacking the scores~\eqref{eq:psi} for the two coordinates gives the orthogonal influence $\IF_i=\nabla g(\theta_0)^{\top}(\phi_{ij},\phi_{ik})^{\top}$, which is mean-zero because $(\phi_{ij},\phi_{ik})$ is mean-zero and the delta-method gradient is evaluated at the fixed population value $\theta_0$.  Under (H1)--(H4), joint asymptotic normality of $(\hat\theta_j,\hat\theta_k)$, a denominator bounded away from zero, and $V=\nabla g^{\top}\bfSigma\nabla g>0$, $\sqrt{N}\{(-\hat\theta_j/\hat\theta_k)-(-\theta_j/\theta_k)\}\xrightarrow{d}N(0,V)$, with $V$ estimated by $\nabla g^{\top}\hat\bfSigma_{\{j,k\}}\nabla g$ from the respondent-clustered $\hat\bfSigma$.  Near a vanishing denominator a Fieller interval is reported instead.  These ratios of average coefficients are distinct from the average of individual ratios. \medskip

In addition, we have the following remarks: \medskip

\noindent \textbf{Distributional functionals.}  The polarization fraction
$\pi_k=\Pr(\beta_{ik}>0)$, the total importance share, and threshold fractions
such as the compensating differentials are distributional
functionals of the latent law of $\bfbeta_i$.  At fixed $T$ the residual law
$F_{\eta_k\mid\bfZ}$ is not point-identified (Proposition~\ref{prop:app_fixedT}),
and the indicator $\mathbf{1}\{\beta_{ik}>0\}$ is non-smooth, so no
$\sqrt{N}$-orthogonal score exists for the exact target.  A smoothed-CDF score
yields $\sqrt{N}$ inference only for a pre-smoothed surrogate, and only under
the mean-logit specification, an independently estimated residual law, and a
coupled bandwidth/first-stage rate.  We therefore recommend a respondent-cluster
multiplier (wild) bootstrap for these quantities, noting that the fixed-$T$
shrinkage of $\hat\bfbeta_i$ (Proposition~\ref{prop:app_fixedT}) biases the
plug-in toward consensus and is a centering bias the bootstrap does not remove. \medskip

\noindent \textbf{Finite-sample coverage.}  To confirm that the debiased intervals of Proposition~\ref{prop:app_master} and its associated observations above attain their nominal level in finite samples, we simulated coverage for each quantity of interest under the DNN first stage used in the applications.  On a data-generating process whose mean preference vector varies smoothly with three respondent moderators ($N=1{,}000$ respondents, $T=8$ tasks, $300$ replications), the nominal $95\%$ respondent-clustered intervals are well calibrated for every quantity (DNN column of Table~\ref{tab:learner_coverage}). The average parameter, the average marginal effect, the counterfactual vote share and its comparative static, the attribute-importance share, and willingness to pay all lie between $0.92$ and $0.99$, and only the marginal rate of substitution---a ratio of average parameters, and the hardest case in the hierarchy of Supplementary Materials~\ref{sec:asymptotics}---runs a little low at $0.85$.  The same table shows the suite holds across the elastic-net and forest first stages, and Supplementary Materials~\ref{app:thetak_mc} traces the average-parameter coverage across sample sizes and mean-surface shapes.

\subsection{Monte-Carlo coverage for various first-stage learners}\label{app:thetak_mc}

Because the debiased estimator uses the first stage only as a plug-in, the same orthogonal-score machinery accommodates any sufficiently accurate learner; this study verifies that claim and identifies which learner is best calibrated in finite samples.  We ran a dedicated Monte-Carlo study for $\theta_k$.  The displayed grid crosses $N \in \{500,1000,2000\}$ respondents, $T \in \{5,15\}$ tasks per respondent, three shapes for the conditional mean surface $f_0(\bfZ)$ (linear, mild nonlinear, and strong nonlinear), and three flexible first-stage learners: the DNN used in the applications, the spline-expanded elastic net, and the GRF.  Each cell uses $p=5$ preference coordinates, three respondent moderators, $\sigma_\eta=0.4$, five-fold respondent-level cross-fitting, no stage-2 update, and 500 replications.  The estimand is the first average preference coordinate, $\theta_1$; all learners feed the same orthogonal-score and respondent-clustered variance calculation.

Table~\ref{tab:thetak_mc_headline} reports a representative cell at $N=1{,}000$ and $T=15$, and Figure~\ref{fig:thetak_mc_coverage} shows the full coverage surface over $N$.  All three learners deliver broadly valid intervals---confirming that the framework is not tied to any single first stage---but they are not equally calibrated.  The DNN is closest to nominal across the grid and essentially on target in the larger cells, which is why we adopt it as the default throughout the applications.  The spline-expanded elastic net and GRF run somewhat anti-conservative in several cells, especially at smaller or intermediate $N$.  Orthogonalization removes first-order first-stage bias, but the residual finite-sample calibration still depends on how well the learner tracks the conditional-logit projection and on the size of the respondent-level sample.

\begin{table}[H]
\centering
\footnotesize
\caption{Monte-Carlo Coverage for $\theta_1$ at $N=1{,}000$ and $T=15$, by DGP Shape and First-stage Learner}
\label{tab:thetak_mc_headline}
\begin{tabular}{@{}llrrrrr@{}}
\toprule
DGP shape & Learner & Coverage & MCSE & Mean CI width & Mean bias & Bias/SE \\
\midrule
Linear & DNN       & 0.944 & 0.010 & 0.123 &  0.0078 &  0.248 \\
Linear & EN-spline & 0.912 & 0.013 & 0.102 &  0.0144 &  0.550 \\
Linear & GRF       & 0.918 & 0.012 & 0.088 &  0.0117 &  0.522 \\
Mild nonlinear & DNN       & 0.944 & 0.010 & 0.215 &  0.0023 &  0.041 \\
Mild nonlinear & EN-spline & 0.898 & 0.014 & 0.348 & -0.0305 & -0.344 \\
Mild nonlinear & GRF       & 0.924 & 0.012 & 0.103 &  0.0081 &  0.310 \\
Strong nonlinear & DNN       & 0.964 & 0.008 & 0.292 &  0.0110 &  0.148 \\
Strong nonlinear & EN-spline & 0.938 & 0.011 & 0.465 &  0.0157 &  0.132 \\
Strong nonlinear & GRF       & 0.914 & 0.013 & 0.208 & -0.0021 & -0.039 \\
\bottomrule
\end{tabular}
\par\vspace{3pt}
\begin{minipage}[]{1\textwidth}\footnotesize
    \textit{Note:} Each entry is based on 500 Monte-Carlo replications.  Coverage is the fraction of nominal 95\% respondent-clustered Wald intervals containing the known truth for $\theta_1$.  MCSE is the Monte-Carlo standard error of the coverage estimate.  EN-spline denotes the elastic-net first stage with the package's default spline-expanded moderator basis.
\end{minipage}
\end{table}

\begin{figure}[!ht]
  \centering
  \caption{Monte-Carlo Coverage of Nominal 95\% Debiased Intervals for $\theta_1$}
  \label{fig:thetak_mc_coverage}
  \includegraphics[width=\textwidth]{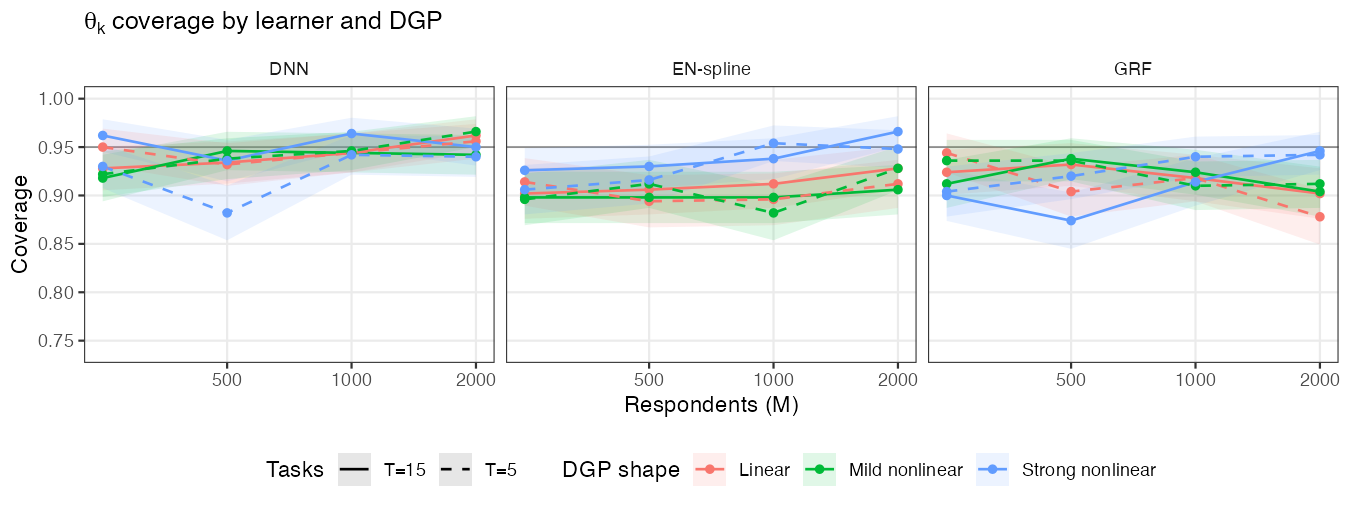}
  \begin{minipage}[]{1\textwidth}\footnotesize
        \textit{Note:} Each point is a 500-replication cell, with the very small $N=250$ cells omitted from the display.  Panels vary the first-stage learner; colors vary the DGP shape for the conditional mean surface $f_0(\bfZ)$; line type varies the number of tasks per respondent.  Shaded bands are $\pm 1.96$ Monte-Carlo standard errors and the horizontal line marks nominal 95\% coverage.  The DNN approaches nominal coverage in the larger cells.  The spline-expanded elastic net and GRF remain somewhat anti-conservative in several cells, indicating finite-sample first-stage or variance-estimation sensitivity rather than a universal learner-invariance result.
    \end{minipage}
\end{figure}

\FloatBarrier

\subsection{Scale identification and a diagnostic}\label{app:scale}

Forced-choice data identify the preference vector only up to a respondent-specific scale: if utility carries a scale $\sigma_i$, the choice likelihood depends on $\bfbeta_i/\sigma_i$ alone, so taste heterogeneity and choice-consistency heterogeneity are not separately identified.  Table~\ref{tab:invariance} records what this normalization leaves intact.  Signs and the direction-based shares built from them, individual marginal rates of substitution, compensating-differential thresholds, and within-respondent importance shares are sign comparisons or ratios internal to a respondent, and are invariant to $\sigma_i$.  Cross-respondent comparisons of coefficient \emph{magnitudes}---the population MRS and willingness to pay in level units, and displays of preference intensity---are not.

\begin{table}[t]
\centering\small
\caption{Scale Invariance of Reported Quantities under a Respondent-specific Utility Scale $\sigma_i$}
\label{tab:invariance}
\begin{tabular}{@{}p{0.62\textwidth} c@{}}
\toprule
Quantity & Scale-invariant? \\
\midrule
Signs of $\hat\beta_{ik}$; direction shares; polarization fractions $\Pr(\beta_{ik}>0)$ & yes \\
Individual MRS and compensating-differential thresholds & yes \\
Within-respondent importance shares & yes \\
Counterfactual win probabilities and vote shares on the normalized choice scale & yes \\
Population-average sign and direction of $\theta_k$ & yes \\
$\theta_k$ magnitudes compared across respondent groups & no \\
Population MRS / WTP in level units; intensity displays & no \\
\bottomrule
\end{tabular}
\begin{minipage}[]{1\textwidth}\footnotesize
    \textit{Note:} Within-respondent ratios and sign-based quantities survive; cross-respondent magnitude comparisons do not. For counterfactual win probabilities and vote shares, the ``yes'' refers to the already normalized logit choice scale; applying an arbitrary additional positive rescaling to the reported coefficient vector changes $G(\bfc^\top\bfbeta)$.
\end{minipage}
\end{table}

As a diagnostic for whether the implied scale varies systematically with observables, we regress a scale proxy, $\log\|\hat\bfbeta_i\|$, on the respondent covariates available in the democracy application.  The proxy depends weakly on ideology and party identification (each significant at the $5\%$ level), but the covariates jointly explain only $2.7\%$ of its variance, so scale heterogeneity along these observed dimensions is modest.  We accordingly state the cross-group comparisons in the main text in scale-invariant terms---signs, fractions, and within-respondent ratios---and read the magnitude displays as descriptive.

\subsection{Three aggregations of a counterfactual contest}\label{app:three_agg}

The counterfactual win probabilities in the main text aggregate the recovered $\hat\bfbeta_i$ through the logit link, $\E[G(\bfDelta\bfX^\top\hat\bfbeta_i)]$. Two alternative aggregations are useful benchmarks. The first is the debiased mean-stage share, $\E[G(\bfDelta\bfX^\top \hat f(\bfZ_i))]$, which replaces each respondent's updated vector with its design-identified mean stage. The second is the majority-preference function, $\Pr_i(\bfDelta\bfX^\top\hat\bfbeta_i > 0)$ \citep{abramson2022voter}, which counts respondents whose latent utility favors the candidate rather than averaging choice probabilities.

For the headline contest---a co-partisan candidate who endorses prosecuting journalists---all three aggregations imply that the candidate falls below a majority. The plug-in share is $0.48$, the debiased mean-stage share is $0.47$, and the majority-preference function is $0.44$. Thus, the one-point wedge between the plug-in and mean-stage shares lies within the range acknowledged in the main text, and the substantive conclusion does not depend on the aggregation rule for this contest. We use the plug-in share in the main text because it is the closest analog to a predicted vote share under the fitted structural model: it averages each respondent's model-implied choice probability and includes the residual heterogeneity recovered by the empirical-Bayes update. The debiased mean-stage share is the cleaner inferential benchmark, while the majority-preference function is a scale-invariant directional benchmark that asks who has positive latent utility rather than who probabilistically votes for the candidate. We report the plug-in share in the main text and flag any contest for which the sign of the conclusion changes across aggregation rules.

\section{Validation Against Reduced-Form Benchmarks}\label{app:validation}

A useful sanity check for the hybrid estimator is to compare its mean-stage averages with standard pooled and subgroup homogeneous-logit estimates. These reduced-form logits are not, in general, exact targets for $\theta_k=\E[f_k(\bfZ)]$ in a heterogeneous logit model: because the logit link is nonlinear, the pooled coefficient solves its own projection problem and need not equal the average of heterogeneous mean-stage coefficients, even when $\Delta\bfX\perp\bfZ$. Equality requires no relevant heterogeneity or special symmetry/linearization conditions. We therefore use the pooled and subgroup logits as familiar reduced-form benchmarks rather than as quantities the structural estimator must reproduce exactly. This validation covers only the \emph{mean} of preferences---the individual-level heterogeneity recovered by the empirical-Bayes update is precisely what reduced-form estimators cannot access, and therefore cannot be validated against them.

We use the \citet{bansak2016europeans} European immigration conjoint for this check because of its large sample size ($NT = 74{,}090$ tasks from 14{,}818 respondents across 15 countries), which makes the within-country logit benchmarks themselves precisely estimated.  For the overall comparison, we fit a pooled homogeneous logit of $Y$ on the 28 attribute-level dummies and compare each coefficient to the DNN mean stage's $\hat\theta_k = N^{-1}\sum_i \hat f_k(\bfZ_i)$.  For the subgroup comparison, we refit the logit separately within each of the 15 country subsamples ($\approx 5{,}000$ tasks each) and compare each country-specific coefficient to the DNN's within-country conditional mean $\E_n[\hat f_k(\bfZ_i) \mid \text{country}_i = c]$.

\begin{figure}[t!]
  \centering
  \caption{Validation of Hybrid-estimator Averages against Reduced-form Homogeneous Logit Estimates on the Bansak et al. (2016) Immigration Conjoint}
  \label{fig:validation_amce}
  \IfFileExists{figs/fig_validation_amce.pdf}{
    \includegraphics[width=0.9\textwidth]{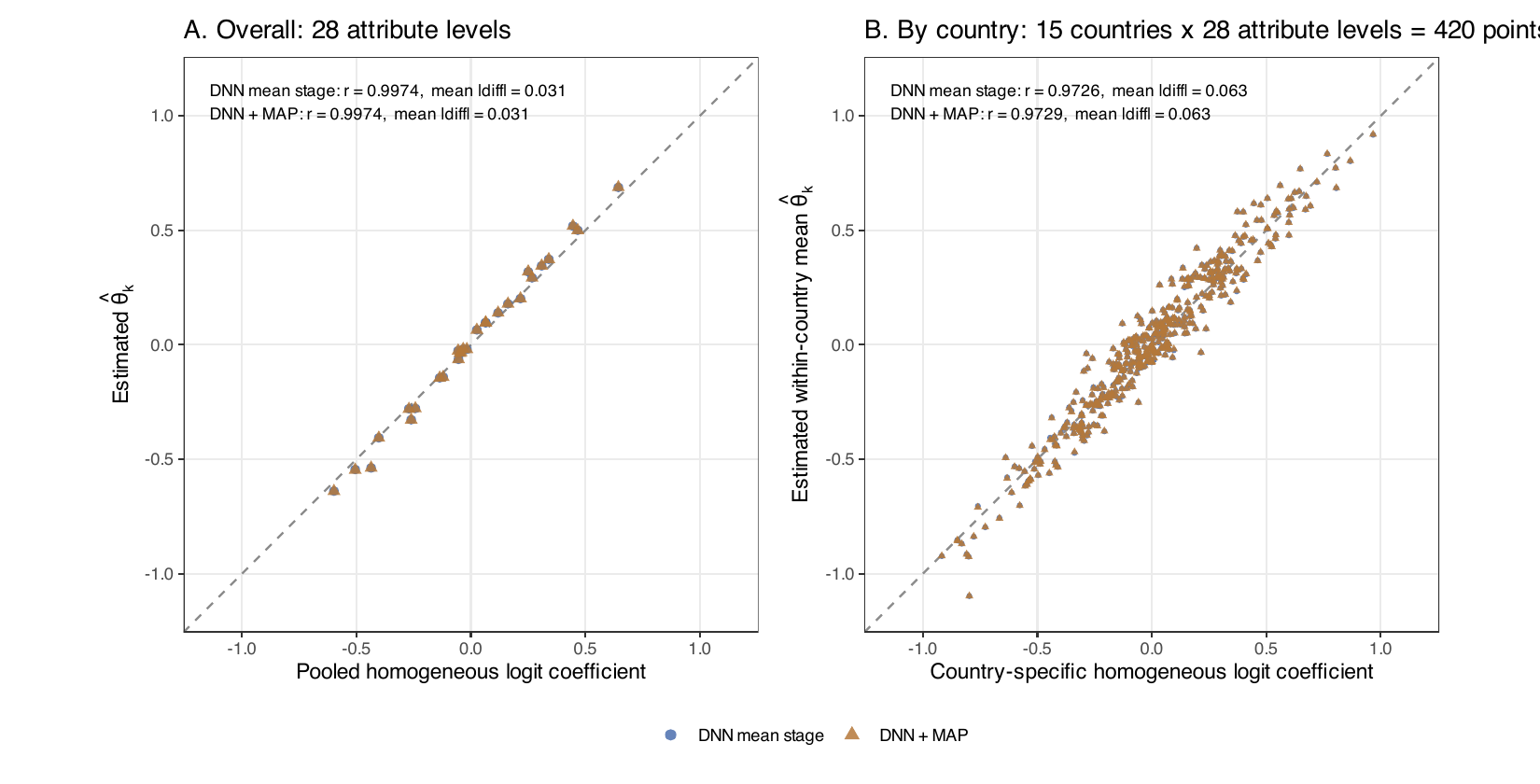}
  }{
    \fbox{\parbox{0.8\textwidth}{\centering\textit{[Validation figure pending: re-run \texttt{code/09\_run\_all.R} + \texttt{code/70\_validation\_amce.R} + \texttt{code/71\_figures\_validation\_amce.R} to regenerate \texttt{fig\_validation\_amce.pdf}.]}}}
  }
  \begin{minipage}[]{1\textwidth}\footnotesize
      \textit{Note:} Estimates use the \citet{bansak2016europeans} immigration conjoint ($NT = 74{,}090$).  Each panel overlays two flavors of the average estimator: the pure DNN mean stage (blue circles) and the MAP-overwritten version that adds per-respondent residuals $\hat\eta_i$ (orange triangles).  \textbf{A}: 28 attribute-level coefficients; each point compares $\hat\theta_k$ to the pooled homogeneous-logit coefficient.  \textbf{B}: $15 \times 28 = 420$ (country $\times$ attribute) pairs; each point compares the within-country conditional mean to the logit coefficient from a country-specific regression.  Dashed lines show $y = x$.  Both flavors track the overall and within-country reduced-form benchmarks closely and are nearly indistinguishable, consistent with the mean-stage averages lining up with familiar homogeneous-logit summaries and with the empirical-Bayes residuals adding negligible finite-sample noise to the population average.
  \end{minipage}
\end{figure}

Figure~\ref{fig:validation_amce} displays the two scatters, comparing both the pure DNN mean stage and the MAP-overwritten version ($\hat\theta_k = N^{-1}\sum_i (\hat f_k(\bfZ_i) + \hat\eta_{ik})$) against the same reduced-form benchmarks.  Because the homogeneous logits are benchmarks rather than exact estimands under heterogeneous preferences, close agreement is reassuring but not mechanically required; large discrepancies would be a warning sign about scale, coding, or misspecification.  The overall comparison (Panel~A, 28 points) lies almost perfectly on the 45-degree line for both estimators, with $r = 0.997$ and a mean absolute difference of $0.031$ log-odds units---well within the pooled logit's own sampling variability.  The subgroup comparison (Panel~B, 15 countries $\times$ 28 attributes $= 420$ points) also tracks the 45-degree line closely for both estimators, with $r = 0.973$ and a mean absolute difference of $0.063$, essentially identical for the DNN mean stage and the MAP version.  The slight attenuation of the subgroup correlation relative to the overall correlation is expected and has two sources: the country-specific logits are noisier because each uses only $\approx 5{,}000$ observations, and the DNN's within-country averages smooth slightly across observationally similar respondents from different countries.  In both panels, the hybrid estimator closely tracks the reduced-form benchmarks without special calibration. We view this as a diagnostic consistency check rather than a substantive contribution: it supports that the average mean-stage estimates are on the same scale and sign as familiar reduced-form logit summaries, while the individual-level heterogeneity results rest on the additional structural and empirical-Bayes assumptions developed above.

\FloatBarrier

\section{Design-Guidance Simulation}\label{app:factorial}

This appendix describes the simulation behind the design guidance presented in the conclusion.  We lay out the setup, summarize the main findings, translate them into guidance by quantity of interest, and report the supporting figures and one capacity dose-response table.

\subsection{Setup}\label{app:factorial_setup}

Each respondent's preference vector decomposes as
\[
  \beta_{ij} = \mu_j + f_j(\bfZ_i) + \eta_{ij}, \qquad \eta_{ij} \sim N(0, \sigma_\eta^2),\ \sigma_\eta = 0.3,
\]
where $\mu_j$ is the population-mean coefficient on attribute level~$j$, $f_j(\bfZ)$ is the systematic Z-driven component, and $\eta_{ij}$ is residual heterogeneity not explained by observables.  Choices follow a binary logit on a paired conjoint task with two profiles drawn uniformly over a dummy-coded design.  The vector $\boldsymbol{\mu}$ of population-mean coefficients is set by hand to span the magnitudes typically seen in applied conjoint AMCEs ($|\mu_j|$ in the logit range $0.02$--$0.40$, mixed signs), guided by the AMCE point estimates reported in \citet{bansak2023amce} and \citet{saha2022ambitious}.  Covariates $\bfZ$ have $P_Z = 10$ columns: three are continuous (standard normal) and the remaining seven are Bernoulli$(0.3)$; all are standardized.  The systematic component is rescaled coordinate by coordinate: for each coefficient $j$, $f_j(\bfZ)$ is rescaled so that $\Var(f_j(\bfZ))=\tau_z^2$, where $\tau_z=\sigma_\eta\sqrt{R_Z^2/(1-R_Z^2)}$. Thus the design factor $R_Z^2$ denotes the common target per-coordinate explained-variance ratio $\Var(f_j(\bfZ_i))/\Var(\beta_{ij})$, not a vector-level average across coefficients.

We cross five design factors---the number of respondents $N$, tasks per respondent $T$, the number of preference coordinates $p$, covariate informativeness $R_Z^2$, and the linear share $\lambda$ of the systematic component (defined below)---over the grid
\[
  N\in\{1{,}000,\ 2{,}500,\ 5{,}000,\ 10{,}000\},\
  T\in\{3,\ 5,\ 10,\ 15\},\
  p\in\{20,\ 30,\ 40\},
\]
\[
  R_Z^2\in\{0.35,\ 0.55,\ 0.75\},\
  \lambda\in\{0,\ 0.30,\ 0.50,\ 0.70,\ 1\}.
\]
The functional form by which $\bfZ$ drives preferences is governed by the linear-share parameter $\lambda$.  Let $L_{ij}=\bfZ_i^\top\boldsymbol{\gamma}_j$ denote the linear component, centered and rescaled to unit variance as $L_{ij}^{\circ}$.  Let
\[
  A_i^0 = 0.3(Z_{i1}^2-1) + 0.3 Z_{i1}Z_{i2} + 0.2(Z_{i3}^2-1),
  \qquad
  I_i^0 = S_i\{0.3 Z_{i1}Z_{i2}+0.3 Z_{i1}Z_{i3}+0.2 Z_{i2}Z_{i3}\},
\]
where $S_i=2\mathbf{1}\{Z_{i1}>0\}-1$ gives the sign-flipping component.  For any raw component $q_i$, write $q_i^{\perp L}$ for its centered residual after population projection on $L_{ij}^{\circ}$, rescaled to unit variance; write $I_i^{\perp A}$ for the centered, unit-variance residual from projecting $I_i^{\perp L}$ on $A_i^{\perp L}$.  Before the final coordinate-specific rescaling to $\Var(f_j(\bfZ_i))=\tau_z^2$, the systematic component is
\[
  \widetilde f_j(\bfZ_i;\lambda,q) =
  \sqrt{\lambda}\,L_{ij}^{\circ}+\sqrt{1-\lambda}\,q_{ij}^{\circ}.
\]
\begin{center}
\small
\begin{tabular}{@{}ll@{}}
\toprule
Family & Nonlinear component $q_{ij}^{\circ}$ \\
\midrule
Additive-nonlinear (AN) & $A_i^{\perp L}$ \\
Interactive-polarized (IN) & $I_i^{\perp L}$ \\
Mixed (MN) & $\operatorname{std}\{(A_i^{\perp L}+I_i^{\perp A})/\sqrt{2}\}$ \\
\bottomrule
\end{tabular}
\end{center}
Thus MN is a 50/50 orthogonalized mix of the AN and IN components.
Sweeping $\lambda$ from $0$ to $1$ thus traces a continuous path from a fully nonlinear DGP to a purely linear one, replacing a discrete class label with a continuous functional-form dimension.  Each $\lambda < 1$ is crossed with the three nonlinear families, so the functional-form factor has $4 \times 3 + 1 = 13$ configurations---four $\lambda < 1$ values times three families, plus the purely linear $\lambda = 1$---and crossing these with the $4 \times 4 \times 3 \times 3 = 144$ combinations of $(N, T, p, R_Z^2)$ yields $1{,}872$ cells, each replicated $R = 10$ times ($18{,}720$ runs in total).  Each run uses the empirical-Bayes hybrid DNN of Section~\ref{sec:estimation}: the DNN ensemble mean stage followed by the MAP update under the \texttt{EnsC5} prior-precision calibration ($c_\eta = 5$).

\subsection{Main findings}\label{app:factorial_findings}

\paragraph{Variance decomposition.} An ANOVA of the cell-mean individual-$\bfbeta$ correlation $\rho_\beta$ across the $1{,}872$ cells attributes the following shares of its variance to the design factors:
\[
  R_Z^2:\ 56\%,\quad N:\ 18\%,\quad T:\ 15\%,\quad \lambda:\ 4\%,\quad p:\ 2\%,\quad \text{DGP}:\ 1\%.
\]
Covariate informativeness dominates.  The number of respondents ($N$) and tasks per respondent ($T$) are nearly tied for second place, with $N$ slightly ahead because the grid reaches $N = 10{,}000$.  The linear-share parameter $\lambda$ and the remaining factors contribute little. \medskip

\noindent \textbf{$NT$ is not a sufficient statistic.}  At fixed $NT = 15{,}000$, $T = 3$ (with $N = 5{,}000$) yields cell-mean $\rho_\beta = 0.443$ and $T = 15$ (with $N = 1{,}000$) yields $0.481$ ($+8.6\%$).  Many tasks per respondent help even when total observations are held constant. \medskip

\noindent \textbf{Functional-form robustness.}  Cell-mean $\rho_\beta$ averages $0.552$ on the linear DGP and $0.46$--$0.50$ on the three nonlinear families (additive-nonlinear, interactive-polarized, mixed)---a spread of about $0.09$ across functional forms that span linear projection, polynomial curvature, sign-flipping interactions, and their orthogonalized mixture.  Even with an explicit linear-share parameter, the choice of nonlinear family contributes under $1\%$ of ANOVA variance once $\lambda$ is held fixed.  The binding constraint on individual-level recovery is therefore covariate informativeness, not functional form. \medskip

\noindent \textbf{Population-level point recovery is strong throughout.}  Across cells, mean $|\hat\theta - \theta|$ averages $0.046$ on the logit scale (range $[0.011, 0.168]$), falling steeply with $T$ (from about $0.07$ at $T = 3$ to $0.025$ at $T = 15$).  Population-mean $|\hat{P}(\text{A wins}) - P^\star|$ averages $2.1$ percentage points (range $[0.5, 7.6]$ pp).  The dependence on $R_Z^2$ is small because population averages integrate over the residual heterogeneity $\boldsymbol{\eta}_i$.  The same grid provides only coarse coverage summaries because each cell has $R=10$ replications. Averaged over all cells, debiased confidence-interval coverage is $92\%$ against the $95\%$ target; apparent cell-level or monotone trends, including the mild decline with larger $T$ in the grid average, should be read as descriptive diagnostics rather than precise coverage evidence. \medskip

\begin{table}[b!]
\centering
\caption{Sensitivity of Debiased-interval Coverage to the Mean-stage Network Architecture}
\label{tab:arch_coverage}
\begin{tabular}{lccc}
\toprule
Mean-stage architecture & Coverage & Bias coherence & $\rho_\beta$ \\
\midrule
$c(128, 64, 32)$ (default)    & $71.0\%$           & $0.80$ & $0.60$ \\
$c(128, 64, 64)$              & $79.7\%$           & $0.69$ & $0.60$ \\
$c(256, 128, 64)$             & $84.0\%$           & $0.62$ & $0.59$ \\
$c(256, 128, 128)$            & $86.8\%$           & $0.57$ & $0.59$ \\
$c(512, 256, 256)$            & $\mathbf{92.0\%}$  & $0.42$ & $0.56$ \\
\bottomrule
\end{tabular}
\end{table}

\noindent \textbf{Dedicated coverage check in the strongly nonlinear regime.}  The coarse grid flags its lowest-coverage cell---the interactive-polarized DGP at $\lambda = 0$, $N = 10{,}000$, $T = 15$, $p = 30$, $R_Z^2 = 0.55$---which combines the largest simulated design with the strongest mean-function nonlinearity. This is an intentionally adversarial stress test: $\lambda=0$ removes the linear component of the systematic preference signal, making it an unlikely description of typical conjoint applications. At the default mean-stage architecture $c(128, 64, 32)$ this cell records $71\%$ coverage in the grid. Because that number comes from the coarse $R=10$ screen, we treat it as a signal for a targeted diagnostic rather than as a precise cell-level estimate. A capacity dose-response on this cell, with $R = 200$ replications and all other design settings held fixed, traces how widening the mean-stage network affects coverage. In the table, \emph{bias coherence} is the share of the residual error that is directionally systematic rather than randomly varying across coefficients. The table shows that at $c(512, 256, 256)$ coverage rises to $92.0\%$ ($95\%$ Wilson interval $[87.4\%, 95.0\%]$ at $R = 200$), statistically indistinguishable from the nominal $95\%$ target.  Bias coherence falls from $0.80$ to $0.42$ along the dose-response, confirming that the coverage gain comes from a genuine reduction in systematic bias rather than from wider standard errors.  Individual-$\bfbeta$ recovery is essentially flat across the dose-response (a $0.04$ drop in $\rho_\beta$ at the widest network).  In practice, we widen the mean-stage network to $c(512, 256, 256)$ in regimes that combine the largest designs with the most nonlinear preference functions.  For the more typical regimes covered by the rest of the grid, the default $c(128,64,32)$ is supported by the coarse grid summaries, though not by a separate high-replication coverage study for every cell.  Further widening beyond $c(512, 256, 256)$ does not help: our diagnostics on $c(1024, 512, 512)$ show no coverage gain and signs of overfitting the structured residual, so $c(512, 256, 256)$ is the widest network we recommend. \medskip

\noindent \textbf{Estimator choice.}  The MAP update with the \texttt{EnsC5} calibration of Section~\ref{sec:estimation} (labeled MAP-c5 in the figures) improves on the plain DNN ensemble on average in every $R_Z^2$ regime (see Figure~\ref{fig:factorial_map_vs_ens}).  Mean MAP gain in individual-$\bfbeta$ correlation is $+0.055$ at $R_Z^2 = 0.35$, $+0.031$ at $0.55$, and $+0.011$ at $0.75$ where the DNN mean already captures most heterogeneity.  MAP attains the higher individual-$\bfbeta$ correlation in $98\%$ of cells at $R_Z^2 \in \{0.35, 0.55\}$, $92\%$ at $R_Z^2 = 0.75$, and $95.7\%$ of all cells.  See Supplementary Materials~\ref{app:estimation_details} for the diagnostic study that motivates MAP-c5 as the default.

\subsection{Design guidance by quantity of interest}\label{app:factorial_guidance}

Applied claims rarely require exacting precision---they need direction, magnitude, and ranking reliable enough to support the substantive conclusion.  Combining the variance decomposition with cross-cell results from our simulations yields the following heuristics:
\begin{itemize}[leftmargin=1.5em,noitemsep]
  \item \textbf{Mean-type quantities.}  AMCEs and AMEs are recovered well in any design that supports a credible reduced-form analysis, and aggregate counterfactuals are close behind.  Counterfactual vote shares and win probabilities---the building blocks of empirical models of electoral competition---are typically within $1$--$3$ percentage points in designs with $N=1{,}000$ respondents and either $T=5$ tasks at $R_Z^2 \geq 0.35$ or $T=10$ tasks at any simulated $R_Z^2$.
  \item \textbf{Distributional claims.}  Claims about how preferences are spread across respondents---the share who prefer candidates with policy~$X$, or which attribute carries the most weight on average---are well recovered with $T \geq 10$ tasks and reasonably informative covariates ($R_Z^2 \geq 0.35$).  Substantively clear conclusions (``a clear majority prefer policy~$X$,'' ``policy is the most important attribute'') are dependable in this regime; closer calls---a 52--48 split, or two attributes nearly tied in importance---require more tasks per respondent or stronger covariates.
  \item \textbf{Individual-respondent claims.}  Saying something about a particular respondent---their personal attribute ranking, or counterfactual choice probabilities computed from their own coefficient vector that improve on the population-mean prediction---requires that $\hat{\bfbeta}_i$ track the true preference vector at the individual level.  This requirement is more demanding.  It is met at $R_Z^2 \approx 0.75$ in modest designs, plausibly met around $R_Z^2 \approx 0.5$ given the steep slope of $\rho_\beta$ in $R_Z^2$, but not met at the bottom of our grid ($R_Z^2 = 0.35$), where even the largest simulated designs ($N = 10{,}000$, $T = 15$) reach only $\rho_\beta \approx 0.5$.  Aggregate and distributional claims (preceding bullets) succeed at lower $R_Z^2$, so researchers planning individual-level claims should invest in covariate quality first.
  \item \textbf{Tradeoff quantities.}  These split into ratios and sums.  MRS and WTP are negative ratios of coefficients---dollars per hour of travel time saved (WTP), or how much extra unemployment a respondent would tolerate for a point less inflation (MRS).  When the true ratio is clearly nonzero, modest coefficient noise leaves it largely intact.  When it is near zero, the small-denominator problem swings the ratio wildly, and we recommend caution.  In that regime, population-level summaries---the negative ratio of average coefficients, or a trimmed mean of the individual ratios---are far more stable than individual ratios, and the negative ratio of averages carries valid debiased confidence intervals (Section~\ref{sec:quantities}).  Compensating differentials, by contrast, are threshold conditions on sums of coefficients (does benefit~$k$ outweigh cost~$j$ for respondent~$i$?), so they avoid the small-denominator issue and are recovered as well as the coefficients themselves.  Projects centered on tradeoff quantities should invest in rich covariates, a large $N$ ($\geq 5{,}000$), and a high $T$; drawing the relevant attributes from a continuous range (e.g., continuously varying rates or prices, as in our tax-policy application in Section~\ref{sec:ballard_rosa}) rather than $3$--$4$ discrete levels also sharpens the coefficients and the ratios built from them.
\end{itemize}

\subsection{Main figures}\label{app:factorial_figures}

We present six figures that summarize the simulation, organized by the quantity-of-interest hierarchy of Section~\ref{sec:quantities}: easiest first, hardest last.  Most plots use $R_Z^2$ on the $x$-axis and $T$ as the line color so the dominant factors are visible immediately.

Figure~\ref{fig:factorial_theta} shows that the population mean $\theta_k$ is recovered accurately in every design.

\begin{figure}[H]
\centering
\caption{Recovery of Population-mean $\theta_k$ across the Simulation Design Grid of $R_Z^2$ and $T$}
\label{fig:factorial_theta}
\includegraphics[width=0.6\textwidth]{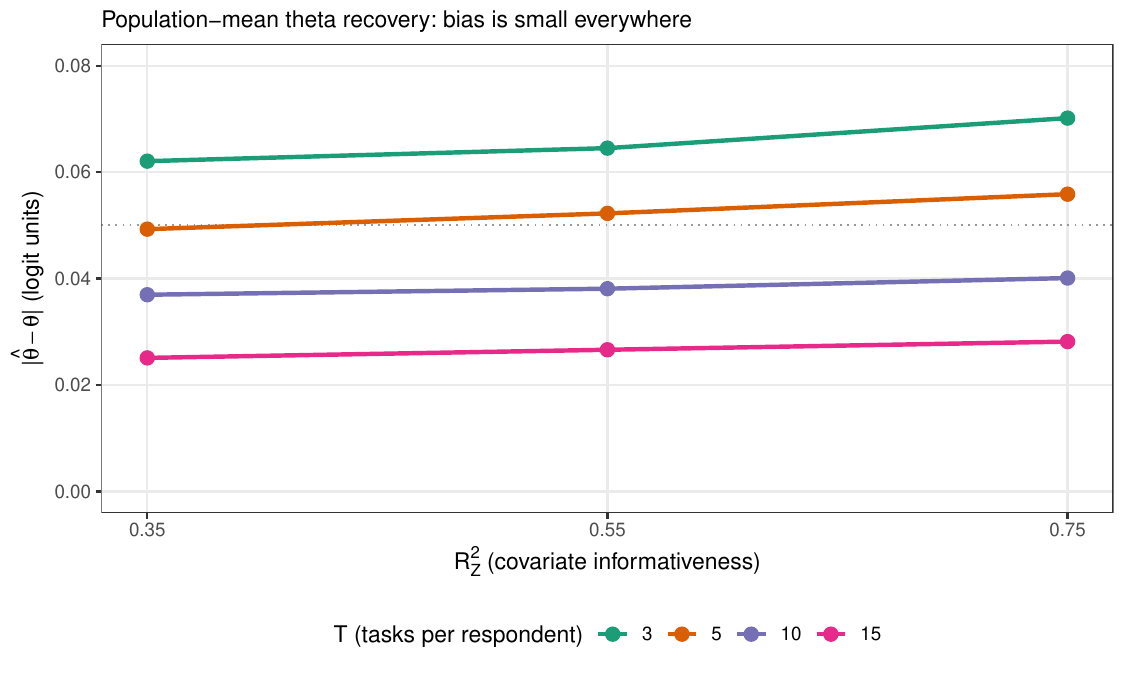}
\begin{minipage}[]{1\textwidth}\footnotesize
    \textit{Note:} Mean $|\hat\theta - \theta|$ on the logit scale, as a function of $R_Z^2$ and $T$.  $T$ is the dominant factor: bias falls from about $0.07$ logit units at $T=3$ to about $0.025$ at $T=15$ (the latter roughly $0.6$ pp on the probability scale at a baseline probability of $0.5$).  $R_Z^2$ matters very little because $\theta_k$ is a population mean.
\end{minipage}
\end{figure}

\begin{figure}[H]
\centering
\caption{Mean Absolute Error of Counterfactual Choice-probability Recovery across the Simulation Design Grid}
\label{fig:factorial_profmae}
\includegraphics[width=0.6\textwidth]{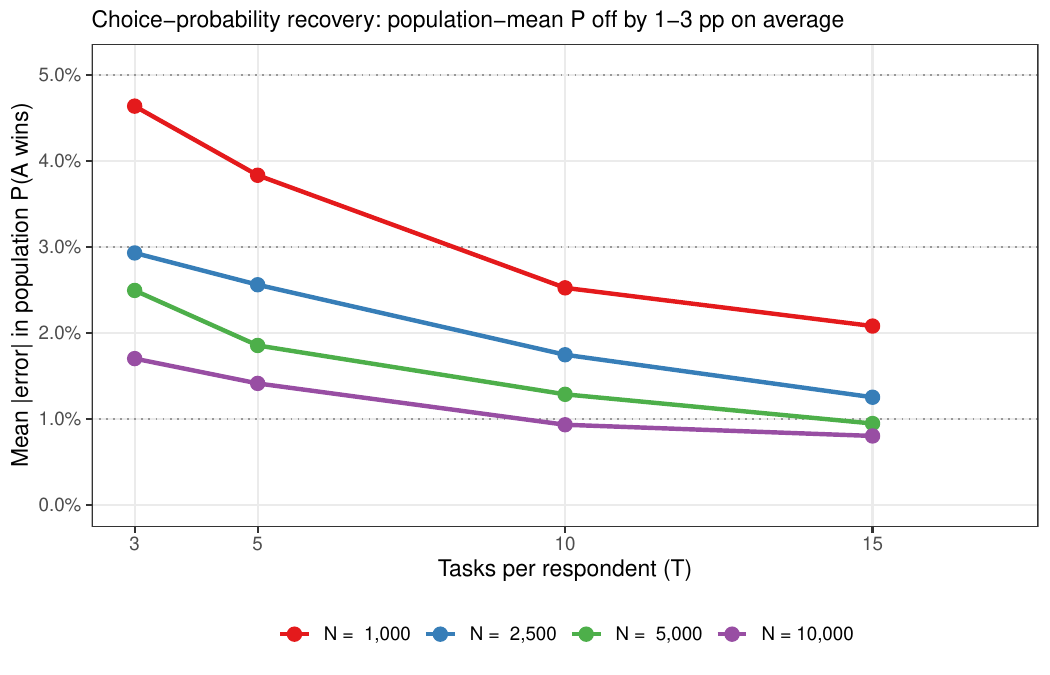}
\begin{minipage}[]{1\textwidth}\footnotesize
    \textit{Note:} Population-mean $|\hat{P}(\text{A wins}) - P^\star|$ across three representative profile matchups, as a function of $T$ and $N$.  Errors are 1--3 percentage points across most of the grid, reaching about $4.5$ pp only at the smallest design ($N{=}1{,}000$, $T{=}3$) and falling below $1$ pp at $N{=}10{,}000$.  $R_Z^2$ has essentially no effect (not shown), because aggregating over respondents averages out the residual heterogeneity $\boldsymbol{\eta}_i$.
\end{minipage}
\end{figure}

Figure~\ref{fig:factorial_profmae} shows that counterfactual choice probabilities come next in the hierarchy, recovered to within a few percentage points.

Figures~\ref{fig:factorial_polarization} and~\ref{fig:factorial_importance} turn to the distributional quantities---polarization fractions and importance shares---which are recovered well once $T \geq 10$.

\begin{figure}[H]
\centering
\caption{Recovery of Population Preference Fractions across the Simulation Design Grid}
\label{fig:factorial_polarization}
\includegraphics[width=0.7\textwidth]{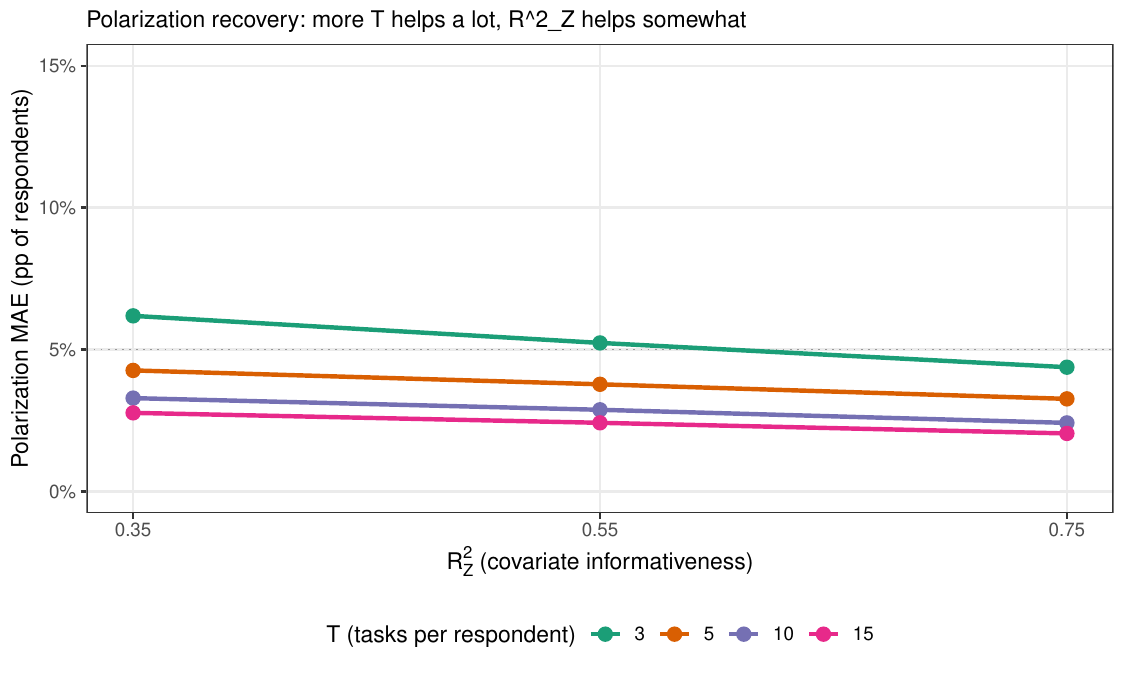}
\begin{minipage}[]{1\textwidth}\footnotesize
    \textit{Note:} MAE between estimated and true population fraction of respondents preferring each attribute level.  Both $T$ and $R_Z^2$ help: averaging over the grid, MAE runs about $4$--$6$ pp at $T{=}3$ and falls to about $2$ pp at $T{=}15$; $T$ does most of the work and $R_Z^2$ helps somewhat.
\end{minipage}
\end{figure}

\begin{figure}[H]
\centering
\caption{Correlation between Estimated and True Importance Shares across the Simulation Design Grid}
\label{fig:factorial_importance}
\includegraphics[width=0.7\textwidth]{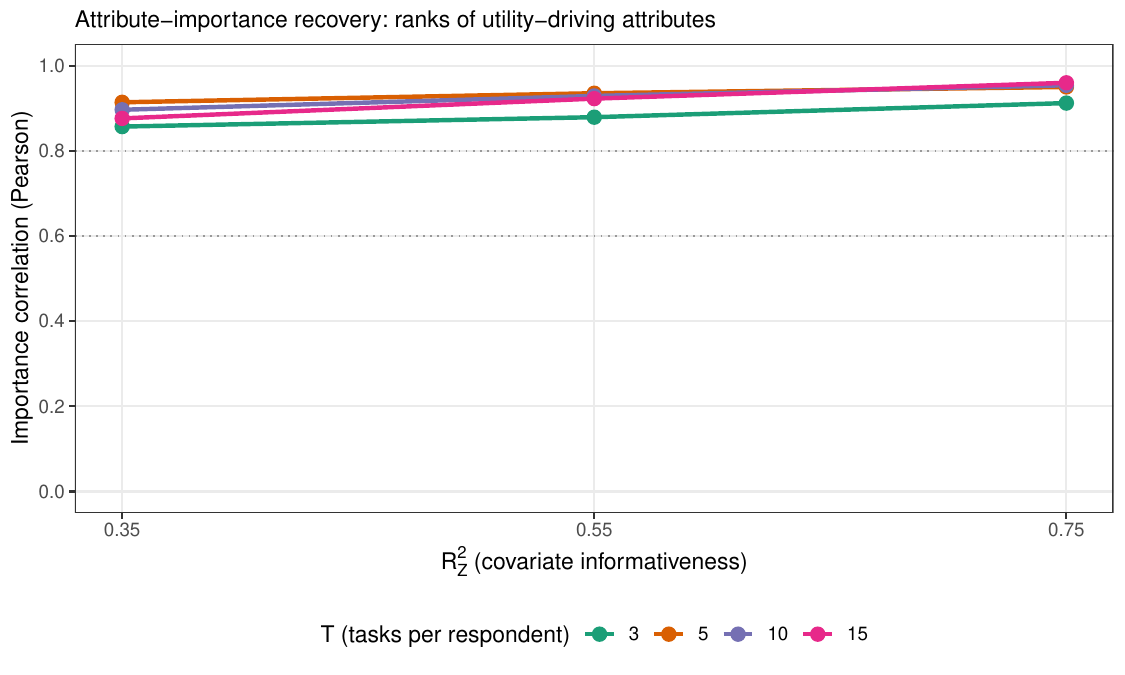}
\begin{minipage}[]{1\textwidth}\footnotesize
    \textit{Note:} Pearson correlation between estimated and true population-mean importance shares (importance$_g \propto \sum_{k \in g} \beta_k^2 \mathrm{Var}(X_k)$).  Correlation is high throughout---about $0.85$ at the weakest designs and $0.95$--$0.96$ at $R_Z^2 = 0.75$---so the ranking of utility-driving attributes is recovered well even in modest designs.
\end{minipage}
\end{figure}

\begin{figure}[H]
\centering
\caption{Individual-$\bfbeta$ Recovery Correlation $\rho_\beta$ from the MAP Estimator across the Simulation Design Grid}
\label{fig:factorial_grid}
\includegraphics[width=0.825\textwidth]{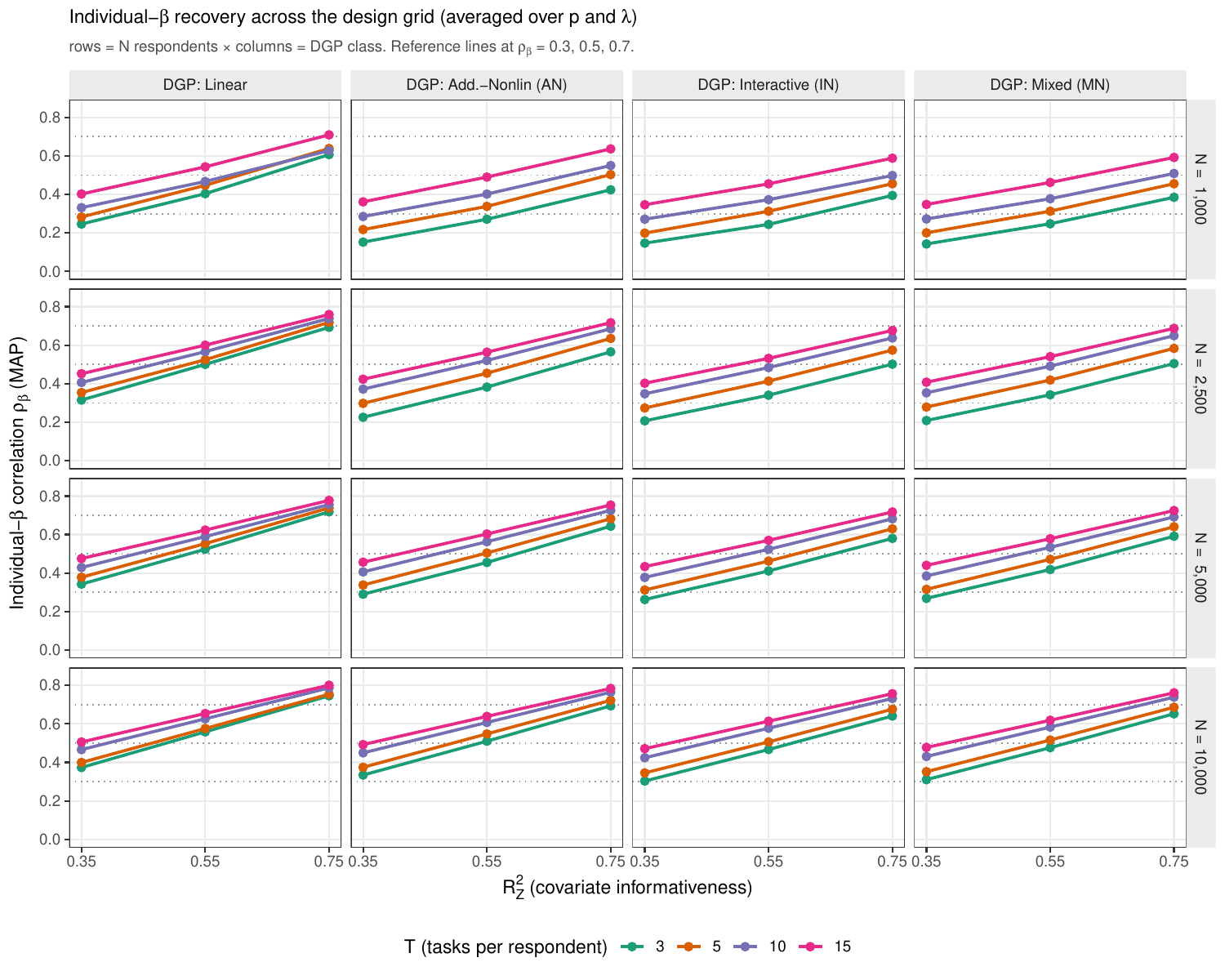}
\begin{minipage}[]{1\textwidth}\footnotesize
    \textit{Note:} Cell-mean $\rho_\beta$ from the MAP estimator, plotted against $R_Z^2$ with line color indicating $T$.  Columns: DGP class (linear and the three heterogeneous families).  Rows: $N$ respondents, increasing top to bottom.  Averaged over $p$ and the linear-share $\lambda$.  The within-panel slope across $R_Z^2$ is steep (the dominant factor), the four columns look essentially identical (DGP-class robustness), and reading down rows shows the $N$ effect.  Reference lines at $\rho_\beta = 0.3, 0.5, 0.7$.
\end{minipage}
\end{figure}

\begin{figure}[H]
\centering
\caption{Predictive Performance of MAP-c5 vs.~DNN Ensemble across $R_Z^2$}
\label{fig:factorial_map_vs_ens}
\includegraphics[width=0.825\textwidth]{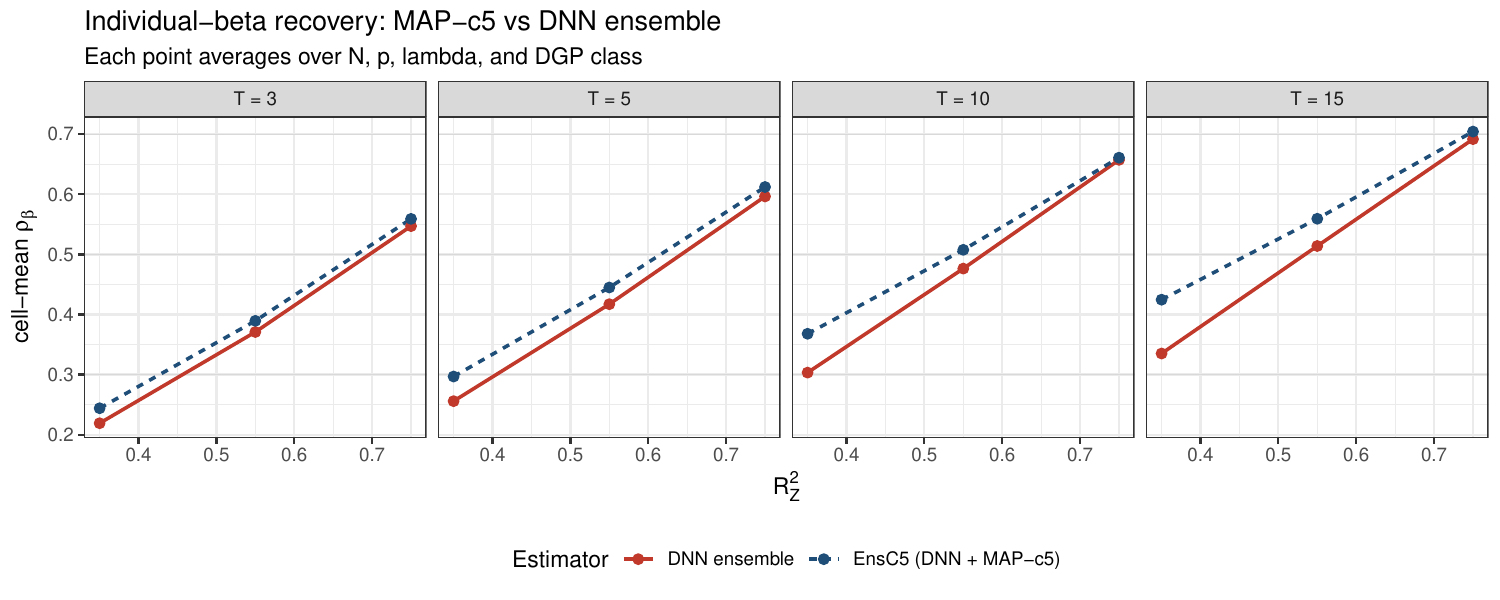}
\begin{minipage}[]{1\textwidth}\footnotesize
    \textit{Note:} MAP-c5 (blue) outperforms the DNN ensemble (red) most clearly at low $R_Z^2$, where the within-respondent update compensates for an uninformative cross-sectional prior; the gain fades gradually as $R_Z^2$ rises and the two nearly converge by $R_Z^2 = 0.75$.  One panel per $T$.
\end{minipage}
\end{figure}

Figure~\ref{fig:factorial_grid} is the full individual-$\bfbeta$ recovery surface across $N$, $T$, $R_Z^2$, and DGP class---the most demanding target in the hierarchy.

Figure~\ref{fig:factorial_map_vs_ens} isolates the estimator choice, showing where the MAP-c5 update improves on the plain DNN ensemble.

\FloatBarrier

\section{Additional Results for the Applications}\label{app:more_figures}

This section collects per-application supplements referenced from Section~\ref{sec:applications}, in the order of the main text: several robustness and validation supplements for the democracy application, the party-level importance decomposition for the tax application, and the distributional and counterfactual displays for the candidate application.

\subsection{The democracy application}\label{app:gs_covariates}

Here we collect four supplements to the democracy application of Section~\ref{sec:graham_svolik}: a comparison of the covariate set together with an external validation of the recovered preferences, survey-weighted versions of the headline quantities, a check that the partisan asymmetry is not a shrinkage artifact, and a benchmark against a regularized hierarchical model. \medskip

\noindent \textbf{Covariate-set comparison and external validation.}  The \citet{graham2020democracy} survey collected respondents' direct, pre-conjoint ratings of how undemocratic each practice is.  Including these in $\bfZ_i$ would let the first stage predict a respondent's conjoint preference over a practice partly from their own direct rating of that same practice, blurring the revealed-preference interpretation and rendering the validation in Section~\ref{sec:graham_svolik} circular.  Our main specification therefore excludes the six democracy-attitude items---the four action-specific ``how undemocratic is $X$'' ratings plus two general democracy-support items---retaining the demographic, partisan, ideological, and dispositional covariates ($|\bfZ| = 16$).  Table~\ref{tab:gs_covariates} and Figure~\ref{fig:gs_spec_comparison} show that the population-average estimates are nearly identical either way (maximum $|\Delta\hat{\theta}_k| = 0.05$ across the seven undemocratic actions), echoing \citeauthor{graham2020democracy}'s own Table~2, columns 5--6.  The \emph{individual-level} validation correlation, by contrast, behaves exactly as the leakage concern predicts: with the direct items held out it is $0.37$ (pooled Spearman), whereas placing them inside $\bfZ_i$ inflates it to roughly $0.65$---the mechanical consequence of correlating a quantity partly with itself, and the reason we hold them out.

\begin{table}[H]
  \centering
  \small
  \caption{Graham \& Svolik Estimates with vs.\ without the Six Direct Democracy-attitude Items in $\bfZ_i$}
  \label{tab:gs_covariates}
  \begin{tabular}{@{}lcc@{}}
    \toprule
     & With direct items & Without (main) \\
    \midrule
    \multicolumn{3}{@{}l}{\emph{Average preferences $\hat{\theta}_k$ (logit scale)}} \\
    \quad Co-partisan            & $+0.74$ & $+0.72$ \\
    \quad Ban protests           & $-0.60$ & $-0.61$ \\
    \quad Ignore courts          & $-0.75$ & $-0.75$ \\
    \quad Executive order        & $-0.58$ & $-0.58$ \\
    \quad Gerrymander (2 seats)  & $-0.52$ & $-0.57$ \\
    \quad Gerrymander (10 seats) & $-0.68$ & $-0.69$ \\
    \quad Prosecute journalists  & $-0.94$ & $-0.93$ \\
    \quad Close polling stations & $-0.81$ & $-0.77$ \\
    \midrule
    \multicolumn{3}{@{}l}{\emph{Importance share (\%)}} \\
    \quad Policy        & 26 & 26 \\
    \quad Valence       & 30 & 29 \\
    \quad Party         & 24 & 23 \\
    \quad Undemocratic  & 9  & 9  \\
    \quad Other         & 12 & 12 \\
    \midrule
    \multicolumn{3}{@{}l}{\emph{Median individual MRS vs.\ co-partisanship (overall)}} \\
    \quad Prosecute journalists  & 0.81 & 0.80 \\
    \quad Ignore courts          & 0.66 & 0.64 \\
    \quad Gerrymander (10 seats) & 0.60 & 0.55 \\
    \bottomrule
  \end{tabular}
  \begin{minipage}[]{1\textwidth}\footnotesize
      \textit{Note:} ``With'' uses the original's 22-covariate set; ``Without'' (our main specification) drops the direct ratings ($|\bfZ| = 16$).  The maximum absolute change in any average preference is $0.05$.
  \end{minipage}
\end{table}

\begin{figure}[!h]
  \centering
  \caption{Average Preferences $\hat{\theta}_k$ for the Seven Undemocratic Actions: w/ vs. w/o the Direct Democracy-attitude Items}
  \label{fig:gs_spec_comparison}
  \includegraphics[width=0.8\textwidth]{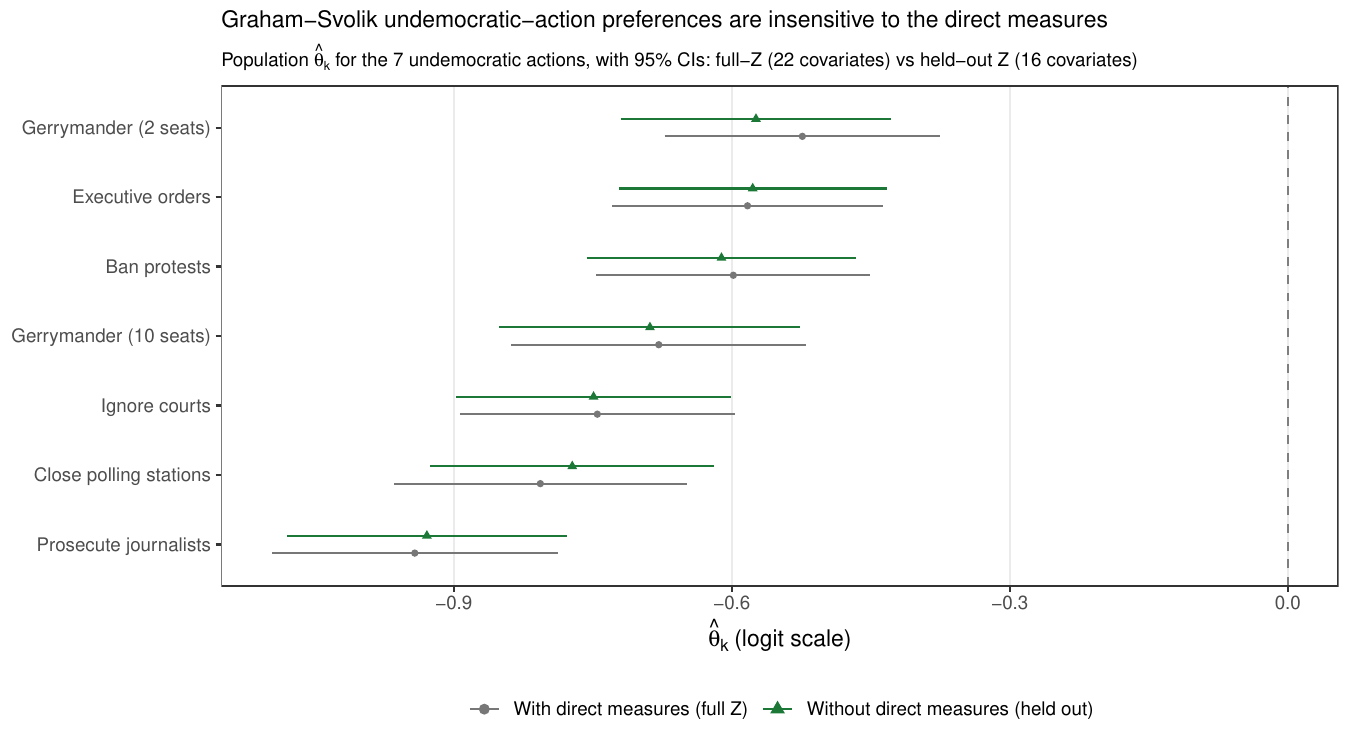}\vspace{0em}
  \begin{minipage}[]{1\textwidth}\footnotesize
        \textit{Note:} With 95\% DML Confidence Intervals. 
    \end{minipage}
\end{figure}\medskip

\FloatBarrier

\begin{figure}[ht!]
  \centering
  \caption{External Validation in the Democracy Conjoint}
  \label{fig:gs_validation}
  \includegraphics[width=\textwidth]{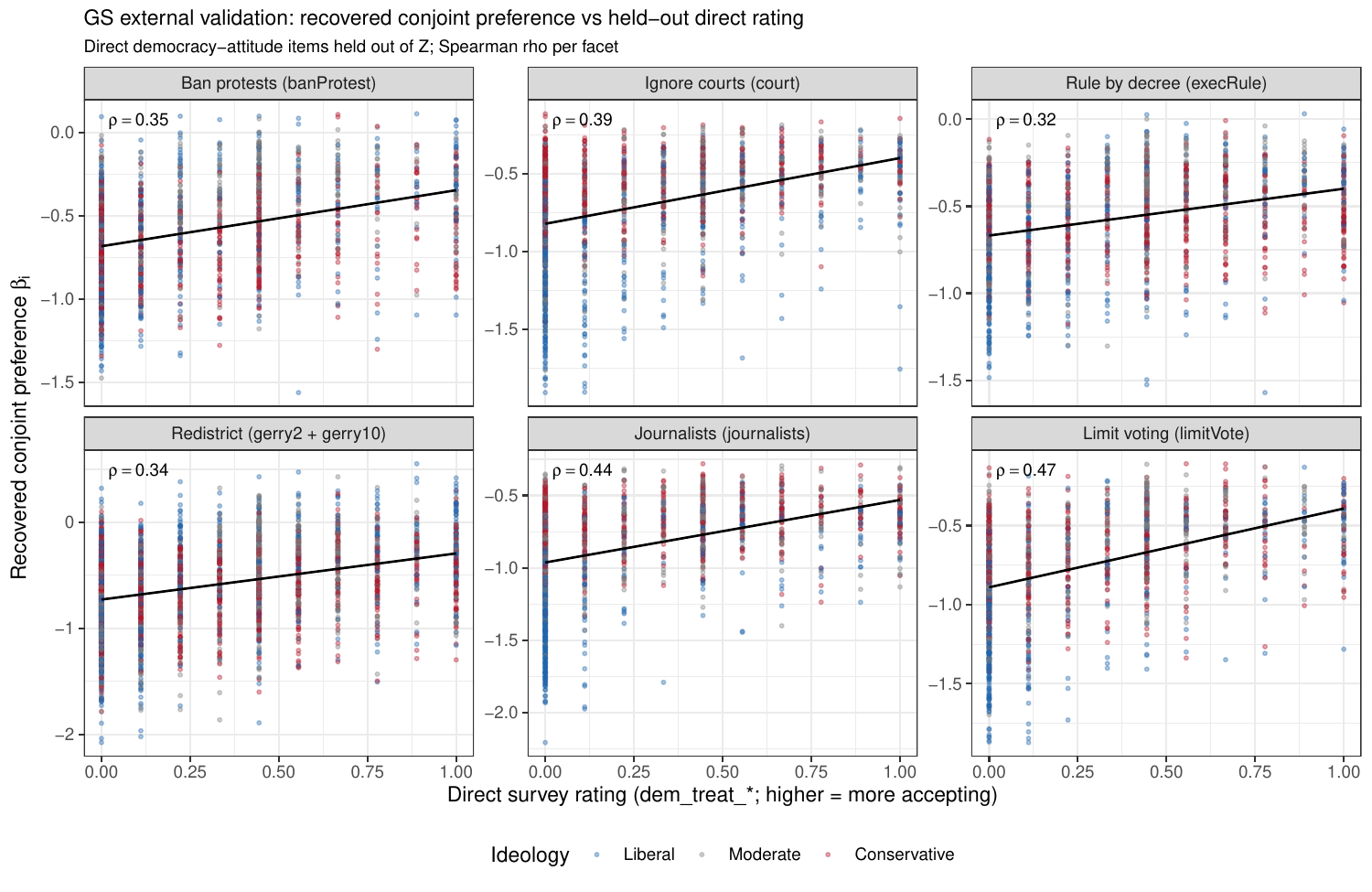}
  \begin{minipage}[]{1\textwidth}\footnotesize
      \textit{Note:} A respondent's recovered preference $\hat{\beta}_{i,k}$ for an undemocratic action vs.\ their direct, pre-conjoint rating of that practice, by ideology tercile, with a fitted line and the within-panel spearman correlation. Direct ratings are held out of estimation; both gerrymander levels share the one redistricting item.
  \end{minipage}
\end{figure}

\FloatBarrier

\noindent \textbf{Positive-tail diagnostic for gerrymandering.}  The two gerrymander variants---a co-partisan redistricting that nets two versus ten seats---separate the average penalty from the positive tail.  On average the ten-seat variant is more opposed than the two-seat variant ($\hat\theta = -0.69$ vs.\ $-0.57$), matching the ordering in \citeauthor{graham2020democracy}'s own reduced-form estimates, so favorability decreases in severity at the level the design identifies.  But the ten-seat variant carries the widest recovered $\hat\beta_i$ distribution of any action (standard deviation $0.44$ vs.\ $0.32$ for the two-seat variant), and hence the largest positive-coefficient minority.  The two variants are randomized equally, so this tail is not a sparsity artifact; the wider recovered distribution is consistent with sharper polarization over the larger prize, registered by the additive index as dispersion rather than interaction.  We therefore treat the tail share as less precisely identified than the corresponding average penalty. \medskip

\noindent \textbf{Survey-weighted estimates.}  The democracy application carries respondent survey weights that our main estimates do not use.  Table~\ref{tab:weighted} reweights the same held-out fit's respondent-level aggregation by those survey weights.  The average co-partisan benefit and the violation penalty are essentially unchanged ($\hat\theta_{\text{party}}$ moves from $0.72$ to $0.70$; $\hat\theta_{\text{journalists}}$ moves from $-0.93$ to $-0.89$), and the distributional conclusions survive: opposition to prosecuting journalists remains universal, and the partisan gradient in compensating differentials holds, with the liberal acceptance fraction moving from 38\% to 34\% and the conservative fraction stable at 66\%.  Survey weighting does not change the substantive conclusion.  The candidate and tax applications' public files do not carry usable respondent weights, so we report unweighted estimates there. \medskip

\begin{table}[!h]
\centering\small
\caption{Unweighted versus Survey-weighted Estimates in the Democracy Application}
\label{tab:weighted}
\begin{tabular}{@{}lcc@{}}
\toprule
Democracy application quantity & Unweighted & Survey-weighted \\
\midrule
$\hat\theta_{\text{co-partisan}}$ & $0.72$ & $0.70$ \\
$\hat\theta_{\text{prosecute journalists}}$ & $-0.93$ & $-0.89$ \\
Fraction opposing prosecuting journalists & 100\% & 100\% \\
Accept journalists for co-partisan --- Liberal & 38\% & 34\% \\
Accept journalists for co-partisan --- Conservative & 66\% & 66\% \\
\bottomrule
\end{tabular}
\begin{minipage}[]{1\textwidth}\footnotesize
    \textit{Note:} The survey-weighted column reweights respondent-level aggregation using the original respondent survey weights; the first-stage fit is unchanged.
\end{minipage}
\end{table}

\noindent \textbf{Differential shrinkage and the partisan asymmetry.}  Because the held-out validation correlation is higher for liberals than conservatives, one might worry that the asymmetry in compensating differentials---conservatives more willing than liberals to accept a co-partisan violation---is an artifact of conservatives' coefficients being more heavily shrunk.  The prior-sensitivity grid of Supplementary Materials~\ref{app:prior_sensitivity} speaks directly to this: the asymmetry is present in the unshrunk stage-one deep-network preferences (liberal acceptance $35\%$, conservative $65\%$) just as in the default empirical-Bayes estimates (liberal $38\%$, conservative $66\%$), so it is not produced by the degree of shrinkage.  The ordering is also stable across the full $c_\eta$ range. \medskip

\noindent \textbf{Benchmark against a regularized hierarchical model.}  A natural question is whether the flexible mean stage and the empirical-Bayes update earn their keep over a standard regularized hierarchical model.  Table~\ref{tab:benchmark} compares three estimators on the democracy application: the hybrid, a properly regularized Bayesian mixed logit with a covariate-modeled mean and respondent-level random slopes (fit with \texttt{glmmTMB}), and the pure mean stage with no respondent update.  We score each on the paper's own external-validation metric---the pooled correlation between recovered preferences and the held-out direct democracy ratings---and on in-sample choice fit.  On the validation metric the hybrid recovers preferences that track the independent ratings best, and it does so for six of the seven undemocratic actions (Table~\ref{tab:benchmark_action}), the ten-seat gerrymander being the lone near-tie.  The mixed logit attains a lower \emph{in-sample} choice log-loss, as expected from its larger set of freely estimated covariance parameters, but its recovered preferences validate less well against the independent benchmark---the comparison that matters for the individual-level quantities we report.  The empirical-Bayes update likewise improves on the mean stage alone on both metrics.  The flexible mean stage and the respondent-level update each contribute.

\begin{table}[!h]
\centering\small
\caption{Out-of-sample Validation and In-sample Fit across Estimators in the Democracy Application}
\label{tab:benchmark}
\begin{tabular}{@{}lccc@{}}
\toprule
Estimator & Validation $\rho$ & In-sample log-loss & Accuracy \\
\midrule
Hybrid (DNN $+$ empirical Bayes) & $0.37$ & $0.56$ & $0.86$ \\
Regularized mixed logit (\texttt{glmmTMB}) & $0.31$ & $0.50$ & $0.77$ \\
Mean stage only ($\hat f(\bfZ)$, no update) & $0.36$ & $0.66$ & $0.64$ \\
\bottomrule
\end{tabular}
\begin{minipage}[]{1\textwidth}\footnotesize
    \textit{Note:} The democracy application comprises $N = 1{,}605$ respondents and $20{,}657$ tasks.  Validation $\rho$ is the pooled Spearman correlation between the recovered preferences and the held-out direct democracy ratings---the metric that bears on the individual-level quantities we report; log-loss and accuracy are in-sample on the forced choices.  The hybrid validates best; the mixed logit fits in-sample best but validates worst, and the mean stage alone validates nearly as well as the hybrid but predicts choices far less accurately.
\end{minipage}
\end{table}

\begin{table}[t]
\centering\small
\caption{Validation Correlations between Recovered Preferences and Held-out Direct Ratings, by Undemocratic Action}
\label{tab:benchmark_action}
\begin{tabular}{@{}lccc@{}}
\toprule
Undemocratic action & Hybrid & Mixed logit & Mean stage \\
\midrule
Ban protests           & $0.354$ & $0.279$ & $0.336$ \\
Ignore courts          & $0.390$ & $0.273$ & $0.381$ \\
Executive order        & $0.323$ & $0.260$ & $0.305$ \\
Gerrymander (2 seats)  & $0.383$ & $0.299$ & $0.359$ \\
Gerrymander (10 seats) & $0.324$ & $0.326$ & $0.304$ \\
Prosecute journalists  & $0.437$ & $0.399$ & $0.421$ \\
Close polling stations & $0.471$ & $0.364$ & $0.448$ \\
\bottomrule
\end{tabular}

\begin{minipage}[]{1\textwidth}\footnotesize
    \textit{Note:} Each $\rho$ correlates the recovered preference against the held-out direct rating. The hybrid is highest for six of the seven actions; the ten-seat gerrymander is the single exception, a near-tie with the mixed logit.
\end{minipage}
\end{table}

\subsection{The tax application}

Subgroup AMCEs average within groups, so they can show that parties differ but not which attributes drive each party's choices; the variance decomposition answers that question.  Figure~\ref{fig:br_importance_party} reports, for each party, the share of plan-choice variance attributable to each of the six bracket rates and the revenue indicator---the importance shares of Section~\ref{sec:graham_svolik}, computed by party.  These are the shares behind the reframing of the partisan account in Section~\ref{sec:ballard_rosa}: Democrats' choices are driven most by the top bracket ($29.2\%$) and the very bottom bracket, while Republicans weight the working- and middle-class brackets most heavily ($26.1\%$ and $17.6\%$, against $15.9\%$ and $8.5\%$ for Democrats) and, of the three groups, place the least weight on the top bracket ($18.4\%$).  The full display adds two facts the main text does not quote: Independents fall between the parties on each of the four largest-share brackets, and the revenue indicator contributes a small share ($4\%$) in every group, so the partisan disagreement plays out across the brackets rather than over revenue.

\begin{figure}[t!]
  \centering
  \caption{Variance-decomposition Attribute-importance Shares by Party in the Ballard-Rosa et al. (2017) Conjoint}\label{fig:br_importance_party}
  \includegraphics[width=0.85\textwidth]{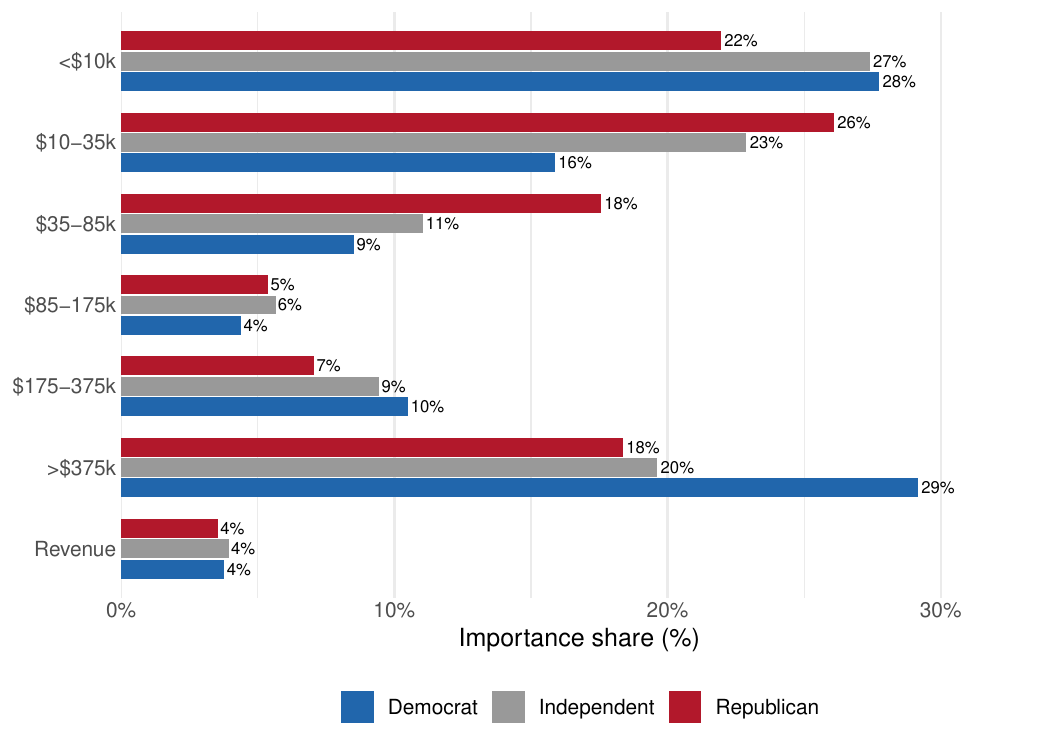}
  \begin{minipage}[]{1\textwidth}\footnotesize
      \textit{Note:} Shares are from a variance decomposition. Democrats weight the top bracket more; Republicans weight the low and middle brackets more.
  \end{minipage}
\end{figure}

\subsection{The candidate application}

Section~\ref{sec:saha_weeks} examines what the model can still recover in the sparsest design ($T = 3$) and argues that the near-zero average gender effect masks offsetting partisan camps.  The three figures here show the distributions and the contest behind that argument.

Figure~\ref{fig:sw_beta_ridgelines} shows the full densities of $\hat{\beta}_{ik}$ across respondents for all 13 attribute levels, ordered by variance---the distributions that the favor-versus-oppose fractions of Figure~\ref{fig:sw_amce_fraction} compress to signs.  For most levels the disagreement is about intensity, not direction: nearly all respondents favor both agenda levels, with effects from near zero to more than $1.5$ logit units.  Gender is the exception---wide, centered near zero, and split in direction---and that split is what Figure~\ref{fig:sw_partisan_gender} in the main text decomposes by party.

\begin{figure}[t!]
  \centering
  \caption{Ridgeline Densities of $\hat{\beta}_{ik}$ across Respondents, by Attribute Level (Saha \& Weeks 2022), Ordered by Variance}
  \label{fig:sw_beta_ridgelines}
  \includegraphics[width=\textwidth]{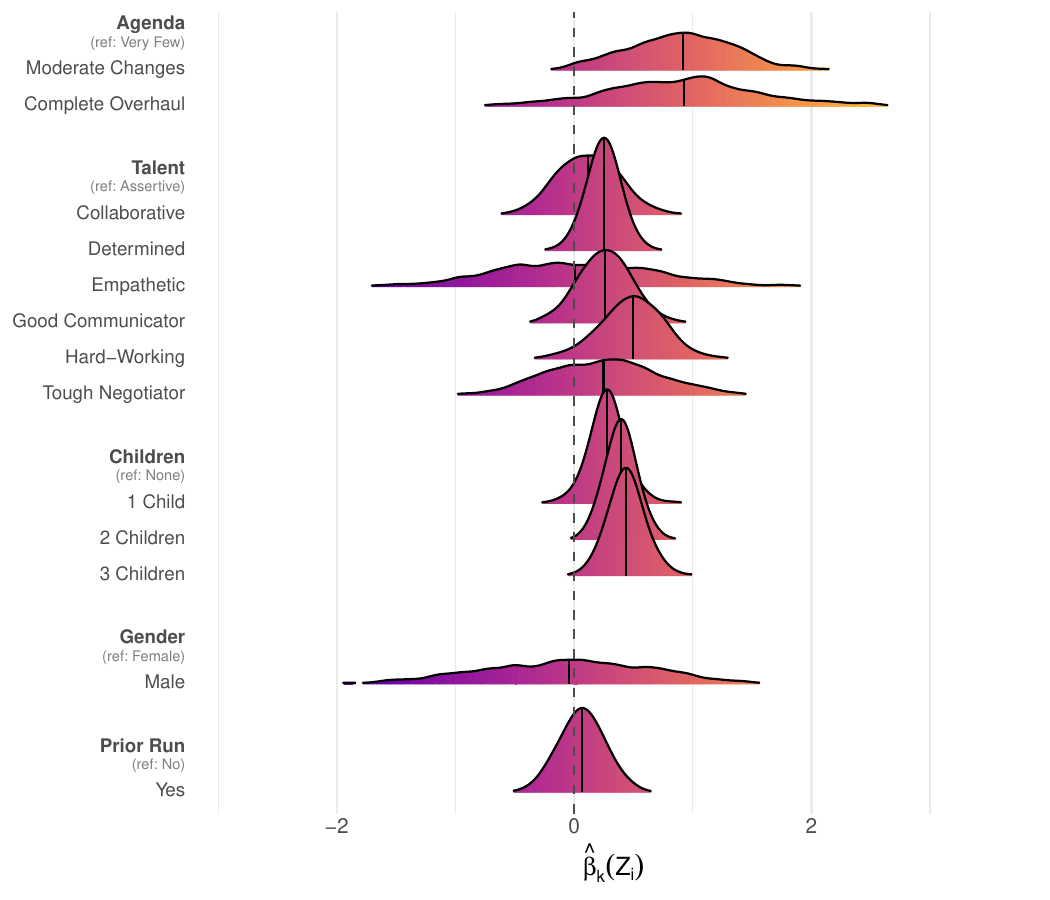}
  \begin{minipage}[]{1\textwidth}\footnotesize
      \textit{Note:} Dashed line at zero, solid at the median.
  \end{minipage}
\end{figure}

Figure~\ref{fig:sw_importance} addresses how much each attribute drives the vote.  It shows the distribution of the individual-level importance shares $\text{Imp}_{i,g}$ across respondents, with dashed lines at the means reported in Section~\ref{sec:saha_weeks}: policy agenda $52\%$, talent $21\%$, gender $16.5\%$, children $8\%$, and progressive ambition $2\%$.  The spread around those means is itself informative: voters differ not only in which way they lean but in which attributes their decisions turn on, heterogeneity that is invisible to average-effect analysis.

\begin{figure}[t!]
  \centering
  \caption{Distribution of Individual-level Attribute-importance Shares (Saha \& Weeks 2022)}
  \label{fig:sw_importance}
  \includegraphics[width=0.7\textwidth]{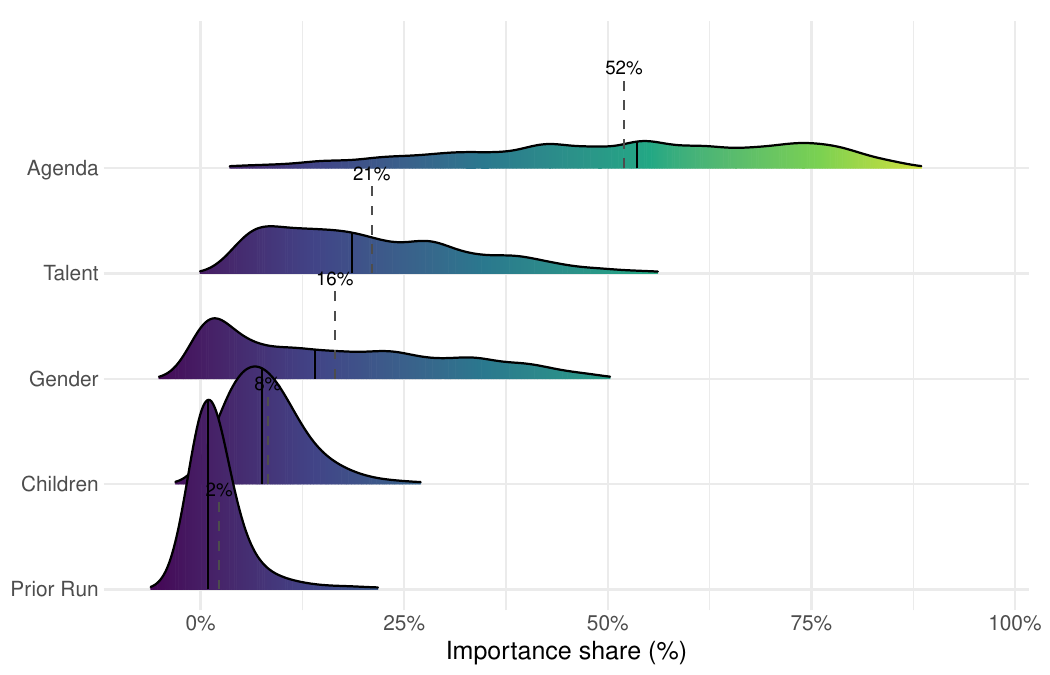}
  \begin{minipage}[]{1\textwidth}\footnotesize
      \textit{Note:} Dashed lines mark means.  Policy agenda is the largest average share ($52\%$), with wide individual heterogeneity.
  \end{minipage}
\end{figure}

Counterfactual contests are the payoff of recovering the full preference vector: once each respondent carries $\hat{\bfbeta}_i$, any head-to-head matchup becomes computable, including contests in which multiple attributes move together, which attribute-by-attribute averages cannot deliver.  Figure~\ref{fig:sw_agenda_counterfactual} stages the matchup of Section~\ref{sec:saha_weeks}, chosen to stack the two attributes voters split on most sharply: Candidate A, an Empathetic Female, faces a Tough Negotiator Male fixed at Complete Overhaul, and A varies her policy agenda across its three levels---Very Few Changes (preserve the status quo), Moderate Changes (incremental reform), and Complete Overhaul (wholesale transformation).  At Complete Overhaul the talent-and-gender contrast already pulls A below $50\%$ overall (Democrats $59.5\%$, Independents $47.9\%$, Republicans $34.5\%$); Moderate Changes leaves the partisan gap intact (Democrats $60.3\%$, Republicans $33.6\%$) near $48\%$ overall; and Very Few Changes drops her to $35.9\%$ overall and to $24.1\%$ among Republicans.  Reading across the three positions shows the two mechanisms of Section~\ref{sec:saha_weeks} operating together: scaling back the agenda costs about 12 points in every group---a penalty no partisan benefit offsets---while the $\sim$25-point partisan gap persists at every position.

\begin{figure}[t!]
  \centering
  \caption{Predicted Vote Share in Saha \& Weeks (2022) for Candidate A vs.\ Candidate B as A Scales Back Her Agenda}
  \label{fig:sw_agenda_counterfactual}
  \includegraphics[width=0.8\textwidth]{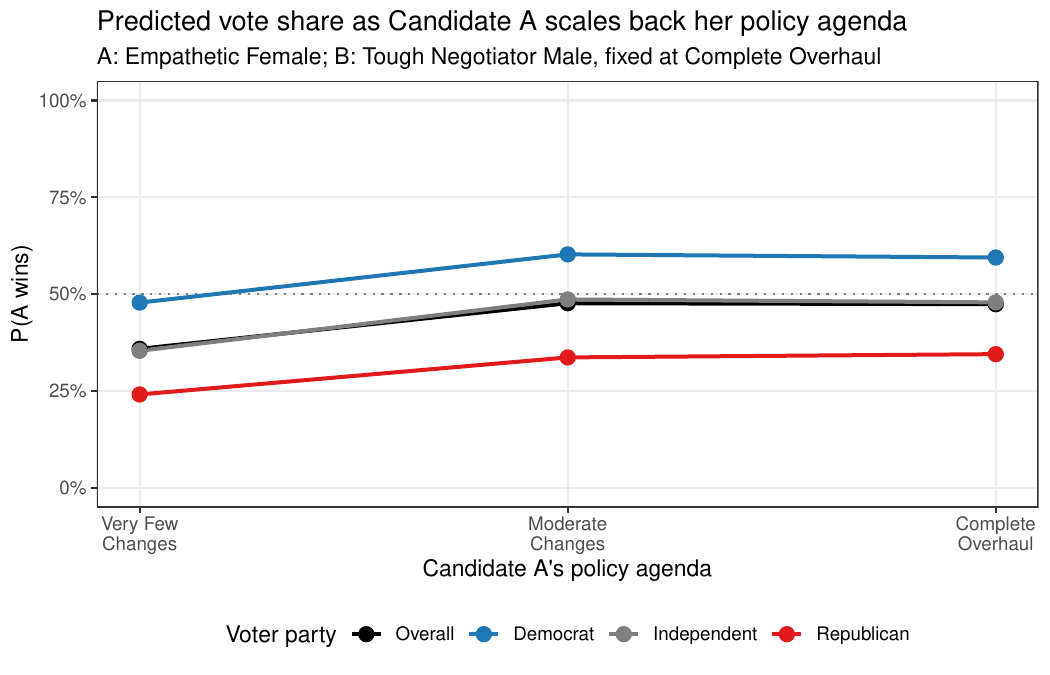}
  \begin{minipage}[]{1\textwidth}\footnotesize
        \textit{Note:} Candidate A is the Empathetic Female; Candidate B is the Tough Negotiator Male, fixed at Complete Overhaul.  Each line aggregates $\Pr(A \text{ wins} \mid \hat{\bfbeta}_i)$ within a party subgroup.  Two mechanisms are visible at once: the agenda drop between Moderate and Very Few Changes (about 12 points, near-uniform across parties), and a roughly 25-point partisan gap at every agenda position.
    \end{minipage}
\end{figure}

\noindent \textbf{Robustness to attribute interactions.}  The analysis of the candidate application in Section~\ref{sec:saha_weeks} maintains the additive index of Section~\ref{sec:framework}.  Because this is the sparsest of the three designs ($T = 3$), it is the one most exposed to undetected interactions, so we refit it under the two forms of Supplementary Materials~\ref{app:interactions}, holding the production configuration fixed and varying only whether and how interactions enter.  Table~\ref{tab:sw_interactions} reports the comparison.  The plug-in quantities that carry the substantive findings barely move: the gender importance share rises from $16.1\%$ to at most $17.5\%$, the partisan preference fractions shift by at most a few points, the plug-in win probability in the ambitious-woman contest is stable to the third decimal, and cross-fitted held-out log-loss changes by less than $0.01$ in either direction (the low-rank fit lowers it by $0.006$, the explicit fit raises it by $0.009$).  Allowing interactions does not change these conclusions.

\begin{table}[h!]
  \centering
  \small
  \caption{Robustness of the Candidate-application Findings to Attribute Interactions}
  \label{tab:sw_interactions}
  \begin{tabular}{lccc}
    \toprule
    & Additive & Explicit & Low-rank ($r=2$) \\
    \midrule
    \multicolumn{4}{l}{\emph{Panel A. Plug-in quantities}} \\
    Gender importance share (\%)                    & $16.1$  & $16.6$  & $17.5$ \\
    Democrats preferring female (\%)                & $66.5$  & $67.5$  & $69.0$ \\
    Republicans preferring male (\%)                & $65.7$  & $65.0$  & $63.2$ \\
    Plug-in $\Pr(\text{ambitious woman beats man})$ & $0.515$ & $0.516$ & $0.528$ \\
    Held-out log-loss                               & $0.712$ & $0.722$ & $0.706$ \\
    \midrule
    \multicolumn{4}{l}{\emph{Panel B. Debiased inference under the expanded interaction specification}} \\
    Debiased mean gender effect $\hat\theta$        & $0.10$  & $-13.8$ & $9.5$ \\
    \quad standard error                            & $0.07$  & $18.7$  & $17.4$ \\
    Interactions with $|z| > 2$ (of $59$)           & ---     & $0$     & $0$ \\
    \bottomrule
  \end{tabular}
  \begin{minipage}[]{1\textwidth}\footnotesize
      \textit{Note:} All three columns are fresh refits under the identical production configuration; the additive column reproduces the canonical numbers within fit noise (gender importance $16.1\%$ against the $16.5\%$ of Section~\ref{sec:saha_weeks}, with two independent seeds of the additive fit differing by $0.03$ points on that share).  Panel~A: the plug-in quantities that carry the substantive findings are stable across specifications.  Panel~B: at $T = 3$ the $59$ interaction terms cannot be estimated with any precision, so the debiased gender effect has a standard error above $17$ and no coefficient separates from zero.
  \end{minipage}
\end{table}

Inference on the interactions themselves is another matter.  With only three tasks per respondent, there is too little within-respondent information to estimate the $59$ extra cross-attribute parameters.  The debiased average gender effect, precisely estimated under the additive index ($\hat\theta = 0.10$, standard error $0.07$), carries a standard error above $17$ once the interaction terms are added, and none of the $59$ interaction coefficients separates from zero ($0$ reach $|z| > 2$; the usual ``about $3$ by chance'' benchmark would require independent, unregularized, well-calibrated $z$-statistics and is only a rough reference here).  This is the breakdown anticipated by the expanded information matrix in Supplementary Materials~\ref{app:interactions}, and it leaves the debiased interaction intervals uninformative.  We therefore treat the additive index as a reasonable working specification for these data, with the narrower claim that the attempted interaction refits leave the plug-in substantive quantities stable; formal interaction inference requires richer designs ($T \geq 10$, larger $N$).

\FloatBarrier

\subsection{Sensitivity to the empirical-Bayes prior}\label{app:prior_sensitivity}

We assess how sensitive the distributional findings are to the empirical-Bayes prior constant $c_\eta$, which scales the per-respondent shrinkage in the stage-two MAP update (our default is $c_\eta = 5$, the \texttt{EnsC5} estimator).  We re-estimate the respondent-level preferences at $c_\eta \in \{1, 5, 25\}$ and, as a limiting case, report the stage-one deep-network preferences with no empirical-Bayes update.  The qualitative conclusions are robust: in the democracy application, large majorities oppose every undemocratic action (all between $94\%$ and $100\%$) and the gerrymandering positive-tail share stays near $6\%$ across all settings; in the candidate application, the partisan gender split (roughly two-thirds of Democrats preferring the woman and two-thirds of Republicans the man) and gender's importance share (about $16\%$) move by at most one to two percentage points.  Signs, orderings, and overall magnitudes are therefore not artifacts of the prior.  The one quantity that is materially prior-dependent is the compensating-differential fraction---the share willing to accept a co-partisan who would prosecute journalists---among ideological moderates, which ranges from $33\%$ to $17\%$ as the prior tightens from $c_\eta = 1$ to $25$.  This is expected: that fraction is a threshold-crossing count and the moderate group sits near the indifference boundary, so it is the most exposed to the degree of individual-level shrinkage.  We report results at the default $c_\eta = 5$, and note that the substantive claim---that co-partisanship buys more tolerance among conservatives than among moderates or liberals---holds at every prior we examined.

\FloatBarrier

\section{Extensions to Other Outcome Formats}\label{app:extensions}

Recall the decomposition $\bfbeta_i = f(\bfZ_i) + \boldsymbol{\eta}_i$ of~\eqref{eq:beta_decomp}, with preferences entering the linear utility index $u_{it} = \bfDelta\bfX_{it}^\top \bfbeta_i$.  For the DNN mean stage and respondent-level empirical-Bayes update, the preference decomposition is unchanged across outcome formats; the observation model supplies the likelihood.  In every case below, $f(\bfZ)$ is estimated by the same cross-fitted DNN mean stage of Supplementary Materials~\ref{app:estimation_details}, with only the training loss swapped to the negative log-likelihood of the new observation model (squared error for ratings, multinomial cross-entropy for choice and ranking).  This appendix sketches the extension to ratings, multinomial choice, and rankings, describing for each (i)~the likelihood and (ii)~the empirical-Bayes update for $\boldsymbol{\eta}_i$.  The same orthogonal-score logic extends after replacing the binary-logit score and information matrix with the corresponding score and Hessian for the outcome model; we do not develop those formulas here.

\subsection{Ratings}\label{app:ext_ratings}

When respondents rate each profile on a continuous scale (e.g., feeling thermometers, willingness-to-vote scores), write the profile-level observation model as
\begin{equation*}
  Y_{ijt} = \bfX_{ijt}^\top \bfbeta_i + \varepsilon_{ijt}, \qquad
  \varepsilon_{ijt} \sim N(0,\sigma_\varepsilon^2), \quad j=1,2,
\end{equation*}
with independent profile-level errors. In differenced form, $\Delta Y_{it}=\bfDelta\bfX_{it}^{\top}\bfbeta_i+\Delta\varepsilon_{it}$ and $\Var(\Delta\varepsilon_{it})=2\sigma_\varepsilon^2$. The respondent-level update therefore has the closed-form Gaussian solution
\begin{equation*}
  \hat{\boldsymbol{\eta}}_i
  =
  \left(
    \bfDelta\bfX_i^\top \bfDelta\bfX_i /(2\sigma_\varepsilon^2)
    +
    \boldsymbol{\Sigma}_\eta^{-1}
  \right)^{-1}
  \bfDelta\bfX_i^\top
  \left(\Delta\mathbf{Y}_i-\bfDelta\bfX_i\hat f(\bfZ_i)\right)
  /(2\sigma_\varepsilon^2),
\end{equation*}
where $\bfDelta\bfX_i$ stacks the respondent's task contrasts and $\boldsymbol{\Sigma}_\eta$ is the working prior covariance.  Ratings are typically more informative per task than binary choice because they reveal cardinal intensity rather than just the sign of the latent comparison.

\subsection{Multinomial choice}\label{app:ext_multinomial}

When each task offers $J \geq 3$ alternatives and the respondent picks one, the binary logit is replaced by a softmax:
\begin{equation*}
  P(Y_{it} = j \mid \bfX_{it}, \bfbeta_i) = \frac{\exp(\bfX_{itj}^\top \bfbeta_i)}{\sum_{j'=1}^{J} \exp(\bfX_{itj'}^\top \bfbeta_i)}.
\end{equation*}
The respondent-level update uses the multinomial score and Fisher information,
\begin{equation*}
  \hat{\boldsymbol{\eta}}_i = \arg\max_{\boldsymbol{\eta}} \sum_{t} \log P(Y_{it} \mid \bfX_{it}, \hat f(\bfZ_i) + \boldsymbol{\eta}) - \tfrac{1}{2} \boldsymbol{\eta}^\top \boldsymbol{\Sigma}_\eta^{-1} \boldsymbol{\eta},
\end{equation*}
solved by Newton iteration with the multinomial Hessian $-\sum_t \bfX_{it}^\top (\operatorname{diag}(\mathbf{p}_{it}) - \mathbf{p}_{it}\mathbf{p}_{it}^\top) \bfX_{it} - \boldsymbol{\Sigma}_\eta^{-1}$.  A task with $J$ alternatives can be more informative than a binary forced choice because the selected alternative is compared with several others at once, but the information gain is not a fixed multiple: it depends on the design matrix, the choice probabilities, and the utility region in which the task falls.

\subsection{Rankings}\label{app:ext_rankings}

When respondents rank $J$ alternatives from most to least preferred, the natural likelihood is Plackett--Luce (also known as rank-ordered or exploded logit):
\begin{equation*}
  P(\pi_{i} \mid \bfX_{i}, \bfbeta_i) = \prod_{r=1}^{J-1} \frac{\exp(\bfX_{i\pi(r)}^\top \bfbeta_i)}{\sum_{j' = r}^{J} \exp(\bfX_{i\pi(j')}^\top \bfbeta_i)},
\end{equation*}
where $\pi(r)$ is the alternative placed in position $r$.  Because each factor is a conditional multinomial choice over the still-unranked alternatives, the mean stage and empirical-Bayes update of Supplementary Materials~\ref{app:ext_multinomial} apply after summing the log-likelihood and derivatives over the $J-1$ rank stages. A ranking often contains more preference information than a single binary or multinomial choice because it reveals a full ordering, but its Fisher information again depends on the design matrix, choice probabilities, and utility scale rather than equaling a fixed number of multinomial choices.  Rankings are attractive when the full ordering is substantively meaningful, but cognitive burden grows quickly with $J$ and respondent fatigue can dominate beyond $J = 4$ or $5$.

\end{document}